\keywords{systems theory, measure theory, Bayesian inference, imprecise probability, category theory, Markov categories}
\tikzstyle{observation}=[text=blue]
\tikzstyle{morphism}=[fill=white, draw=black, shape=rectangle]
\tikzstyle{medium box}=[fill=white, draw=black, shape=rectangle, minimum width=0.8cm, minimum height=0.9cm]
\tikzstyle{large morphism}=[fill=white, draw=black, shape=rectangle, minimum width=1.7cm, minimum height=1cm]
\tikzstyle{bn}=[fill=black, draw=black, shape=circle, inner sep=1.5pt]
\tikzstyle{state}=[fill=white, draw=black, regular polygon, regular polygon sides=3, minimum width=0.8cm, shape border rotate=90, inner sep=0pt]
\tikzstyle{nn}=[fill=gray, draw=gray, regular polygon, regular polygon sides=3, minimum width=0.4cm, shape border rotate=90, inner sep=0pt]
\tikzstyle{effect}=[fill=white, draw=black, regular polygon, regular polygon sides=3, minimum width=0.8cm, shape border rotate=270, inner sep=0pt]
\tikzstyle{medium state}=[fill=white, draw=black, regular polygon, regular polygon sides=3, minimum width=1.3cm, inner sep=0pt, shape border rotate=90]
\tikzstyle{medium effect}=[fill=white, draw=black, regular polygon, regular polygon sides=3, minimum width=1.3cm, inner sep=0pt, shape border rotate=270]
\tikzstyle{large state}=[fill=white, draw=black, regular polygon, regular polygon sides=3, minimum width=2.2cm, shape border rotate=180, inner sep=0pt]
\tikzstyle{wn}=[fill=white, draw=black, shape=circle, inner sep=1.5pt]
\tikzstyle{treenode}=[fill=white, draw=none, shape=circle]
\tikzstyle{arrow}=[->]
\tikzstyle{dashed box}=[-, dashed]
\tikzstyle{condition}=[draw=blue, dashed]
\tikzstyle{none}=[]
\tikzset{baseline=(current  bounding  box.center)}
\renewcommand{\tikzfig}[1]{
	\tikzset{baseline=(current  bounding  box.center)}
	\InputIfFileExists{#1.tikz}{}{\input{./graphics/#1.tikz}}
}
\definecolor{deepblue}{rgb}{0,0,0.5}
\definecolor{deepred}{rgb}{0.6,0,0}
\definecolor{deepgreen}{rgb}{0,0.5,0}
\definecolor{darkgray}{rgb}{0.5,0.5,0.5}
\DeclareFixedFont{\ttb}{T1}{txtt}{bx}{n}{9} 
\DeclareFixedFont{\ttm}{T1}{txtt}{m}{n}{9}  
\lstdefinelanguage{Julia}%
{morekeywords={abstract,break,case,catch,const,continue,do,else,elseif,%
		end,export,false,for,function,immutable,import,importall,if,in,%
		macro,module,otherwise,quote,return,switch,true,try,type,typealias,%
		using,while},%
	sensitive=true,%
	alsoother={$},%
	morecomment=[l]\#,%
	morecomment=[n]{\#=}{=\#},%
	morestring=[s]{"}{"},%
	morestring=[m]{'}{'},%
}[keywords,comments,strings]%
\bfseries\color{blue},
\newcommand*{\mlstinline}[1]{\text{\lstinline|#1|}}
\newcommand*{\code}[1]{\lstinline|#1|}
\newcommand{\cat}{\mathbf}
\newcommand{\catname}{\mathsf}
\newcommand{\C}{\cat C} 
\newcommand{\E}{\mathcal E} 
\newcommand{\F}{\mathcal F} 
\newcommand{\Z}{\mathbb Z} 
\newcommand{\I}{\mathrm{I}}
\newcommand{\id}{\mathsf{id}}
\newcommand{\fib}[1]{\mathbb{#1}}
\newcommand{\R}{\mathbb R}
\newcommand{\N}{\mathcal N}
\renewcommand{\S}{\fib S}
\newcommand{\cpy}{\mathsf{copy}}
\newcommand{\del}{\mathsf{del}}
\newcommand{\mult}{\mathsf{mult}}
\newcommand{\unit}{\mathsf{unit}}
\newcommand{\iv}[1]{\{\!|{#1}|\!\}}
\newcommand{\s}{\,|\,}
\newcommand{\set}{\catname{Set}}
\newcommand{\gauss}{\catname{Gauss}}
\newcommand{\gaussex}{\catname{GaussEx}}
\newcommand{\stoch}{\catname{Stoch}}
\newcommand{\rel}{\catname{Rel}}
\newcommand{\trel}{\catname{TRel}}
\newcommand{\tlinrel}{\catname{TLinRel}}
\newcommand{\vect}{\catname{Vec}}
\newcommand{\meas}{\catname{Meas}}
\newcommand{\chanto}{\leadsto}
\newcommand{\lift}[1]{\langle #1 \rangle}
\renewcommand{\det}{\mathsf{det}}
\renewcommand{\fib}[1]{\mathbb{#1}}
\newcommand{\D}{\fib{D}}
\newcommand{\parto}{\rightharpoonup}
\newcommand{\coparto}{\rightharpoondown}
\newcommand{\Par}{\catname{Par}}
\newcommand{\Copar}{\catname{Copar}}
\newcommand{\im}{\mathsf{im}}
\newcommand{\wquot}{/\!\!/\,}
\newcommand{\hide}[1]{{}}
\newcommand{\B}{\mathcal B}
\begin{document}

\title[A Categorical Treatment of Open Linear Systems]{A Categorical Treatment of Open Linear Systems}


\thanks{It has been helpful to discuss this work with many people. In particular, we would like to thank Paolo Perrone for sharing his insights into point-free measure theory.}

\author[D.~Stein]{Dario Stein\lmcsorcid{0009-0002-1445-4508}}[a]
\author[R.~Samuelson]{Richard Samuelson\lmcsorcid{0009-0006-2330-4404}}[b]

\address{Radboud University Nijmegen, The Netherlands}
\email{dario.stein@ru.nl}
\address{University of Florida, Gainesville, United States}
\email{rsamuelson@ufl.edu}

\begin{abstract}
An \emph{open stochastic system} \`a la Jan Willems is a system affected by two qualitatively different kinds of uncertainty -- one is probabilistic fluctuation, and the other one is nondeterminism caused by a fundamental lack of information. We present a formalization of open stochastic systems in the language of category theory. Central to this is the notion of \emph{copartiality} which models how the lack of information propagates through a system (corresponding to the coarseness of $\sigma$-algebras in Willems' work).

As a concrete example, we study extended Gaussian distributions, which combine Gaussian probability with nondeterminism and correspond precisely to Willems' notion of \emph{Gaussian linear systems}. We describe them both as measure-theoretic and abstract categorical entities, which enables us to rigorously describe a variety of phenomena like noisy physical laws and uninformative priors in Bayesian statistics. The category of extended Gaussian maps can be seen as a mutual generalization of Gaussian probability and linear relations, which connects the literature on categorical probability with ideas from control theory like signal-flow diagrams.
\end{abstract}

\maketitle

\section{Introduction}

When modelling the behavior of systems in computer science and engineering, we encounter different kinds of uncertainty

\begin{enumerate}
\item \textbf{Probabilistic uncertainty}, where we don't know the exact value of some quantity of interest, but we know its statistical distribution. Our knowledge is specified by a \emph{probability distribution} over the set $X$ of possible values of the quantity. A typical example of such a distribution is the \emph{normal} or \emph{Gaussian} distribution $\N(\mu,\sigma^2)$ over the real line, with mean $\mu$ and variance $\sigma^2$.
\item \textbf{Nondeterministic uncertainty}, where the true quantity can be \emph{any} feasible value, but we have no statistical information beyond that. Such uncertainty can be modelled using the \emph{subset} $R \subseteq X$ of the feasible values. Such subsets often arise from equational constraints such as natural laws.
\item \textbf{A lack of information} because the quantity of interest can only be measured incompletely. For example, given two variables $x,y$, we can only measure their difference $x-y$. Mathematically, such lack of information can be formalized using \emph{equivalence relations} $\Phi \subseteq X \times X$, where instead of a quantity $x \in X$, we only have access to its equivalence class $[x]_\Phi \in X/\Phi$. A related notion, which we discuss prominently in this article, is a measurable space $(X,\E)$ with a very coarse $\sigma$-algebra that gives only `fuzzy' access to the elements of $X$.
\end{enumerate}

\noindent Category theory has emerged as a unifying language for studying systems and their composition. Markov categories \cite{fritz} in particular correspond to theories of uncertainty and information flow, and have been used in a variety of applications \cite{fritz:definetti,fritz:zero_one,representable_markov}. In this article, we will give a mathematical analysis of these three kinds of uncertainty in terms of Markov categories. While the semantics of probabilistic and nondeterministic uncertainty are well-known, our model for lack of information in terms of equivalence relations is novel. We propose a notion of \emph{copartial map} that is formally dual to partial maps. In general, copartiality and nondeterminism have different rules of composition (Section~\ref{sec:copartiality}), but we will see that these notions coincide in the case of linear relations (Proposition~\ref{prop:trel_copar}). \\

\noindent \emph{Our main objective} then becomes to model systems under multiple kinds of uncertainty. Examples can be physical systems that obey both relational constraints and feature probabilistic noise, or improper priors in statistical inference.

There is a large body of literature on trying to combine the mathematical theories of probability and nondeterminism into a single well-behaved theory (e.g. \cite{bonchi2020presenting,dahlqvist2018layer,cheung2017distributive,weakdist,varacca2006distributing}). Such approaches are often structured around presenting the relevant theories as monads, and then combining the monads in a suitable way, such as using a distributive law. However it has been shown that no such distributive law can exist in the case of distribution monad (probability) and powerset monad (nondeterminism) \cite{varacca2006distributing,nogo}. A weak distributive law combining these monads does exist \cite{weakdist} but results in a composite monad of \emph{convex subsets of distributions} that no longer has some desirable properties (commutativity). This area is sometimes known as \emph{imprecise probability} and remains an active area of research (e.g. \cite{liell2025compositional}). 

In this work, we take a departure from the monadic approach and focus on combining probability and lack-of-information uncertainty. A framework encompassing both of these forms of uncertainty has been formulated in Willems' \emph{open stochastic systems} \cite{willems:oss}. We will recall what these are, and then work towards a categorical treatment of the matter. 

\paragraph{Open Stochastic Systems}

Our running example will be taken from \cite[Example~1]{willems:oss}.

\begin{exa}[Noisy resistor]\label{ex:resistor}
For an ideal resistor of resistance $R$, Ohm's law states that voltage and current must satisfy the relationship $V=RI$. This is a relational constraint; pairs $(V,I)$ must lie in the subspace $\D = \{ (V,I) : V = RI \}$, but we have no further statistical information about which values the system takes in the allowed subspace. In a realistic system, thermal noise will be present; such a noisy resistor is better modelled by the equation
	\begin{equation}
		V = RI + \epsilon
	\end{equation}
	where $\epsilon \sim \N(0,\sigma^2)$ is a Gaussian random variable with some small variance $\sigma^2$.
\end{exa}

\begin{figure}[h]
	\includegraphics[scale=0.2]{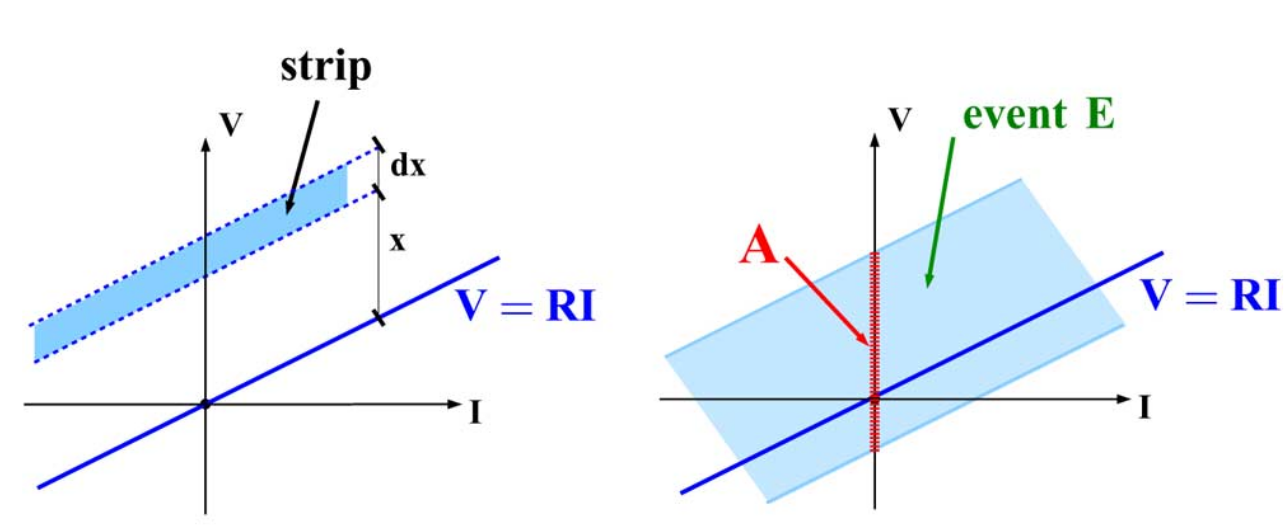}
	\caption{Events for the noisy resistor (Figure 2 of \cite{willems:oss})}
	\label{fig:cylinder}
\end{figure}

\noindent Willems argues that the variables $V$ and $I$ are not random variables in the classical sense; there is no probability distribution associated to them. On the other hand, the quantity $V-RI$ is a honest random variable with a Gaussian distribution. Willems models this situation by endowing the state space $\R^2$ of voltage-current pairs with an unusally coarse $\sigma$-algebra: The only events which are assigned a probability are of the form
$E_A = \{ (V,I) \in \R^2 \s V-RI \in A \}$
where $A \subseteq \R$ is a Borel set. Each set $E_A$ is a Borel cylinder, with sides parallel to the constraint subspace $\D$ from above (see Figure~\ref{fig:cylinder}). The probability of such a cylinder is assigned using the Gaussian distribution of $\epsilon$ as $P(E_A) = \mathsf{Pr}(\epsilon \in A)$.

\noindent Thus, in Willems' approach, lack of information is modelled though coarse $\sigma$-algebras. He speaks of a \emph{closed system} whenever the $\sigma$-algebra contains all Borel sets, giving full access to the underlying quantities. In an \emph{open system}, the $\sigma$-algebra is smaller, indicating a lack of available information. \\

Equational constraints from natural laws are not the only application of open systems. A different use-case arises with uninformative priors in statistical inference:

\begin{exa}[Uninformative Priors]
A typical situation in statistical inference is that we are looking to predict a quantity $X$ while only having access to another quantity $Y$ which is statistically associated with $X$ (for example a noisy measurement, see Figure~\ref{fig:noisymmt}). According to \emph{Bayes' law}, our prediction of $X$ should take into account our prior knowledge about $X$
\[ \underbrace{p(y|x)}_{\text{posterior}} \propto \underbrace{p(x)}_{\text{prior}} \cdot \underbrace{p(y|x)}_{\text{likelihood}} \]
If $(X,Y)$ are jointly Gaussian, then the posterior will be Gaussian, too. This is known as \emph{self-conjugacy} of Gaussians. 

It is unclear under this setup how we should model situations where we completely lack information, either because we have no prior over $X$, or no data about $Y$. We can try approximating the situation by putting larger and larger variances on the variables in question, and see how it affects the predictions in the limit.

However, mathematically, the limit of distributions $\N(0,\sigma^2)$ for $\sigma^2 \to \infty$ approaches the zero measure, which is no longer a probability distribution: There exists no uniform probability distribution over the real line! In practice, one can sometimes pretend (using the method of \emph{improper priors}, e.g. \cite{gelman,hedegaard:gaussian_random_fields}) that $X$ is sampled from the Lebesgue measure (i.e. having constant density $1$). This measure fails to be normalized, however the resulting density calculations may yield the correct posterior. Our theory of extended Gaussians avoids unnormalized measures altogether, as it can express the lack of information as a first-class concept. 
\end{exa}

\begin{figure}
	\includegraphics[width=0.4\textwidth]{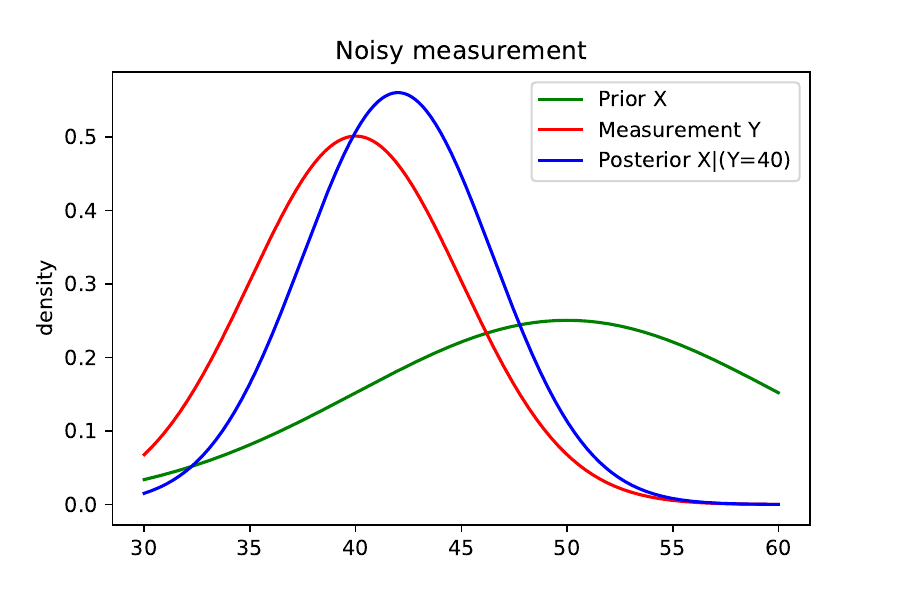}
	\caption{Gaussian prior and posterior in a noisy measurement example}
	\label{fig:noisymmt}
\end{figure}

\paragraph{Compositionality}

Compositionality is the study of combining systems from simpler parts. Willems himself pioneered this discipline with his influential \emph{Behavioral Approach} to control theory \cite{willems:behavioral}. He proposes that complex systems should be understood by ``tearing'' them into simpler parts, ``zooming'' into the parts and modelling their behavior, and then by ``linking'' their behavior together to obtain the composite behavior. \emph{Monoidal category theory} is a widely used language for compositional processes that has been applied to numerous settings in mathematics, computer science, statistics, physics and engineering (e.g. \cite{fong,fritz,baez:control,baez:props,bonchi:cat_signal_flow}). For a general overview, see for example \cite{fong:sevensketches}. 

A \emph{category} features objects $X,Y,\ldots$, thought of as state spaces, and ``morphisms'' $X \to Y$ between these objects, representing abstract processes. Note that while morphisms are oriented to have input $X$ and output $Y$, this does not mean that information flows unidirectionally from $X$ to $Y$. A \emph{state} is a morphism $\I \to X$ without input (formally, its input is a special unit object $\I$). Morphisms and can be composed in sequence and in parallel. We can draw such composites using the diagrammatic language of string diagrams \cite{selinger:graphical}.

\[ \tikzfig{intro_stringdiagrams} \]

\noindent Categories of processes often have richer structure which lets morphisms be interconnected in more complex ways. Examples of such structures are copy-delete categories and hypergraph categories, which allow information to be shared, discarded or compared.  

\[ \tikzfig{intro_hypergraph} \]

\noindent Willems' compositional approach and the categorical language fit well together. For example, aspects of Willems' work on relational constraints has been categorified in \cite{fong:open_systems}. We wish to extend this approach to systems with relational and stochastic constraints, like the noisy resistor. However, we run into two immediate difficulties:

\begin{enumerate}
\item The first concerns the use of the word ``open''. To Willems, this means a lack of information, modelled by a coarse $\sigma$-algebra. In category theory, ``open'' generally refers to morphisms $X \to Y$ with inputs, as opposed to mere states $\I \to Y$. We will investigate how these notions are related, and can be unified in a single framework.
\item Secondly, in measure-theoretic probability, the objects are usually measurable spaces $(X,\E)$. That is, the $\sigma$-algebra $\E$ is treated as fixed, because it is part of the objects. In open stochastic systems on the other hand, the $\sigma$-algebra varies, as different computations will result in different $\sigma$-algebras. We can thus no longer treat $\sigma$-algebras as part of the objects, but must make them part of the morphisms. This leads us to formalize the phenomenon of \emph{copartiality}. 
\end{enumerate}

\noindent To elaborate the second point, let us revisit the noisy resistor. We would like model its behavior as a joint state $P_{VI} : \I \to \R^2$. Composing with the projections $\R^2 \to \R$ should tell us the marginal distributions of $V,I$, as shown below
\[\begin{tikzcd}
	&&&& \R \\
	\I && {\R^2} \\
	&&&& \R
	\arrow["{P_{VI}}", from=2-1, to=2-3]
	\arrow["{P_V}", curve={height=-12pt}, from=2-1, to=1-5]
	\arrow["{\pi_2}"', from=2-3, to=3-5]
	\arrow["{\pi_1}", from=2-3, to=1-5]
	\arrow["{P_I}"', curve={height=12pt}, from=2-1, to=3-5]
\end{tikzcd}\]

\noindent However, we won't know the $\sigma$-algebra on the codomain $\R$ until we compute the composite. In the case of $I,V$, the $\sigma$-algebra turns out to be indiscrete, i.e. $\{\emptyset,\R\}$, which means that we have no statistical information about them. If instead we consider the marginal of the variable $V-RI$, we obtain a Gaussian distribution on the ordinary Borel $\sigma$-algebra. \\

\noindent We address these challenges as follows: We first introduce a notion of \textbf{extended Gaussian distribution}, which precisely corresponds to Willems' Gaussian linear systems. We then define a Markov category of \textbf{extended Gaussian maps} which formalizes how these distributions compose. 

Roughly speaking, an extended Gaussian distribution is a pair $(\psi,\D)$ of a subspace $\D \subseteq \R^n$, and a Gaussian distribution on the quotient space $\R^n/\D$. We may write this in additive notation $\psi + \D$, the same way we write elements of quotients as cosets, e.g. $(1 + 2\Z) \in \Z/2\Z$. The subspace $\D$ is called the fibre of the distribution, and corresponds to the nondeterministic contribution. If $\D = 0$, we are talking about ordinary Gaussian distributions. 

For example, the noisy resistor can be described by the extended Gaussian distribution
\[ P_{VI} = \N\left(
\begin{pmatrix} \sigma^2 \\ 0 \end{pmatrix},
\begin{pmatrix} 0 & 0 \\ 0 & 0 \end{pmatrix}\right) 
+ \left\{ \begin{pmatrix} V \\ I \end{pmatrix} \s V = RI \right\}
\]

\noindent We can straightforwardly compute $P_V = P_I = \R$; this means that they can take arbitrary values on the real line but no statistical distribution is known. On the other hand, computing the distribution of $V-RI$ returns, as expected, the Gaussian distribution $\N(0,\sigma^2)$. 

\subsection{Contribution}

We answer the following questions: How can Willems' framework of open stochastic systems be captured with category theory, and how does his notion of interconnection be reconciled with categorical composition? \\

We demonstrate this by defining a category $\gaussex$ of extended Gaussian maps, which faithfully combines Gaussian probability with nondeterminism. The states of $\gaussex$ correspond precisely to \emph{Gaussian linear systems} of Willems. We describe these entities both in measure-theoretic terms, as well as through abstract entities obeying certain composition laws (without reference to measure theory).

We introduce \emph{copartiality} as a way to formalize lack of information via maps into quotients. For linear maps, copartiality and relational nondeterminism coincide, which allows us to see  $\gaussex$ as a mutual generalization of Gaussian probability and linear relations. Then, we return to Willems' measure-theoretic approach and prove that categorical composition and his notion of interconnection coincide (Theorem~\ref{thm:interconnection_formula}). \\

\noindent \emph{Note:} This article is based on our CALCO paper \cite{stein2023category}, where we first introduce extended Gaussian distributions. In this invited journal submission, we expand on this previous work with greater detail and place it in a larger context. In particular, we have rewritten the development to follow Willems' measure-theoretic definitions alongside our categorical ones. This allows us to prove a novel theorem that these definitions agree in Theorem~\ref{thm:interconnection_formula}. We also give a standalone introduction to copartiality in Section~\ref{sec:copartiality}, and elaborate the treatment of quotients and equivalence of systems in Section~\ref{sec:wquot}. Since the publication of \cite{stein2023category}, two follow-up publications \cite{stein2023towards,stein2024graphical} have established connections between extended Gaussian distributions and convex optimization. We elaborate these connection in Section~\ref{sec:connections} but keep this work self-contained until that point.

\subsection{Outline}

In Section~\ref{sec:oss}, we recall the basics of measure theory and Gaussian probability, before defining Gaussian linear systems/extended Gaussian distributions in Section~\ref{sec:gaussex_distributions}.

In Section~\ref{sec:ct}, we discuss the relevant notions from monoidal category theory, in particular Markov categories, and give example categories which model relational nondeterminism (\ref{sec:relations}), measure-theoretic probability (\ref{sec:channels}), and Gaussian probability (\ref{sec:gauss}).

Section~\ref{sec:gaussex} features our main novel definitions, namely the Markov category $\gaussex$. We begin by introducing copartiality (\ref{sec:copartiality}) and the relationship with linear relations (\ref{sec:linrel}), before defining $\gaussex$ using a decorated cospan construction (\ref{sec:gaussex_def}).

In Section~\ref{sec:interconnection}, we connect our categorical notion of composition back to Willems' notion of interconnection of complementary systems. Our main theorem~\ref{thm:interconnection_formula} shows that these agree. The section ends with \ref{sec:conditioning} explaining how interconnection can be generalized beyond complementary systems. 

In Section~\ref{sec:connections}, we give some wider context on extended Gaussian distributions and the details of our construction. \ref{sec:wquot} concerns the question of equivalence of systems and the relationship between coarse $\sigma$-algebras and quotients, while \ref{sec:duality} and \ref{sec:presentations} discuss recent work on the duality theory of extended Gaussian distributions and a complete axiomatization for them. We finish with a proof of concept implementation of this work as a Julia library Section~\ref{sec:implementation}. 

\section{Open Stochastic Systems}\label{sec:oss}

In this section, we introduce Gaussian linear systems in the sense of Willems, and our abstract description of them, which we term an \emph{extended Gausssian distribution}. We begin by briefly recalling the measure-theoretic foundations of probability. For a standard reference, see \cite{kallenberg}.

\emph{Notation:} We write $f[A]$ for the image of a subset $A \subseteq X$ under a function $f : X \to Y$. We use the Iverson brackets for indicator functions, i.e. the expression $[\phi]$ evaluates to $1$ if $\phi$ is true, and $0$ otherwise.

A \emph{$\sigma$-algebra} on a set $X$ is a collection $\E$ of subsets of $X$ which contains $\emptyset$ and is closed under countable unions and complementation. For topological spaces like $\R^n$, we will consider the Borel $\sigma$-algebra $\B(\R^n)$, which is the smallest $\sigma$-algebra containing all open subsets. 

A \emph{measurable space} is a pair $(X,\E)$ of a set and a $\sigma$-algebra on it. A function $f : (X,\E) \to (Y,\F)$ between measurable spaces is called \emph{measurable} if $f^{-1}(F) \in \E$ for all $F \in \F$. We will refer to spaces like $(\R^n, \B(\R^n))$ equipped with their Borel $\sigma$-algebra as \emph{Borel spaces}. The product of two measurable spaces $(X,\E)$ and $(Y,\F)$ is the set $X \times Y$ equipped with the product $\sigma$-algebra $\E \otimes \F$, that is the $\sigma$-algebra generated by the measurable rectangles $E \times F$ for $E \in \E, F \in \F$. The product of two Borel spaces is again Borel. Measurable spaces and measurable functions form a category $\meas$.

A \emph{probability measure} or \emph{distribution} on $(X,\E)$ is a function $P : \E \to [0,1]$ which satisfies $P(X)=1$ and
\[ P\left(\bigcup_{i=1}^\infty E_i\right) = \sum_{i=1}^\infty P(E_i)\]
for all disjoint collections $(E_i)_{i=1}^\infty$ of sets in $\E$. An important example is the Dirac distribution $\delta_x$ at point $x \in X$, which is defined by $\delta_x(E) = [x \in E]$.

For a probability measure $P$ and a measurable function $f : (X,\E) \to (\R, \B(\R))$, the integral is denoted by $\int f(x) P(\mathrm dx)$. If $f : (X,\E) \to (Y,\F)$ is a measurable function and $P$ is a probability measure on $(X,\E)$, we can define a \emph{pushforward measure} $f_*P$ on $(Y,\F)$ by $f_*P(F) = P(f^{-1}(F))$. For two distributions $P_1,P_2$ on $(X_1,\E_1), (X_2,\E_2)$ respectively, their product distribution $P_1 \otimes P_2$ on $(P_1 \times P_2, \E_1 \otimes \E_2)$ is defined on measurable rectangles by $(P_1 \otimes P_2)(E_1 \times E_2) = P_1(E_1) P_2(E_2)$ and uniquely extended to a probability measure to the whole product $\sigma$-algebra. 

A \emph{probability space} is a tuple $(X,\E,P)$ of a measurable space equipped with a probability measure. In this setting, the measurable sets $E \in \E$ are also called \emph{events}. In Willems' terminology, a probability space is called a \emph{stochastic system}, and we will use both names interchangeably.

\subsection{Gaussian distributions} The class of probability distributions we will be concerned with in this article are the \emph{Gaussian distributions}, also known as multivariate normal distributions. A univariate Gaussian distribution on $\R$ has the form 
\[ P(A) = \int_A \frac{1}{\sqrt{2\pi\sigma^2}} \exp\left(-\frac{(x-\mu)^2}{2\sigma^2}\right) \mathrm dx \] 
for numbers $\mu,\sigma^2$, which are the mean and variance of $P$. We denote this probability measure as $\N(\mu,\sigma^2)$. Note that the Dirac distribution is considered a special case of a Gaussian distribution with variance $0$, i.e. $\delta_\mu = \N(\mu,0)$.

Similarly, a multivariate Gaussian distribution on $\R^n$ is fully characterized by its mean $\mu \in \R^n$ and its covariance matrix, which is a positive-semidefinite matrix $\Sigma \in \R^{n \times n}$. We write this distribution as $\N(\mu,\Sigma)$, and define $\gauss(\R^n)$ to be the set of all Gaussian distributions on $\R^n$.

\begin{propC}[{\cite[Section~6]{fritz}}]\label{prop:gauss_laws}
Product distributions, convolutions and pushforwards of Gaussian distributions under linear maps (matrices) $M$ are Gaussian, and satisfy	the laws
\begin{align*}
M_*(\N(\mu,\Sigma)) &= \N(M\mu,M\Sigma M^T) \\
\N(\mu_1,\Sigma_1)\otimes \N(\mu_2,\Sigma_2) &= \N\left(
\begin{pmatrix} \mu_1 \\ \mu_2 \end{pmatrix},
\begin{pmatrix} \Sigma_1 & 0 \\ 0 & \Sigma_2 \end{pmatrix}
\right) \\
\N(\mu_1,\Sigma_1) + \N(\mu_2,\Sigma_2) &= \N(\mu_1 + \mu_2, \Sigma_1 + \Sigma_2)
\end{align*}
\end{propC}

This means that we can often manipulate Gaussian distributions without resorting to measure-theoretic computation. Instead, we can treat them as abstract, formal entities that obey certain transformation rules. We will leverage this in our categorical presentation, and develop a similar, abstract presentation of extended Gaussian distributions in Section~\ref{sec:gaussex_def}. 

\subsection{Open Linear Systems}\label{sec:gaussex_distributions}

We now come to the central notion of Willems. Let $\fib D \subseteq \R^n$ be a vector subspace. A \emph{Borel cylinder} parallel to $\fib D$ is a Borel set $A \subseteq \R^n$ which is closed under translations along $\fib D$, i.e. $A + \D \subseteq A$. The Borel cylinders define a sub-$\sigma$-algebra
\[ \B_{\fib D}(\R^n)= \{ A \in \B(\R^n) \s A + \D \subseteq A \}\]
of the Borel sets. 

\begin{defi}
We write $\R^n \wquot \D$ for the measurable space $(\R^n, \B_{\D}(\R^n))$. We call a stochastic system \emph{linear} if its underlying measurable space is $\R^n \wquot \D$, and call $\D$ the fibre of the system. We call a linear system \emph{closed} if its fibre vanishes; in this case, its $\sigma$-algebra is all of $\B(\R^n)$. Otherwise, the $\sigma$-algebra is strictly smaller, and we call the system \emph{open}.
\end{defi}

The space $\R^n \wquot \D$ has similarities to the quotient space $\R^n/\D$. We spell out their precise relationship in Section~\ref{sec:wquot}. In an open linear system, only events parallel to $\D$ are assigned a probability. Openness thus corresponds to a lack of information about the system. 

\begin{defi}
An \emph{extended Gaussian distribution} $\chi$ on $\R^n$ is the restriction of a Gaussian probability measure on $\R^n$ to the Borel cylinders $\B_\D(\R^n)$. We call $\D$ the fibre of the distribution.
A \emph{Gaussian system} is a space $\R^n$ equipped with an extended Gaussian distribution. We write $\gaussex(\R^n)$ for the set of extended Gaussian distributions on $\R^n$.
\end{defi}

It is important that the $\sigma$-algebra (or equivalently) the fibre is part of the specification of an extended Gaussian distributions. Different distributions $\chi_1,\chi_2$ on $\R^n$ will in general be defined on different $\sigma$-algebras; the underlying measurable space of a Gaussian system with fibre $\D$ is $\R^n \wquot \D$.

\begin{exa}
Every Gaussian distribution on $\R^n$ is also a an extended Gaussian distribution, with fibre $\D=0$.
\end{exa}

\begin{exa}
Every subspace $\D \subseteq \R^n$ can be seen as an extended Gaussian distribution with fibre $\D$, namely the restriction of $\delta_0$ to $\B_\D(\R^n)$. Explicitly, we have
\[ \D : \B_\D(\R^n) \to [0,1], \quad \D(A) = [0 \in A] \]
We think of $\D$ as an idealized uniform distribution. It indicates an arbitrary point in the subspace $\D$, about which we have no further information. 
\end{exa}

We will see at the end of this section that the general extended Gaussian distribution is a sum (convolution) of an ordinary Gaussian distribution and a subspace. 

\paragraph{Kernel representation}

Our chief example of extended Gaussian distribution arises from pulling back Gaussian distributions under a surjective map. 

\begin{defi}
Let $p : (X,\E) \to (Y,\F)$ be a surjective measurable function, and $\psi : \F \to [0,1]$ a distribution on $Y$. Then we can form the pullback $\sigma$-algebra
\[ p^{-1}[\F] = \{ p^{-1}(F) : F\in \F \}\]
and define a measure $p^{-1}(\psi) : p^{-1}[\F] \to [0,1]$ as
\[ p^{-1}(\psi)(p^{-1}(F)) = \psi(F) \]
for all $F \in \F$. 
Measurability ensures that $p^{-1}[\F] \subseteq \E$ is a sub-$\sigma$-algebra. 
\end{defi}

\noindent We will revisit the details of the pullback construction in Section~\ref{sec:wquot}.

\begin{prop}\label{prop:kernel_rep}
For every surjective linear map $p : \R^n \to \R^k$ and Gaussian distribution $\psi \in \gauss(\R^k)$, we have that the pullback distribution $p^{-1}(\psi)$ is an extended Gaussian distribution on $\R^n$ with fibre $\D = \ker(p)$. Every extended Gaussian distribution $\chi$ arises this way, so we call $p^{-1}(\psi)$ a \emph{kernel representation} of $\chi$.
\end{prop}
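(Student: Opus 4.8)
The plan is to separate the statement into two ingredients and then combine them. The first ingredient is purely about $\sigma$-algebras: for a surjective linear $p\colon\R^n\to\R^k$ I claim that the pullback $\sigma$-algebra agrees with the Borel cylinders parallel to its kernel, i.e.
\[ p^{-1}[\B(\R^k)] \;=\; \B_{\ker p}(\R^n). \]
The inclusion $\subseteq$ is immediate: $p$ is continuous, so $p^{-1}(F)$ is Borel for Borel $F$, and $p(x+d)=p(x)$ for $d\in\ker p$ shows $p^{-1}(F)+\ker p\subseteq p^{-1}(F)$. For $\supseteq$, take Borel $A$ with $A+\ker p\subseteq A$; since $0\in\ker p$ this forces $A+\ker p=A$. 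Then $A=p^{-1}(p[A])$, because $p(x)=p(a)$ with $a\in A$ gives $x-a\in\ker p$, hence $x=a+(x-a)\in A+\ker p=A$. It remains to see that $p[A]$ is Borel: choosing a linear complement $\R^n=\ker p\oplus W$, invariance of $A$ gives $A=(A\cap W)+\ker p$, so $p[A]=p[A\cap W]=(p|_W)[A\cap W]$, and $p|_W\colon W\to\R^k$ is a linear isomorphism, hence a homeomorphism, carrying the Borel set $A\cap W$ to a Borel set.

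The second ingredient is that $p^{-1}(\psi)$ is literally the restriction of a genuine Gaussian on $\R^n$. Here I would pick a linear section $s\colon\R^k\to\R^n$ of $p$ (which exists since $p$ is surjective) and set $\mu:=s_*\psi$. By the pushforward law in Proposition~\ref{prop:gauss_laws}, $\mu$ is Gaussian on $\R^n$, and $p_*\mu=(p\circ s)_*\psi=\psi$. For any Borel $F\subseteq\R^k$ and $A=p^{-1}(F)$ we then get $\mu(A)=p_*\mu(F)=\psi(F)=p^{-1}(\psi)(A)$, so $p^{-1}(\psi)=\mu|_{\B_{\ker p}(\R^n)}$. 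Combined with the first ingredient, $p^{-1}(\psi)$ is the restriction of a Gaussian measure on $\R^n$ to $\B_{\ker p}(\R^n)$, i.e.\ an extended Gaussian distribution with fibre $\D=\ker p$.

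For the converse (``every extended Gaussian arises this way''), I would run the same computation in reverse. By definition $\chi=\mu|_{\B_\D(\R^n)}$ for some subspace $\D\subseteq\R^n$ and some Gaussian $\mu$ on $\R^n$. Choose a surjective linear $p\colon\R^n\to\R^k$ with $\ker p=\D$ (e.g.\ project onto a complement of $\D$ and compose with a coordinate isomorphism, so $k=n-\dim\D$), and put $\psi:=p_*\mu$, which is Gaussian by Proposition~\ref{prop:gauss_laws}. Using the first ingredient, every $A\in\B_\D(\R^n)$ has the form $A=p^{-1}(F)$ with $F$ Borel, and then $p^{-1}(\psi)(A)=\psi(F)=p_*\mu(F)=\mu(A)=\chi(A)$; hence $p^{-1}(\psi)=\chi$, so $p^{-1}(\psi)$ is a kernel representation of $\chi$.

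I expect the only genuinely nontrivial step to be the $\supseteq$ inclusion in the $\sigma$-algebra identity — that a $\ker p$-invariant Borel set descends to a Borel subset of $\R^k$. In infinite dimensions surjective Borel images can fail to be Borel, but here the splitting $\R^n=\ker p\oplus W$ and the fact that $p|_W$ is a linear homeomorphism eliminate any such pathology, so the argument is elementary. Everything else is bookkeeping with the definitions of pushforward/pullback measure and the transformation laws of Proposition~\ref{prop:gauss_laws}; one should just note in passing that surjectivity of $p$ makes $p^{-1}(\psi)$ well defined, since $p^{-1}(F_1)=p^{-1}(F_2)$ implies $F_1=p[p^{-1}(F_1)]=p[p^{-1}(F_2)]=F_2$.
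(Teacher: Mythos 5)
Your proposal is correct and follows essentially the same route as the paper's proof: choose a linear complement of $\ker p$ (equivalently a linear section of $p$), identify the pullback $\sigma$-algebra with the Borel cylinders parallel to $\ker p$, check that $p^{-1}(\psi)$ agrees with the restriction of the pushed-forward Gaussian on those cylinders, and reverse the construction for the converse. The only difference is that you spell out the $\supseteq$ inclusion of the $\sigma$-algebra identity (that a $\ker p$-invariant Borel set has Borel image under $p$, via the splitting $\R^n=\ker p\oplus W$), a detail the paper leaves implicit.
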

\begin{proof}
Choose a complementary subspace $\fib K$ with $\fib K \oplus \D = \R^n$. Then there exists an injective linear map $\iota : \R^k \to \R^n$ whose image equals $\fib K$; this defines a section $p\iota = \id_{\R^k}$. 
	
The pullback $\sigma$-algebra $\E = p^{-1}[\B(\R^k)]$ consists precisely of the Borel cylinders parallel to $\D = \ker(p)$: Every cylinder $C$ is of the form $p^{-1}(A) = i[A] + \D$ for a unique Borel subset $A \subseteq \R^k$. 

We claim that $\chi = p^{-1}(\psi)$ is equal to the restriction of the Gaussian distribution $\iota_*\psi$ to $\B_\D(\R^n)$. Indeed, for an arbitrary cylinder $C=i[A] + \D$ we check $\chi(C) = \psi(A) = \psi(\iota^{-1}(C)) = (\iota_*\psi)(C)$. 

Conversely, let $\psi$ be a Gaussian distribution on $\R^n$, $\D \subseteq \R^n$ and $\fib K$ a complement. Let $p$ be the composite projection $\R^n \xrightarrow{p_\fib K} \fib K \cong \R^k$. Then the restriction $\psi|_{\B_\D(\R^n)}$ is equal to the pullback $p^{-1}(p_*\psi)$. 
\end{proof}

\begin{prop}
	For surjective linear maps $p_i : \R^n \to \R^{k_i}$ and distributions $\psi_i \in \gauss(\R^{k_i})$ with $i=1,2$, we have $p_i^{-1}(\psi_1) = p_2^{-1}(\psi_2)$ if and only if there exists a linear isomorphism $\tau : \R^{k_1} \to \R^{k_2}$ with 
	\[\begin{tikzcd}
		{\R^{k_1}} \\
		&& {\R^n} \\
		{\R^{k_2}}
		\arrow["{p_1}"', two heads, from=2-3, to=1-1]
		\arrow["{p_2}", two heads, from=2-3, to=3-1]
		\arrow["\tau"', from=1-1, to=3-1]
	\end{tikzcd}  \qquad \text{ with } \tau_*\psi_1 = \psi_2 \]
\end{prop}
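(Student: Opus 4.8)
The plan is to prove the two implications separately, with the right-to-left direction being essentially immediate. Suppose first that a linear isomorphism $\tau : \R^{k_1} \to \R^{k_2}$ is given with $p_2 = \tau p_1$ and $\tau_*\psi_1 = \psi_2$. Since $\tau$ is an isomorphism, $\tau^{-1}$ preserves Borel sets, so $p_2^{-1}[\B(\R^{k_2})] = p_1^{-1}[\tau^{-1}[\B(\R^{k_2})]] = p_1^{-1}[\B(\R^{k_1})]$; the two pullback $\sigma$-algebras literally coincide. On this common $\sigma$-algebra I would then compute, for $F \in \B(\R^{k_2})$, that $p_2^{-1}(\psi_2)(p_2^{-1}(F)) = \psi_2(F) = \psi_1(\tau^{-1}(F)) = p_1^{-1}(\psi_1)(p_1^{-1}(\tau^{-1}(F))) = p_1^{-1}(\psi_1)(p_2^{-1}(F))$, so the two extended Gaussian distributions agree.

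For the converse, assume $\chi := p_1^{-1}(\psi_1) = p_2^{-1}(\psi_2)$. By Proposition~\ref{prop:kernel_rep} the distribution $p_i^{-1}(\psi_i)$ has fibre $\ker(p_i)$, and since the fibre (equivalently, the $\sigma$-algebra) is part of the data of an extended Gaussian distribution, equality of the two distributions forces $\D := \ker(p_1) = \ker(p_2)$. Now both $p_i$ are surjective linear maps with the same kernel $\D$, so by the first isomorphism theorem each factors as $p_i = \bar p_i \circ q$ through the quotient map $q : \R^n \to \R^n/\D$, where $\bar p_i : \R^n/\D \to \R^{k_i}$ is a linear isomorphism. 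I would set $\tau := \bar p_2 \circ \bar p_1^{-1} : \R^{k_1} \to \R^{k_2}$, which is a linear isomorphism satisfying $\tau p_1 = \bar p_2 \bar p_1^{-1} \bar p_1 q = \bar p_2 q = p_2$, so the required triangle commutes.

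It then remains to check $\tau_*\psi_1 = \psi_2$. The key observation is that pushforward along $p_i$ inverts pullback along $p_i$: for any $F \in \B(\R^{k_i})$ the set $p_i^{-1}(F)$ lies in $p_i^{-1}[\B(\R^{k_i})]$, which is exactly the $\sigma$-algebra on which $\chi$ is defined, so $(p_i)_*\chi$ is a well-defined measure on all of $\B(\R^{k_i})$ with $(p_i)_*\chi(F) = \chi(p_i^{-1}(F)) = p_i^{-1}(\psi_i)(p_i^{-1}(F)) = \psi_i(F)$; that is, $\psi_i = (p_i)_*\chi$. Hence $\tau_*\psi_1 = \tau_*(p_1)_*\chi = (\tau p_1)_*\chi = (p_2)_*\chi = \psi_2$, completing the argument.

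The only real subtleties are bookkeeping: one must make sure $(p_i)_*\chi$ is legitimately defined even though $\chi$ lives on a proper sub-$\sigma$-algebra of $\B(\R^n)$, which works precisely because surjectivity of $p_i$ makes $F \mapsto p_i^{-1}(F)$ a bijection onto $p_i^{-1}[\B(\R^{k_i})]$; and one must know that the fibre is genuinely an invariant of an extended Gaussian distribution, so that equality of the distributions really does force $\ker(p_1) = \ker(p_2)$. Both points are routine given the earlier development (Proposition~\ref{prop:kernel_rep} and the definitions preceding it), and I do not anticipate a deeper obstacle.
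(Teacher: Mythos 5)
Your proof is correct and follows essentially the same route as the paper: both directions hinge on recovering the common fibre $\D=\ker(p_1)=\ker(p_2)$ from the shared $\sigma$-algebra, factoring through $\R^n/\D$ to produce $\tau$, and then verifying the measure condition. The one small difference is how you finish the forward direction: the paper checks $\tau_*\psi_1=\psi_2$ on a complement of $\D$ as in Proposition~\ref{prop:kernel_rep}, whereas you observe that $\psi_i=(p_i)_*\chi$ (legitimate because $F\mapsto p_i^{-1}(F)$ lands in the domain of $\chi$) and conclude by functoriality of pushforward --- a slightly cleaner packaging of the same fact.
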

\begin{proof}[Proof sketch]
``$\Rightarrow$'' If $p_1^{-1}(\psi_1) = p_2^{-1}(\psi_2)$ then in particular $\ker(p_1) = \ker(p_2)$ because the pullback distributions have the same $\sigma$-algebra and hence fibre $\D$. In this case, there exists a unique isomorphism $\tau$ with $\tau \circ p_1 = p_2$ from the rank-nullity theorem $\R^{k_1} \cong \R^n/\D \cong \R^{k_2}$. The isomorphism must preserve the distribution $\tau_*\psi_1 = \psi_2$ as equality of the pullback distributions can be checked explicitly on a complement $\mathbb K$ of $\D$ (see proof of Proposition~\ref{prop:kernel_rep}).
``$\Leftarrow$'' Given the distribution preserving isomorphism $\tau$, we immediately derive 
\[ p_2^{-1}(\psi_2) = (\tau \circ p_1)^{-1}(\tau_*\psi_1) = p_1^{-1}(\tau^{-1}(\tau_*\psi_1)) = p_1^{-1}(\psi_1) \qedhere \]
\end{proof}

\begin{exa}\label{ex:noisy_kernel}
The noisy resistor from the introduction can be described as the extended Gaussian distribution on $\R^2$ with kernel representation $p^{-1}(\N(0,\sigma^2))$, where $p(V,I) = V-RI$.
\end{exa}

\paragraph{Arithmetic of Quotients}

We wish to give transformation rules of extended Gaussian distributions the same way as in Proposition~\ref{prop:gauss_laws}. The complication in, say, trying to compute the convolution of two extended Gaussian distributions $\chi_1, \chi_2$ on $\R^n$, is that they are not necessarily defined on the same $\sigma$-algebra if their fibres differ. This means that $\sigma$-algebras (equivalently fibres) need to become part of the calculation. 

This is analogous to asking what the sum of two equivalence classes $[x_1]\in \R^n/\D_1$ and $[x_2] \in \R^n/\D_2$ should be. We propose to answer this by $[x_1+x_2] \in \R^n/(\D_1 + \D_2)$. This is intuitive if we think of equivalence classes as cosets, and simplify according to the usual arithmetic $(x_1 + \D_1) + (x_2 + \D_2) = (x_1 + x_2) + (\D_1 + \D_2)$. 

We will from now on denote the restriction of $\psi$ to $\B_\D(\R^n)$ as $\psi + \D$. This can be understood as mere notation, but it is more than that, namely the convolution of $\psi$ and $\D$ both seen as extended Gaussian distributions. 

\begin{defi}\label{def:gaussex_state_rules}
We define the following transformation rules for extended Gaussian distributions under linear maps (matrices) $M$
	\begin{align*}
		M_*(\psi + \D) &= M_*\psi + M[\D] \\
		(\psi_1 + \D_1) \otimes (\psi_2 + \D_2) &= \psi_1 \otimes \psi_2 + \D_1 \times \D_2 \\
		(\psi_1 + \D_1) + (\psi_2 + \D_2) &= (\psi_1 + \psi_2) + (\D_1 + \D_2) 
	\end{align*}
	Here $M[\D]$ denotes the image subspace of $\D$ under $M$, $\D_1 \times \D_2$ is cartesian product and $\D_1 + \D_2$ is Minkowski sum.
\end{defi}

At this point, it is unclear if these rules obey the expected laws, or if they are even well-defined. We will answer this question affirmatively in Proposition~\ref{prop:gaussex_state_rules} by constructing a category $\gaussex$ of extended Gaussian maps in which extended Gaussian distributions correspond to the states. This category will be a Markov category, which is one common axiomatization of what it means to be a categorical model of probability theory. Furthermore $\gaussex$ faithfully contains both Gaussian distributions and linear relations, showing that the composition rules for extended Gaussians conservatively extend both of these examples.  

\section{Overview of Categorical Probability}\label{sec:ct}

We now want to study how extended Gaussian distributions can be combined and transformed. The crucial step is to extend our focus from extended Gaussian \emph{distributions} on a space $X$, to a notion of map between spaces $X \to Y$. Extended Gaussian distributions are then recovered as maps $I \to X$ with trivial inputs. This category should obey certain laws of categorical probability theory, as formalized by the axioms of a Markov category \cite{fritz}. \\

Recall that a monoidal category $(\C,\otimes,I)$ comprises objects $X,Y,\ldots$ and morphisms $f : X \to Y$ between these objects. Morphisms can be composed in two ways, namely in sequence and in parallel:
\begin{enumerate}
\item if $f : X \to Y$ and $g : Y \to Z$, we have a sequential composite $g \circ f : X \to Z$. This is also denoted $gf$ or $f ; g$ (diagrammatic composition order).
\item if $f_i : X_i \to Y_i$ for $i=1,2$, then we have a parallel composite $f_1 \otimes f_2 : X_1 \otimes X_2 \to Y_1 \otimes Y_2$
\end{enumerate}

\begin{defiC}[{\cite{cho2019disintegration}}]
A \emph{copy-delete} category is a symmetric monoidal category $(\C,\otimes,I)$ where every object is equipped with the structure of a commutative comonoid 
\[ \cpy_X = 
	\tikzset{baseline=-0.65ex}
	\begin{tikzpicture}[scale=0.1]
	\begin{pgfonlayer}{nodelayer}
		\node [style=bn] (1) at (0, 0) {};
		\node [style=none] (3) at (-3, 0) {};
		\node [style=none] (4) at (2, 2) {};
		\node [style=none] (5) at (2, -2) {};
		\node [style=none] (6) at (4, -2) {};
		\node [style=none] (7) at (4, 2) {};
	\end{pgfonlayer}
	\begin{pgfonlayer}{edgelayer}
		\draw (1) to (3.center);
		\draw [in=-180, out=90] (1) to (4.center);
		\draw (4.center) to (7.center);
		\draw [in=-180, out=-90] (1) to (5.center);
		\draw (5.center) to (6.center);
	\end{pgfonlayer}
\end{tikzpicture}}
 : X \to X \otimes X, \quad \del_X = 
	\tikzset{baseline=-0.65ex}
	\begin{tikzpicture}[scale=0.1]
	\begin{pgfonlayer}{nodelayer}
		\node [style=bn] (1) at (2, 0) {};
		\node [style=none] (3) at (-1, 0) {};
	\end{pgfonlayer}
	\begin{pgfonlayer}{edgelayer}
		\draw (1) to (3.center);
	\end{pgfonlayer}
\end{tikzpicture}
}
 : X \to I \]
subject to axioms, rendered graphically as follows,
\[ \tikzfig{cd_axioms} \]
\end{defiC}

A copy-delete category formalizes copying and discarding of information. In a hypergraph category, we have additional ways of combining morphisms 

\begin{defi}
A \emph{hypergraph category} (e.g. \cite{hypergraphcats}) is a copy-delete category where every object is additionally equipped with morphisms 
\[ \mult_X = 
	\tikzset{baseline=-0.65ex}
	\begin{tikzpicture}[scale=0.1]
	\begin{pgfonlayer}{nodelayer}
		\node [style=none] (0) at (0, 2) {};
		\node [style=bn] (1) at (2, 0) {};
		\node [style=none] (2) at (0, -2) {};
		\node [style=none] (3) at (5, 0) {};
		\node [style=none] (4) at (-2, 2) {};
		\node [style=none] (5) at (-2, -2) {};
	\end{pgfonlayer}
	\begin{pgfonlayer}{edgelayer}
		\draw (0.center) to (4.center);
		\draw [in=90, out=0] (0.center) to (1);
		\draw (5.center) to (2.center);
		\draw [in=-90, out=0] (2.center) to (1);
		\draw (1) to (3.center);
	\end{pgfonlayer}
\end{tikzpicture}}
 : X \otimes X \to X , \qquad \unit_X = 
	\tikzset{baseline=-0.65ex}
	\begin{tikzpicture}[scale=0.1]
	\begin{pgfonlayer}{nodelayer}
		\node [style=bn] (1) at (-1, 0) {};
		\node [style=none] (3) at (2, 0) {};
	\end{pgfonlayer}
	\begin{pgfonlayer}{edgelayer}
		\draw (1) to (3.center);
	\end{pgfonlayer}
\end{tikzpicture}
}
 : I \to X \]
which interact with copy and delete as a special commutative Frobenius algebra. We don't put the axioms here, but refer to \cite{hypergraphcats} for reference.
\end{defi}

In a hypergraph category, the directionality of morphisms is not of essential importance: Every hypergraph category is equivalent to its opposite. In a \emph{Markov category}, on the other hand, directionality matters, and information is propagated in one direction only. This is axiomatized by requiring that discarding the output of a computation is the same as not performing the computation at all.

\begin{defiC}[\cite{fritz}]
A \emph{Markov category} is a copy-delete category in which $\del$ is natural; that is, every morphism $f : X \to Y$ satisfies the discardability equation
\[ \tikzfig{discardability} \]
\end{defiC}

In general, morphisms are not required to commute with copying. Those that do are called deterministic.

\begin{defiC}[{\cite[10.1]{fritz}}]\label{def:determinism}
A morphism $f : X \to Y$ in a Markov category $\C$ is called \emph{deterministic} if it satisfies the equation
\[ \tikzfig{determinism} \]
\end{defiC}

\noindent Markov categories have been highly successful as an abstract theory of information flow. They are used to model a variety of uncertainty-related effects, notably probabilistic and nondeterministic computation \cite{fritz}, and fresh name generation \cite{fritz2023dilations}. We now recall relevant examples, before defining a new Markov category for lack of information and extended Gaussian computation.

\subsection{Relations}\label{sec:relations}

A basic example of a Markov category formalizes relational nondeterminism. We call a relation $R \subseteq X \times Y$ \emph{total} if for all $x \in X$ there exists a $y \in Y$ with $(x,y) \in R$. 

\begin{defi}
The Markov category $\catname{TRel}$ (called $\catname{SetMulti}$ in \cite{fritz})) has 
\begin{enumerate}
\item objects sets $X$
\item morphisms $X \to Y$ are total relations $R \subseteq X \times Y$
\item the identity relation is $\{ (x,x) \s x \in X \}$
\item composition is relation composition
\[ (R ; S) = \{ (x,z) \s \exists y, (x,y) \in R, (y,z) \in S \}\]
\item tensor is product $X \times Y$, with unit $I=1$ and
\[ (R_1 \otimes R_2) = \{ ((x_1,x_2),(y_1,y_2)) : (x_1,y_1) \in R_1, (x_2,y_2) \in R_2 \}\]
\item copy and delete are
\[ \cpy_X = \{ (x,(x,x)) \s x \in X \}, \quad \del_X = \{ (x,\bullet) \s x \in X \}\]
\end{enumerate}
\end{defi}
We note that verifying the defining property of Markov categories (the naturality of $\del_X$) makes use of the totality assumption on the relations. 

An important special case of nondeterminism is given by the notion of linear relation.

\begin{defi}
A \emph{linear relation} is a relation $\D \subseteq \R^m \times \R^n$ that is also a vector subspace. The category $\tlinrel$ has 
\begin{enumerate}
\item objects $\R^n$
\item morphisms $\R^m \to \R^n$ are total linear relations $\D \subseteq X \times Y$
\end{enumerate}
All other structure is inherited from $\catname{TRel}$.
\end{defi}

\subsection{Probability channels}\label{sec:channels}

We now turn measure-theoretic probability into a Markov category using the notion of a Markov kernel, which is in essence a parameterized probability measure.

\begin{defi}
Let $(X,\E), (Y,\F)$ be measurable spaces. A \emph{channel} or \emph{Markov kernel} $\kappa : (X,\E) \chanto (Y,\F)$ is a map $\kappa : X \times \F \to [0,1]$ such that
\begin{itemize}
\item for every $x \in X$, $\kappa(x,-) : \F \to [0,1]$ is a probability measure
\item for every $F \in \F$, $\kappa(-,\F) : (X,\E) \to ([0,1],\B([0,1]))$ is a measurable function
\end{itemize}
We write the value $\kappa(x,F)$ as $\kappa(F|x)$. 
\end{defi}

The following is a classical construction due to Lawvere (see e.g. \cite[Section~4]{fritz}).
\begin{defi}
Channels form a Markov category $\stoch$ as follows.
\begin{enumerate}
\item Objects are measurable spaces $(X,\E)$
\item morphisms are channels
\item the identity morphism is given by $\id(E|x) = [x \in E]$
\item composition is given by integration. If $\kappa : (X,\E) \chanto (Y,\F)$ and $\nu : (Y,\F) \chanto (Z,\mathcal H)$, then $\nu\kappa : (X,\E) \chanto (Z,\mathcal H)$ is defined by
\[ (\nu\kappa)(H|x) = \int \nu(H|y) \kappa(\mathrm dy|x) \]
\item the tensor of $(X,\E)$ and $(Y,\F)$ is $(X \times Y, \E \otimes \F)$ where $\E \otimes \F$ denotes the product $\sigma$-algebra, and
\[ (\kappa \otimes \kappa')(E \times E'|x, x') = \kappa(E|x) \cdot \kappa'(E'|x')\]
\item every measurable function $f : (X,\E) \to (Y,\F)$ gives rise to a channel $\lift f : (X,\E) \chanto (Y,\F)$ via $\lift{f} (F|x) = [f(x) \in F]$. This assignment defines a functor $\meas \to \stoch$, and the copy and delete maps are inherited from the corresponding measurable functions.
\end{enumerate}
\end{defi}

\subsection{Gaussian maps}\label{sec:gauss}

The Markov category $\gauss$ is a compact description of Gaussian probability. Its objects are the spaces $\R^n$, and morphisms are linear maps perturbed by Gaussian noise.

\begin{defiC}[{\cite[Section~6]{fritz}}]
The Markov category $\gauss$ is defined as follows
\begin{enumerate}
\item objects are the spaces $\R^n$
\item morphisms $m \to n$ are tuples $(M,\psi)$ where $M \in \R^{m \times n}$ is a matrix and $\psi$ is a Gaussian distribution on $\R^n$. We think of the tuple $(M,\psi)$ as representing the stochastic function $x \mapsto Mx + \psi$.
\item composition is defined by 
\[ (M,\psi) \circ (N,\chi) = (MN, \psi + M_*\chi)\]
where pushforward and convolution make use of the formulas of Proposition~\ref{prop:gauss_laws}.
\end{enumerate}
\end{defiC}

As expected, composition of Gaussian maps agrees with measure-theoretic composition of Gaussians. 

\begin{propC}[{\cite[Proposition~6.1]{fritz}}]
There is a faithful functor $J : \gauss \to \stoch$ which interprets $\R^n$ as the measurable space $(\R^n,\B(\R^n))$, and assigns the morphism $(M,\psi) \in \gauss(\R^m,\R^n)$ to the channel $\kappa : \R^m \times \B(\R^n) \to [0,1]$ defined by
\[ \kappa(E|x) = \psi(\{ y \s y+Mx \in E \}). \]
\end{propC}

\section{Extended Gaussian Maps}\label{sec:gaussex}

Gaussian maps could faithfully be interpreted as channels, by interpreting objects as the Borel spaces $(\R^n,\B(\R^n))$. If we want to work with extended Gaussian distributions, this will no longer be possible, because the $\sigma$-algebra on the objects depends on the distribution we are working with. 

We need to introduce a formalism which lets morphisms $X \to Y$ dynamically introduce a quotient on $Y$. We achieve this using a particular instance of the cospan construction, which we call \emph{copartial maps}. This will be a generalization of the kernel representation seen in Proposition~\ref{prop:kernel_rep}.

\subsection{Copartial maps}\label{sec:copartiality}

A partial function between sets $X \parto Y$ is a function from some subset $A \subseteq X$ to $Y$. We are interested in the dual concept; a \emph{copartial function} $X \coparto Y$ should be a function from $X$ into some quotient of $Y/{\sim}$. The way to formalize this is using the language of spans and cospans. \\

Let $\C$ be a category. A span is a pair of morphisms $X \xleftarrow{f} A \xrightarrow{g} Y$. We consider two such spans equivalent if there is an isomorphism $A \cong A'$ which commutes with the legs. We will always consider spans up to equivalence. The following subclasses of spans are of importance
\begin{enumerate}
\item a \emph{relation} is a span where $f,g$ are jointly monic, i.e. the pairing $\langle f,g \rangle : A \to X \times Y$ is a monomorphism 
\item a \emph{partial map} is a span where the left leg $f$ is monic. The apex $A$ represents the domain of the partial map.
\end{enumerate}
Spans compose by pullback. A pullback of relations is not necessarily a relation; instead an image factorization has to be used. For partial maps, the two types of composition agree -- a pullback of partial maps is another partial map. The category $\Par(\C)$ of \cite{cockett2002restriction} has the same objects as $\C$, and morphisms are partial maps. \\

Just as partial maps are maps out of subobjects, \emph{copartial maps} are maps into quotients. For this, we consider cospans $X \xrightarrow{f} P \xleftarrow{g} Y$ (again, always up to equivalence).

\begin{enumerate}
\item a \emph{co-relation} is a cospan where $f,g$ are jointly epic, i.e. the copairing $[f,g] : X + Y \to P$ is an epimorphism. Co-relations have been studied in \cite{fong:decorated_corelations}.
\item a \emph{copartial map} is a cospan whose right leg $g$ is epic. This represents a quotient on the codomain. 
\end{enumerate}
Cospans compose by pushout; again, co-relations are not closed under pushout, but copartial maps are. This is because epimorphisms are stable under pushout.

\begin{defi}
Let $\C$ be a category with pushouts. We define a category $\Copar(\C)$ as follows
\begin{enumerate}
\item objects are the same as $\C$
\item morphisms $X \coparto Y$ are copartial maps
\item the identity cospan is $X \xrightarrow\id X \xleftarrow\id X$.
\item composition is given by pushout
\[\begin{tikzcd}
	&& W \\
	& P && Q \\
	X && Y && Z
	\arrow[from=3-1, to=2-2]
	\arrow[two heads, from=3-3, to=2-2]
	\arrow[from=2-2, to=1-3]
	\arrow[two heads, from=2-4, to=1-3]
	\arrow[from=3-3, to=2-4]
	\arrow[two heads, from=3-5, to=2-4]
	\arrow["\lrcorner"{anchor=center, pos=0.125, rotate=-45}, draw=none, from=1-3, to=3-3]
\end{tikzcd}\]
\end{enumerate}
\end{defi}
Well-definedness is the same as for general categories of cospans \cite{hypergraphcats,fong:decorated_cospans}.  We remark that, as in \cite{cockett2002restriction}, the right leg may be chosen to lie in a suitable, distinguished class $\mathcal E$ of epimorphisms (e.g. regular epimorphisms in a regular category), but we won't need this in what follows. Some basic observations:

\begin{enumerate}
\item we have a functor $\C \to \Copar(\C)$ which sends $f : X \to Y$ to the cospan $X \xrightarrow{f} Y \xleftarrow{\id} Y$.
\item If $\C$ has an inital object $0$, then the category $\Copar(\C)$ has a monoidal structure whose tensor is the coproduct $+$, and the unit is the initial object $0$.
\item The states $0 \coparto X$ are in bijection with the quotients $X \twoheadrightarrow P$ of $X$.
\end{enumerate}

For example, in $\Copar(\catname{Set})$, we have a state $u_X : 0 \coparto X$ given by the cospan $0 \to 1 \leftarrow X$, which corresponds to forgetting all information about $X$ (everything is quotiented to a point). For any $F : X \coparto Y$ we have that $F \circ u_X = u_Y$, because the following is a pushout
\[\begin{tikzcd}
	&& 1 \\
	& 1 && Q \\
	0 && X && Y
	\arrow[from=3-1, to=2-2]
	\arrow[from=3-3, to=2-2]
	\arrow[from=3-3, to=2-4]
	\arrow[from=3-5, to=2-4]
	\arrow[from=2-2, to=1-3]
	\arrow[from=2-4, to=1-3]
	\arrow["\lrcorner"{anchor=center, pos=0.125, rotate=-45}, draw=none, from=1-3, to=3-3]
\end{tikzcd}\]
Intuitively, if we don't know anything about $X$, no amount of post-processing can change that. The $\Copar$ construction captures uncertainty via lack of information, but the way it does this is quite subtle.

\begin{enumerate}
\item How do copartial maps relate to relational nondeterminism? From a copartial map $F = (X \xrightarrow{f} P \xleftarrow{p} Y)$, we may extract the relation $R_{F} = \{ (x,y) \s f(x) = p(y) \}$ which captures which outcomes $y \in Y$ each $x \in X$ may be identified with. Is it the case that $R$ respects composition, i.e. defines a functor $R : \Copar(\set) \to \rel$? The answer is \textbf{negative}; relational nondeterminism and copartiality compose differently. We spell out a counterexample in the appendix (Section~\ref{app:copartiality}). 

\item Can we turn $\Copar$ into a Markov category? Here, the choice of monoidal structure $(+,0)$ gives us no obvious copy and delete morphisms. It is tempting to try and define a monoidal structure with respect to $(\times, 1)$ but these don't generally respect pushout. Again, a counterexample for $\Copar(\set)$ is spelled out in the appendix. 
\end{enumerate}
The second issue disappears if the category $\C$ has biproducts. In this case, $(\times,1) = (+,0)$, and $(\Copar(\C),\times,1)$ does become a Markov category, with copy and delete being
\[ X \xrightarrow{\Delta} X \times X \xleftarrow{\id} X \times X, \qquad X \xrightarrow{!} 1 \xleftarrow{\id} 1. \]

\subsection{Linear Relations as Cospans}\label{sec:linrel}

Let $\vect$ denote the category of finite-dimensional vector spaces $\R^n$; this category has biproducts. We claim that the Markov category $\Copar(\vect)$ is equivalent to the category $\tlinrel$ of total linear relations. Unlike in $\set$, quotient-based and relational nondeterminism happen to coincide. 

\begin{prop}
A kernel representation of a linear relation $\fib L \subseteq X \times Y$ is a cospan of linear maps $X \xrightarrow{f} P \xleftarrow{p} Y$ such that $\fib L = \{ (x,y) \s f(x) = p(y) \}$. Every linear relation admits a kernel representation. 
\end{prop}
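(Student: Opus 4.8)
The plan is to prove the two claims separately: first that any cospan of the stated form yields a linear relation (i.e. the construction lands in $\tlinrel$), and second that every linear relation $\fib L \subseteq X \times Y$ arises this way. The first is essentially immediate: given linear maps $f : X \to P$ and $p : Y \to P$, the set $\fib L = \{(x,y) \s f(x) = p(y)\}$ is the kernel of the linear map $X \times Y \to P$, $(x,y) \mapsto f(x) - p(y)$, hence a subspace of $X \times Y$. One also needs totality, i.e. that for every $x \in X$ there is some $y$ with $(x,y)\in\fib L$; this requires $p$ to be surjective (equivalently, the right leg of the cospan to be epic), which is exactly the copartiality assumption on the cospan. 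So the first half reduces to unwinding definitions.

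For the existence half, given a linear relation $\fib L \subseteq X \times Y$, I would construct $P$ as a quotient. The natural candidate is $P = Y / \pi_Y[\fib L]$, where $\pi_Y : X \times Y \to Y$ is the projection and $\pi_Y[\fib L]$ is the image subspace; take $p : Y \twoheadrightarrow P$ to be the quotient map. For the left leg, I need a well-defined linear map $f : X \to P$ such that $f(x) = p(y)$ precisely when $(x,y) \in \fib L$. The idea is: for $x \in X$, if $(x,y) \in \fib L$ for some $y$, set $f(x) = p(y) = [y]$; this is well-defined because if $(x,y),(x,y') \in \fib L$ then by linearity $(0, y - y') \in \fib L$, so $y - y' \in \pi_Y[\fib L]$ and hence $[y] = [y']$ in $P$. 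If $x$ is in no pair of $\fib L$ at all (which can happen — the relation need not be total on $X$), then $f$ is not forced, and I would need to extend it linearly; concretely one can choose a linear complement of $\pi_X[\fib L]$ in $X$ and send it to $0$, or more cleanly, factor through $X / \pi_X[\fib L]$ appropriately. A cleaner uniform construction: let $P = (X \times Y)/\fib L'$ for a suitable modification, but I expect the quotient-of-$Y$ construction above, with a complement chosen for the non-total part, to be the most transparent.

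The subtle point — and the step I expect to be the main obstacle — is handling \textbf{totality} and the behaviour of $f$ on elements of $X$ not related to anything. Since morphisms of $\tlinrel$ are \emph{total} linear relations but the proposition as stated speaks of an arbitrary linear relation $\fib L \subseteq X \times Y$, I need to be careful about whether the kernel representation is required to have epic right leg in all cases, and whether a genuinely non-total $\fib L$ can still be written as $\{(x,y) \s f(x) = p(y)\}$ for some (not-necessarily-surjective) $p$. In fact a non-total relation is fine: just do not require $p$ surjective, and define $P = Y/\pi_Y[\fib L]$ together with $f$ as above, extended by $0$ on a complement of $\pi_X[\fib L]$; one then checks $f(x) = p(y) \iff (x,y)\in \fib L$ by a short linear-algebra argument using that $\fib L$ is a subspace. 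The verification that the equivalence holds in \emph{both} directions — in particular that $f(x) = p(y)$ forces $(x,y) \in \fib L$ and not merely $y \equiv y'$ for some related $y'$ — is where one must use that $f$ was defined coherently from $\fib L$, and this is the only place genuine care is needed. Once this is in place, the correspondence with the $\Copar(\vect)$ description is a matter of observing that composition of these cospans by pushout computes exactly the relational composite, which is the content of the following proposition in the paper.
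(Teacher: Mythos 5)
There is a genuine gap in the existence half: the quotient you choose is too coarse. You set $P = Y/\pi_Y[\fib L]$, where $\pi_Y[\fib L] = \{ y \s \exists x,\ (x,y) \in \fib L \}$. Take $\fib L = \{ (x,y) \s x = y \} \subseteq \R \times \R$: then $\pi_Y[\fib L] = \R$, so $P = 0$ and $f = p = 0$, and the represented relation $\{(x,y) \s f(x) = p(y)\}$ is all of $\R \times \R$, not $\fib L$. Your well-definedness argument goes through precisely because the quotient is coarser than necessary, but the backward implication $f(x) = p(y) \Rightarrow (x,y) \in \fib L$ — the step you yourself flag as the one needing care — fails. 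The subspace you actually want to quotient $Y$ by is $\fib L(0) = \{ y \s (0,y) \in \fib L \}$, which is in general strictly smaller than $\pi_Y[\fib L]$. Moreover, even with that correction, your proposal to extend $f$ by zero on a complement of $\pi_X[\fib L]$ breaks the representation for non-total relations: for such an $x$ you get $f(x) = 0 = p(0)$, so the spurious pair $(x,0)$ enters the represented relation. To exclude $x$ from the domain of the represented relation, $f(x)$ must land \emph{outside} the image of $p$, which no extension into $Y/\fib L(0)$ by values of $p$ can arrange.

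The uniform construction that avoids both problems is the one you mention in passing but set aside: take $P = (X \times Y)/\fib L$ with $f(x) = [(x,0)]$ and $p(y) = [(0,-y)]$, so that $f(x) = p(y)$ iff $(x,y) \in \fib L$, with no case split on totality and no choice of complement. This is (up to a sign in the second leg) exactly the cospan the paper uses in the proof of the subsequent proposition relating total relations to copartial maps, where one additionally checks that $p$ is surjective iff $\fib L$ is total. Your first half (that any cospan determines a linear relation, as the kernel of $(x,y) \mapsto f(x) - p(y)$) is fine; note only that the proposition under review concerns arbitrary linear relations, so surjectivity of $p$ is not part of its statement and the totality discussion belongs to the next proposition.
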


\noindent Kernel representations are not unique as cospans. They become unique if we restrict the copairing $[f,g] : X \times Y \to P$ to be jointly surjective, i.e. a co-relation. The relationship between linear relations and cospans is well-known (e.g. \cite{fong:decorated_cospans,fong:decorated_corelations,fong:open_systems}). In the case of copartial maps, the result specializes as follows: 

\begin{prop}
A linear relation $\fib L \subseteq X \times Y$ is total if and only if it admits a kernel representation $X \xrightarrow{f} P \xleftarrow{p} Y$ that is a copartial map. 
\end{prop}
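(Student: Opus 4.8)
The plan is to characterize when the kernel representation cospan $X \xrightarrow{f} P \xleftarrow{p} Y$ can be chosen with $p$ an epimorphism (i.e. surjective), and to match this against totality of $\fib L$. Recall that the relation extracted from the cospan is $\fib L = \{(x,y) \s f(x) = p(y)\}$, and that $\fib L$ is total iff for every $x \in X$ there is some $y \in Y$ with $(x,y) \in \fib L$, equivalently $f(x) \in p[Y]$ for all $x$, equivalently $f[X] \subseteq p[Y]$.

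First I would prove the easy direction ``$\Leftarrow$'': if the kernel representation has $p$ surjective, then $p[Y] = P \supseteq f[X]$, so $f[X] \subseteq p[Y]$ and $\fib L$ is total by the reformulation above. For the converse ``$\Rightarrow$'', assume $\fib L$ total and start from an arbitrary kernel representation $X \xrightarrow{f} P \xleftarrow{p} Y$ (which exists by the previous proposition). The issue is only that $p$ need not be surjective. The fix is to corestrict to the subspace that actually matters: replace $P$ by $P' := f[X] + p[Y]$ (a subspace of $P$ since $f,p$ are linear), with $f', p'$ the corestrictions of $f,p$ to $P'$. One checks immediately that $\fib L = \{(x,y) \s f'(x) = p'(y)\}$ is unchanged, since the equation $f(x) = p(y)$ already forces the common value into $f[X] \cap p[Y] \subseteq P'$. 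Totality gives $f[X] \subseteq p[Y]$, hence $P' = p[Y] = p'[Y]$, so $p'$ is surjective and $X \xrightarrow{f'} P' \xleftarrow{p'} Y$ is the desired copartial kernel representation.

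The main thing to be careful about — the only genuine obstacle — is verifying that corestricting the cospan does not change the extracted relation and that it is still a legitimate cospan in $\vect$ (both legs linear, landing in the same space $P'$); this is routine but must be spelled out, in particular the inclusion $f[X] \cap p[Y] \subseteq P'$ used to see that the fibre product is computed the same way in $P'$ as in $P$. One could alternatively phrase the whole argument more slickly via the co-relation (jointly epic) representative from the preceding remark: the jointly-epic representative has $P = f[X] + p[Y]$ by construction, and then $p$ surjective $\iff f[X] \subseteq p[Y] \iff \fib L$ total. I would likely present it this way, noting that the co-relation representative is canonical and reduces the statement to the reformulation of totality, which is the only substantive step.
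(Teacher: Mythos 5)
Your proof is correct, but it takes a different route from the paper on the substantive direction. For ``$\fib L$ total $\Rightarrow$ copartial representation exists,'' the paper constructs a specific cospan from scratch: it maps into the quotient $(X\times Y)/\fib L$ via $f(x)=(x,0)+\fib L$ and $p(y)=(0,y)+\fib L$, and then uses totality pointwise to show $p$ is surjective (given $(x,y)$, pick $y'$ with $(x,y')\in\fib L$ and observe $p(y-y')$ hits the class of $(x,y)$). You instead start from an arbitrary kernel representation, whose existence is guaranteed by the preceding proposition, reformulate totality as the image inclusion $f[X]\subseteq p[Y]$, and corestrict the cospan to $P'=f[X]+p[Y]$, which collapses to $p[Y]$ under that inclusion. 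Both arguments are sound; the corestriction step is harmless precisely because the extracted relation $\{(x,y)\s f(x)=p(y)\}$ depends only on the values of the legs, not on the ambient codomain. What each buys: the paper's construction is canonical and is reused immediately afterwards (the displayed isomorphism of cospans between $(X\times Y)/\fib L$ and $Y/\D$ is exactly this representative), whereas your argument is more modular and isolates the clean equivalence ``$p$ epi in the jointly-epic representative $\iff f[X]\subseteq p[Y] \iff$ total,'' which arguably makes the logical content of the proposition more transparent. A minor point of hygiene if you write it up: since the objects of $\vect$ here are literally the spaces $\R^n$, you should note that $P'$ is identified with some $\R^k$ via a choice of isomorphism, which is exactly the equivalence of cospans the paper already quotients by.
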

\begin{proof}
Necessity is clear; if $p$ is surjective, then for every $x$ there exists $y$ with $p(y) = f(x)$, hence $(x,y) \in \fib L$. Conversely, if $\fib L$ is total, consider the cospan $X \xrightarrow{f} (X \times Y)/\fib L \xleftarrow{p} Y$ given by $f(x) = (x,0) + \fib L$, $p(y) = (0,y) + \fib L$. We claim that $p$ is surjective: For any pair $(x,y)$, we can using totality find $y'$ with $(x,y') \in \fib L$. Then $[p(y-y')] = [(x,y) - (x,y')] = [(x,y)]$ as desired. 
\end{proof}

\noindent In particular, we have an isomorphism of cospans
\[\begin{tikzcd}
	& {(X\times Y)/\fib L} \\
	{X} && {Y} \\
	& {Y/\D}
	\arrow["p"', two heads, from=2-3, to=1-2]
	\arrow["\ell"', from=2-1, to=3-2]
	\arrow["q", two heads, from=2-3, to=3-2]
	\arrow["\sim"{description}, from=1-2, to=3-2]
	\arrow["f", from=2-1, to=1-2]
\end{tikzcd}\]
where $\D = \ker(p)$. The space $\D$ plays the role of the fibre in extended Gaussian distributions. This gives us the following useful representation: Write $\fib L(x) = \{ y \s (x,y) \in \fib L \}$, then 
\[ \fib L(x) = \ell(x) + \D \label{eq:linrel_nf} \]
That is to say, we can see a every total linear relation as a linear map plus some nondeterministic noise. This representation is very useful in the categorical setting:

\begin{exa}
Every subspace $\D \subseteq Y$ induces a state $\D : 0 \to Y$ in $\tlinrel$. Every linear map $\ell : X \to Y$ induces a linear relation via its graph, $\lift \ell = \{ (x,y) \s y = \ell(x) \}$. Then every total linear relation $\fib L : X \to Y$ can be written as a composite
\begin{equation} \tikzfig{linrel_add} \label{eq:linrel_add} \end{equation}
where the white dot denotes the addition relation $\{ (x_1,x_2,y) \s y=x_1+x_2\}$. Note that addition (together with zero) defines a monoid structure on each object of $\tlinrel$. 
\end{exa}

\begin{prop}\label{prop:trel_copar}
The Markov category $\tlinrel$ of total linear relations is equivalent to $\Copar(\vect)$.
\end{prop}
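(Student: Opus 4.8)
I would establish the equivalence as an identity-on-objects functor $R \colon \Copar(\vect) \to \tlinrel$ sending a copartial map $F = (X \xrightarrow{f} P \xleftarrow{p} Y)$ to the relation $R_F = \{(x,y) \s f(x) = p(y)\}$ introduced in Section~\ref{sec:copartiality}. First one checks this lands in $\tlinrel$: $R_F$ is the kernel of the linear map $X \oplus Y \to P$, $(x,y) \mapsto f(x) - p(y)$, hence a subspace of $X \times Y$, and it is total because surjectivity of $p$ lets one solve $p(y) = f(x)$ for every $x$. The assignment is visibly invariant under equivalence of cospans, sends identity cospans to identity relations, and sends the biproduct tensor, the copy cospan $X \xrightarrow{\Delta} X\oplus X \xleftarrow{\id} X\oplus X$ and the delete cospan $X \xrightarrow{!} 0 \xleftarrow{\id} 0$ to the tensor, $\cpy_X$ and $\del_X$ of $\tlinrel$; so once $R$ is shown to be a functor and an equivalence it is automatically an equivalence of Markov categories.

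The key step is that $R$ preserves composition; this is precisely where the corresponding assignment out of $\Copar(\set)$ fails (Section~\ref{app:copartiality}). Given $F = (X \xrightarrow{f} P \xleftarrow{p} Y)$ and $G = (Y \xrightarrow{g} Q \xleftarrow{q} Z)$, the pushout in $\vect$ computing $G\circ F$ is $W = (P\oplus Q)/K$ with $K = \{(p(y),-g(y)) \s y\in Y\}$ and coprojections $i_P(\pi) = [(\pi,0)]$, $i_Q(\eta) = [(0,\eta)]$. Hence $(x,z)\in R_{G\circ F}$ iff $i_P f(x) = i_Q q(z)$, iff $(f(x),-q(z))\in K$, iff there exists $y$ with $p(y)=f(x)$ and $g(y)=q(z)$, which is exactly $(x,z)\in R_G\circ R_F$. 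The nontrivial inclusion is $R_{G\circ F}\subseteq R_G\circ R_F$, and the reason it holds in $\vect$ but not in $\set$ is that $K$ — being the image of a linear map $Y \to P\oplus Q$ — is already a subspace, so forming the pushout introduces none of the iterated ``zig-zag'' identifications that the generated equivalence relation produces in $\set$. Preservation of the tensor (the biproduct in both cases) and of copy and delete is then a routine check.

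Fullness and faithfulness follow from the two preceding propositions on kernel representations. For fullness, a total linear relation admits a kernel representation that is a copartial map, i.e.\ is of the form $R_F$. For faithfulness, since $p$ is surjective with kernel $\D$ it factors through an isomorphism $Y/\D \xrightarrow{\sim} P$, so $F$ is equivalent to the canonical cospan $X \xrightarrow{\bar f} Y/\D \xleftarrow{q_\D} Y$; and $R_F$ recovers exactly this data, namely $\D = R_F(0)$ and, for each $x$, the coset $\bar f(x) = R_F(x) \in Y/\D$, so $R_F$ determines $F$ up to equivalence. Being a bijection on objects and full and faithful, $R$ is an isomorphism of categories, a fortiori the asserted equivalence. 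I expect the composition-preservation step — concretely the inclusion $R_{G\circ F}\subseteq R_G\circ R_F$, which rests on the exactness of pushouts in $\vect$ — to be the only genuine obstacle, everything else being bookkeeping already prepared by the earlier results.
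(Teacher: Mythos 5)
Your proposal is correct and takes essentially the same route as the paper: both set up the correspondence $F \mapsto R_F$ between copartial maps and total linear relations (with full faithfulness supplied by the preceding kernel-representation propositions) and verify that pushout composition in $\vect$ agrees with relational composition. Your element-wise computation in the explicit pushout $(P\oplus Q)/K$ is a more detailed, self-contained version of the paper's normal-form calculation $W\cong Z/(g[\D]+\fib E)$, and it correctly pinpoints why the set-level obstruction disappears (the pushout quotients by a subspace rather than by a generated equivalence relation).
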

\begin{proof}
This is discussed in \cite[Section~7.2]{fong:decorated_corelations} for arbitrary linear relations. In our case, if we have $\fib L(x) = f(x) + \D$ and $\fib M(x) = g(x) + \fib E$, then their composite is
\[ (\fib L ; \fib M)(x) = g(\ell(x)) + g[\D] + \fib E \]
The same happens in the pushout of kernel representations
\[\begin{tikzcd}
	&& W \\
	& {Y/\D} && {Z/\fib E} \\
	X && Y && Z
	\arrow["f", from=3-1, to=2-2]
	\arrow[two heads, from=3-3, to=2-2]
	\arrow["g", from=3-3, to=2-4]
	\arrow[two heads, from=3-5, to=2-4]
	\arrow[from=2-2, to=1-3]
	\arrow[two heads, from=2-4, to=1-3]
\end{tikzcd}\]
which is given by $W \cong Z/(g[\D] + \fib E)$. 
\end{proof}

\subsection{Extended Gaussian maps}\label{sec:gaussex_def}

To define extended Gaussian maps as formal objects, we combine the cospan approach to linear relations with Gaussian distributions. For this, we employ the concept of a \emph{decorated cospan} \cite{fong:decorated_cospans,fong:decorated_corelations}.

\begin{defi}
An extended Gaussian morphism $\R^m \to \R^n$ is an equivalence class of tuples $(f,\psi,q)$ where $\R^m \xrightarrow{f} \R^k \xleftarrow{p} \R^n$ is a copartial map, and $\psi \in \gauss(\R^k)$ is a Gaussian distribution. We call $\psi$ the \emph{decoration} of the cospan. We identify a decorated cospan $(f,\psi,q)$ with $(f,\psi',q)$ when there exists a linear isomorphism of cospans $\tau$ with $\tau_*\psi = \psi'$.
\end{defi}

\noindent Intuitively, we think of $(f,\psi,q)$ as encoding a linear map with extended Gaussian noise
\[ x \mapsto f(x) + q^{-1}(\psi) \]

\begin{defi}
The Markov category $\gaussex$ is defined as follows
\begin{enumerate}
\item objects are the spaces $\R^n$
\item morphisms are decorated copartial maps $(f,\psi,q)$
\item the identity is $(\id, \delta_0, \id)$, 
\item the composite of cospans $(f,\psi,p)$ and $(g,\chi,q)$ is $(i_1f,\xi,i_2q)$ where we form the pushout
\[\begin{tikzcd}
	&& W \\
	& P && Q \\
	X && Y && Z
	\arrow["q", two heads, from=3-5, to=2-4]
	\arrow["p", two heads, from=3-3, to=2-2]
	\arrow["{i_1}", from=2-2, to=1-3]
	\arrow["g"', from=3-3, to=2-4]
	\arrow["{i_2}"', two heads, from=2-4, to=1-3]
	\arrow["f", from=3-1, to=2-2]
	\arrow["\lrcorner"{anchor=center, pos=0.125, rotate=-45}, draw=none, from=1-3, to=3-3]
\end{tikzcd}\]
The composite decoration $\xi$ is the convolution of pushforwards $(i_1)_*\psi + (i_2)_*\chi$
\item Every linear map $f : \R^m \to \R^n$ induces an extended Gaussian morphism $(f,\N(0,0),\id)$. Copy and delete structure are inherited from vector spaces that way.
\end{enumerate}
\end{defi}

\noindent It is easy to see that the category $\gaussex$ faithfully contains Gaussian probability as well as relational nondeterminism.

\begin{prop}\label{prop:embedding}
We have faithful Markov functors from $\gauss$ and $\tlinrel$ to $\gaussex$:
\begin{enumerate}
\item A Gaussian map $(f,\psi)$ gets sent to the decorated copartial map $(f,\psi,\id)$.
\item A total linear relation represented as a copartial map $X \xrightarrow{\ell} P \xleftarrow{p} Y$ gets sent ot the decorated copartial map $(f,\delta_0,p)$.
\end{enumerate}
\end{prop}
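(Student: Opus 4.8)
The plan is to handle both functors uniformly, since each is an instance of the decorated-cospan recipe: write $J_G : \gauss \to \gaussex$ for the assignment $(f,\psi)\mapsto(f,\psi,\id)$ and $J_L : \tlinrel \to \gaussex$ for the assignment sending a total linear relation, presented via a kernel representation $X \xrightarrow{\ell} P \xleftarrow{p} Y$, to $(\ell,\delta_0,p)$. First I would check well-definedness. For $J_G$ there is nothing to verify: $(f,\psi,\id)$ is literally a decorated copartial map. For $J_L$ one must see that the image is independent of the chosen kernel representation; if $X \xrightarrow{\ell} P \xleftarrow{p} Y$ and $X \xrightarrow{\ell'} P' \xleftarrow{p'} Y$ both represent $\fib L$, then by the discussion around Proposition~\ref{prop:trel_copar} they are isomorphic as cospans via a linear $\tau : P \to P'$, and since $\tau_*\delta_0 = \delta_0$ the decorated cospans $(\ell,\delta_0,p)$ and $(\ell',\delta_0,p')$ are identified in $\gaussex$.

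Next I would verify functoriality. Identities: the identity of $\gauss$ is $(\id,\delta_0)$ and the identity linear relation has kernel representation $\R^n \xrightarrow{\id} \R^n \xleftarrow{\id} \R^n$, both of which map to $(\id,\delta_0,\id)$, the identity of $\gaussex$. For composition I would specialize the pushout-and-convolution formula defining composition in $\gaussex$. When both right legs are identities --- the case relevant to $J_G$ --- the pushout of $Y \xleftarrow{\id} Y \xrightarrow{g} Z$ is trivial, so the composite cospan is $X \xrightarrow{gf} Z \xleftarrow{\id} Z$ with decoration $g_*\psi + \chi$, which is exactly the composition law of $\gauss$ (Proposition~\ref{prop:gauss_laws} supplies the pushforward and convolution formulas). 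For $J_L$, the pushout of two kernel representations is, by Proposition~\ref{prop:trel_copar}, a kernel representation of the composite total linear relation, and the decoration remains $\delta_0$ because $(i_1)_*\delta_0 + (i_2)_*\delta_0 = \delta_0$.

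Then I would confirm that both are Markov functors and that they are faithful. Since each functor is the identity on objects and sends $\cpy$, $\del$, and arbitrary linear maps to $(\Delta,\delta_0,\id)$, $(!,\delta_0,\id)$, $(f,\delta_0,\id)$ --- which by construction are the copy, delete, and linear morphisms of $\gaussex$ --- the copy-delete structure is preserved strictly, and naturality of $\del$ is inherited from $\gaussex$. Compatibility with the monoidal product follows because the tensor in all three categories is (a decoration of) the biproduct $\oplus$ on $\vect$ with Gaussian decorations combined via $\otimes$, so that the tensor of $(f_i,\psi_i,\id)$ is $(f_1\oplus f_2,\psi_1\otimes\psi_2,\id)$, matching the $\gauss$ formula, and similarly on the $\tlinrel$ side. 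For faithfulness of $J_G$: if $(f,\psi,\id) = (f',\psi',\id)$ in $\gaussex$, the identifying cospan isomorphism $\tau$ must commute with the right legs, forcing $\tau = \id$, hence $f = f'$ and $\psi = \tau_*\psi = \psi'$. For faithfulness of $J_L$: a cospan isomorphism $\tau$ between $(\ell,\delta_0,p)$ and $(\ell',\delta_0,p')$ satisfies $\ell(x) = p(y) \iff \tau\ell(x) = \tau p(y) \iff \ell'(x) = p'(y)$, so the two kernel representations define the same total linear relation.

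I expect the main obstacle to be matching the pushout-and-convolution formula that defines composition in $\gaussex$ with the native composition laws of the source categories: for $\gauss$ one must recognize that the pushout collapses to an ordinary composite precisely because the right legs are identities, so that the decorations combine by pushforward along $g$ followed by convolution; for $\tlinrel$ one must invoke Proposition~\ref{prop:trel_copar} to identify the pushout of kernel representations with relational composition while simultaneously checking that the $\delta_0$ decoration is stable under pushforward and convolution. Everything else is routine bookkeeping with the equivalence relation on decorated cospans.
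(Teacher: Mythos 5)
Your proof is correct and supplies exactly the routine verification the paper omits (the proposition is stated without proof, prefaced only by ``it is easy to see''), covering well-definedness, functoriality via the collapse of the pushout when the right legs are identities, preservation of the copy-delete structure, and faithfulness. The only point worth making fully explicit is that well-definedness and faithfulness of the $\tlinrel$ functor rest on the observation that a copartial kernel representation is automatically jointly epic (its right leg $p$ is already surjective), so the uniqueness-up-to-cospan-isomorphism of jointly surjective kernel representations noted before Proposition~\ref{prop:trel_copar} applies; your appeal to ``the discussion around Proposition~\ref{prop:trel_copar}'' implicitly uses this and is otherwise complete.
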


\noindent To give a morphism in $\gaussex(0,\R^n)$ is to give an extended Gaussian distribution on $\R^n$. Furthermore, every $\gaussex$ morphism can be written as a sum of a linear map and an extended Gaussian distribution in the style of \eqref{eq:linrel_add}.

\begin{prop}\label{prop:gaussex_state_rules}
The transformation rules of Definition~\ref{def:gaussex_state_rules} are valid in $\gaussex$. 
\end{prop}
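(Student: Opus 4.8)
The plan is to derive the three transformation rules of Definition~\ref{def:gaussex_state_rules} by unfolding the definition of composition and tensor in $\gaussex$ at the level of states, i.e. morphisms $0 \to \R^n$. An extended Gaussian distribution $\psi + \D$ corresponds to the decorated copartial map $(!, \psi, p) : 0 \to \R^n$ where $p : \R^n \twoheadrightarrow \R^n/\D$ is the quotient and $\psi \in \gauss(\R^n/\D)$; by Proposition~\ref{prop:kernel_rep} every extended Gaussian distribution arises this way, and this identification is compatible with the equivalence of decorated cospans. With this dictionary in hand, each of the three rules becomes a pushout/pushforward computation in $\vect$ decorated by the Gaussian laws of Proposition~\ref{prop:gauss_laws}.

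First I would treat the pushforward rule $M_*(\psi + \D) = M_*\psi + M[\D]$. Postcomposing the state $(!,\psi,p)$ with the deterministic morphism $(M, \N(0,0), \id)$ induced by the matrix $M : \R^n \to \R^\ell$ requires the pushout of $\R^n/\D \xleftarrow{p} \R^n \xrightarrow{M} \R^\ell$, which in $\vect$ is $\R^\ell/M[\D]$ with the evident quotient map $q$ and the induced map $\bar M : \R^n/\D \to \R^\ell/M[\D]$. The composite decoration is the pushforward $\bar M_* \psi$, and one checks that $(q, \bar M_*\psi, q)$ — read as a state — is exactly the restriction of $M_*(\iota_*\psi)$ to $\B_{M[\D]}(\R^\ell)$, using the pushforward law $M_*\N(\mu,\Sigma) = \N(M\mu, M\Sigma M^T)$ and the kernel-representation argument of Proposition~\ref{prop:kernel_rep} to reconcile the quotient picture with the measure on $\R^\ell$. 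The convolution rule $(\psi_1 + \D_1) + (\psi_2 + \D_2) = (\psi_1 + \psi_2) + (\D_1 + \D_2)$ then follows by writing addition as the morphism induced by $[\,\id\;\id\,] : \R^n \oplus \R^n \to \R^n$ and applying the pushforward rule to the tensor of the two states; here the relevant pushout of $\R^n/\D_1 \oplus \R^n/\D_2$ along the addition map lands in $\R^n/(\D_1 + \D_2)$, and the Gaussian convolution law $\N(\mu_1,\Sigma_1) + \N(\mu_2,\Sigma_2) = \N(\mu_1+\mu_2, \Sigma_1+\Sigma_2)$ supplies the decoration.

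For the tensor rule $(\psi_1 + \D_1) \otimes (\psi_2 + \D_2) = \psi_1 \otimes \psi_2 + \D_1 \times \D_2$, I would use that the monoidal product of the two states in $\gaussex$ is obtained by taking the biproduct $p_1 \oplus p_2 : \R^{n_1} \oplus \R^{n_2} \twoheadrightarrow \R^{n_1}/\D_1 \oplus \R^{n_2}/\D_2$ as the right leg of the cospan, decorated by $\psi_1 \otimes \psi_2$; since $(\R^{n_1} \oplus \R^{n_2})/(\D_1 \times \D_2) \cong \R^{n_1}/\D_1 \oplus \R^{n_2}/\D_2$, this is a kernel representation of $\psi_1 \otimes \psi_2 + (\D_1 \times \D_2)$, and the product law $\N(\mu_1,\Sigma_1) \otimes \N(\mu_2,\Sigma_2) = \N\big(\binom{\mu_1}{\mu_2}, \operatorname{diag}(\Sigma_1,\Sigma_2)\big)$ identifies the decoration. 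Throughout, I would also note that well-definedness — independence of the chosen kernel representation — is already guaranteed by the equivalence relation on decorated cospans together with the earlier proposition characterizing when two kernel representations coincide, so the rules are automatically well-defined. The main obstacle I anticipate is purely bookkeeping: carefully matching the abstract pushout-of-quotients computation in $\vect$ with the concrete measure-theoretic restriction to Borel cylinders, i.e. repeatedly invoking the correspondence of Proposition~\ref{prop:kernel_rep} in each of the three cases and checking that the induced isomorphisms of cospans are distribution-preserving. There is no deep difficulty, only the need to keep the quotient maps, sections, and pushforwards aligned.
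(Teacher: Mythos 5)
Your proof is correct, but it is organized differently from the paper's. The paper's argument is a reduction: by Proposition~\ref{prop:embedding}, $\gauss$ and $\tlinrel$ embed faithfully as Markov subcategories of $\gaussex$, every state $\psi+\D$ factors as the convolution of a pure Gaussian state and a pure subspace state (in the style of \eqref{eq:linrel_add}), and each transformation rule then follows from the corresponding, already-established rule in each subcategory separately (Proposition~\ref{prop:gauss_laws} for the Gaussian part, the pushout computation of Proposition~\ref{prop:trel_copar} for the fibre part), combined using functoriality and commutativity of convolution. You instead verify the rules by directly unfolding the decorated-cospan composition and computing the relevant pushouts in $\vect$: the pushout of $\R^n/\D \leftarrow \R^n \xrightarrow{M} \R^\ell$ being $\R^\ell/M[\D]$ for the pushforward rule, the biproduct of quotients for the tensor rule, and the reduction of convolution to pushforward along $[\,\id\ \id\,]$. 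The two routes share the same computational core, but yours makes explicit the pushout bookkeeping that the paper's one-line reduction leaves implicit, while the paper's route avoids redoing those computations by citing the subcategory results; your derivation of the convolution rule from the tensor and pushforward rules is a nice economy that the paper does not spell out. Two cosmetic points: the composite state in your pushforward computation should be written $(!,\bar M_*\psi,q)$ rather than $(q,\bar M_*\psi,q)$ (the left leg is the unique map out of $0$), and you implicitly use the componentwise tensor of decorated cospans, which the paper's definition of $\gaussex$ does not state explicitly but clearly intends.
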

\begin{proof}
Using the embeddings of Proposition~\ref{prop:embedding}, this can be reduced to the way subspaces and Gaussian distributions compose in the subcategories $\gauss$ and $\tlinrel$, individually. 
\end{proof}

\subsection{Example: String Diagrams}

We give an example of using the categorical structure of $\gaussex$ to create distributions from simple building blocks. The language of string diagrams lets us visualize this interconnection in an intuitive way.

\begin{exa}\label{ex:stringdiag}
We can describe the joint distribution $P_{VI} \in \gaussex(\R^2)$ of the noisy resistor through the following process: Initialize $I$ to take any value in $\R$, then let $V = RI + \epsilon$ where $\epsilon \sim \N(0,\sigma^2)$. In statistical notation, we would define this joint distribution as
\begin{align*}
I &\sim \R \\
V &\sim \N(RI,\sigma^2)
\end{align*}
Using string diagrams, we can describe the distribution $P_{VI}$ as the composite 
\[ \tikzfig{resistor} \]
where $
	\tikzset{baseline=-0.65ex}
	\begin{tikzpicture}[scale=0.1]
	\begin{pgfonlayer}{nodelayer}
		\node [style=wn] (1) at (0, 0) {};
		\node [style=none] (3) at (3, 0) {};
		\node [style=none] (4) at (-3, 1) {};
		\node [style=none] (5) at (-3, -1) {};
	\end{pgfonlayer}
	\begin{pgfonlayer}{edgelayer}
		\draw (1) to (3.center);
		\draw [in=120, out=0] (4.center) to (1);
		\draw [in=-120, out=0] (5.center) to (1);
	\end{pgfonlayer}
\end{tikzpicture}}
$ denotes addition as before, and $
	\tikzset{baseline=-0.65ex}
	}
$ denotes initializing a completely unspecified variable (i.e. the relation $\R$). We have annotated the wires with labels $V,I$ corresponding to the output variables for easier reading, but these labels have no formal meaning.

We can now manipulate these string diagrams to compute other quantities of interest. For example, to find the marginal distribution $P_I$, we discard the $V$-wire, i.e. compose with the discard map $
	\tikzset{baseline=-0.65ex}
	}
$, and simplify using the laws of Markov categories. 
\[ \tikzfig{resistor_I} \]
We have recovered the fact that the variable $I$ has distribution $\R$. In order to compute the pushforward $f_*P_{IV}$ under the map $f(V,I) = V-RI$, we reason
\[ \hspace*{-4pt} \tikzfig{resistor_f} \]
by simplifying the following composite of linear functions
\[ \tikzfig{resistor_f2} \]
The equations used here are associativity of addition $
	\tikzset{baseline=-0.65ex}
	\begin{tikzpicture}[scale=0.1]
	\begin{pgfonlayer}{nodelayer}
		\node [style=wn] (1) at (0, 0) {};
		\node [style=none] (3) at (3, 0) {};
		\node [style=none] (4) at (-3, 1) {};
		\node [style=none] (5) at (-3, -1) {};
	\end{pgfonlayer}
	\begin{pgfonlayer}{edgelayer}
		\draw (1) to (3.center);
		\draw [in=120, out=0] (4.center) to (1);
		\draw [in=-120, out=0] (5.center) to (1);
	\end{pgfonlayer}
\end{tikzpicture}}
$, $rx + (-r)x = 0$, and unitality of zero. This style of reasoning is known as graphical linear algebra, and has been axiomatized soundly and completely \cite{graphical_la}. We return to the question of axiomatizing extended Gaussian distributions in Section~\ref{sec:presentations}.
\end{exa}

\section{Interconnection}\label{sec:interconnection}

After discussing extended Gaussian maps as abstract categorical entities, we will now tie things back to Willems's measure-theoretic approach to composition. 

We have seen that convolution (addition) of extended Gaussian distributions adds their fibres, i.e. increases uncertainty. There is a dual operation to this called \emph{interconnection} by Willems, which decreases uncertainty, i.e. refines the $\sigma$-algebra. We demonstrate this with an example: Consider again the noisy resistor
\[ V-RI = \epsilon \]
with $\epsilon \sim \N(0,\sigma^2)$. If we externally supply a fixed voltage $V_0$, we can solve for $I$ and obtain
\begin{equation}
	I = \frac 1 R (V_0-\epsilon) \label{eq:interconnection}
\end{equation} 

\noindent After this interconnection, $I$ becomes a classical (Gaussian) random variable with distribution $\N(\frac{V_0}{R},\frac{\sigma^2}{R^2})$. Interconnecting the systems has provided \emph{more} information and hence refined the $\sigma$-algebra associated to $I$. The string-diagrammatic way of representing interconnection is using a connective $
	\tikzset{baseline=-0.65ex}
	\begin{tikzpicture}[scale=0.1]
	\begin{pgfonlayer}{nodelayer}
		\node [style=none] (0) at (0, 2) {};
		\node [style=bn] (1) at (2, 0) {};
		\node [style=none] (2) at (0, -2) {};
		\node [style=none] (3) at (5, 0) {};
		\node [style=none] (4) at (-2, 2) {};
		\node [style=none] (5) at (-2, -2) {};
	\end{pgfonlayer}
	\begin{pgfonlayer}{edgelayer}
		\draw (0.center) to (4.center);
		\draw [in=90, out=0] (0.center) to (1);
		\draw (5.center) to (2.center);
		\draw [in=-90, out=0] (2.center) to (1);
		\draw (1) to (3.center);
	\end{pgfonlayer}
\end{tikzpicture}}
$. 

\[ \tikzfig{interconnection} \]

\noindent In statistical terms, this connective should represent conditioning on a random variable. But conditioning in categorical setting is notoriously a difficult subject, which has for example been explored in \cite{2021compositional,dilavore2023evidential}. The morphism $
	\tikzset{baseline=-0.65ex}
	\begin{tikzpicture}[scale=0.1]
	\begin{pgfonlayer}{nodelayer}
		\node [style=none] (0) at (0, 2) {};
		\node [style=bn] (1) at (2, 0) {};
		\node [style=none] (2) at (0, -2) {};
		\node [style=none] (3) at (5, 0) {};
		\node [style=none] (4) at (-2, 2) {};
		\node [style=none] (5) at (-2, -2) {};
	\end{pgfonlayer}
	\begin{pgfonlayer}{edgelayer}
		\draw (0.center) to (4.center);
		\draw [in=90, out=0] (0.center) to (1);
		\draw (5.center) to (2.center);
		\draw [in=-90, out=0] (2.center) to (1);
		\draw (1) to (3.center);
	\end{pgfonlayer}
\end{tikzpicture}}
$ is part of the structure of hypergraph categories, but is generally not supported in Markov categories like $\gaussex$. 

In \cite{willems:oss}, Willems discusses a special case of interconnection of Gaussian systems where the information provided by the component systems is in some sense \emph{complementary}, and can thus be interconnected without requiring conditioning. We recall this notion of complementary interconnection, and show that it agrees with categorical composition of extended Gaussian maps. We will then return to the issue of general, non-complementary interconnections via conditioning in Section~\ref{sec:conditioning}.

\subsection{Complementarity}\label{sec:complementarity}

\begin{defiC}[{\cite{willems:oss}}]
Two probabilistic systems $(X,\E_1,P_1), (X,\E_2,P_2)$ on the same underlying set are called \emph{complementary} if for all $E_1,E_1' \in \E_1$ and $E_2, E_2' \in \E_2$ we have
\[ E_1 \cap E_2 = E_1' \cap E_2' \Rightarrow E_1 = E_1' \text{ and } E_2 = E_2' \]
In this case, their interconnection is the system $(X,\E,P)$ where $\E$ is the $\sigma$-algebra generated by the intersections $E_1 \cap E_2$ for $E_1 \in \E_1, E_2 \in \E_2$, and we define a probability measure by
\[ P(E_1 \cap E_2) = P_1(E_1)P_2(E_2)\]
This is well-defined because of the complementarity condition.
\end{defiC}

\begin{exa}\label{ex:weakening}
A typical instance is the construction of the product distribution. Given systems $(X,\E,P_X)$ and $(Y,\F,P_Y)$, we can weaken them to use the same underlying set $X \times Y$; namely we define
\[ \E_1 = \{ E \times Y \s E \in \E \}, \E_2 = \{ X \times F \s F \in \F \}\]
and $P_1(E \times Y) = P_X(E)$, $P_2(X \times F) = P_2(F)$. Then the $\sigma$-algebras $\E_1$ and $\E_2$ are complementary, and the intersections are precisely the measurable rectangles $E \times F$ with $E \in \E, F \in \F$. 
\end{exa}

\noindent We can thus see interconnection is a generalization of the product distribution, which combines the information provided by the two distributions $P_1,P_2$ in an independent fashion.

\begin{propC}[{\cite[Section~VII]{willems:oss}}]\label{prop:linear_complementary}
Two linear systems on $\R^n$ with fibres $\D_1$ and $\D_2$ are complementary if and only if $\D_1 + \D_2 = \R^n$. In this case, the interconnected system has fibre $\D_1 \cap \D_2$. Interconnections of Gaussian systems are Gaussian. 
\end{propC}

\begin{exa}
In the noisy resistor example, we are interconnecting a system on $\R^2$ with fibre $\D = \{ (V,I) : V=RI\}$ with the deterministic system corresponding to $V=V_0$; its fibre is $\fib E = 0 \times \R$, because no information about $I$ is given. For $R \neq 0$, we have that $\D + \fib E = \R^2$, that is the systems are indeed complementary. The fibre of the interconnected system is $\D \cap \fib E = 0$, hence the system has become classical after interconnection (see \eqref{eq:interconnection}).
\end{exa}

\noindent Proposition~\ref{prop:linear_complementary} explains Willems' terminology \emph{open} and \emph{closed}: Open systems lack some information, which we can fill in via interconnection, thereby decreasing the fibre. Closed systems, whose fibre already vanishes, are complementary only to trivial systems, and hence cannot be refined by interconnecting with anything anymore. 

\subsection{Composition as Interconnection}\label{sec:composition}

We now relate Willems' composition by interconnection to categorical composition in $\gaussex$. Gaussian systems are a formalism of states (distributions), while $\gaussex$ features morphisms $f: X \to Y$ with dedicated inputs and outputs. The following well-known trick (e.g. \cite{abramsky2009categorical}) lets us turn a morphisms of a suitable categories into a state:

\begin{defi}
The \emph{name} $\lceil f \rceil : I \to X \otimes Y$ of a $\gaussex$ morphism $f : X \to Y$ is given by the following composite
\[ \tikzfig{name} \]
where the unit $
	\tikzset{baseline=-0.65ex}
	}
$ denotes uninformative distribution $u_X$ on $X$. We can interpret this state as a Gaussian system: Concretely, for an extended Gaussian morphism $(f,\psi,p) \in \gaussex(\R^m,\R^n)$ its name is the Gaussian system $(\R^m \times \R^n, \E, P)$ with events of the form $E_A = \{ (x,y) : p(y) - f(x) \in A \}$ with $A$ Borel, and distribution $P(E_A) = \psi(A)$. The fibre of the system is $\D=\{ (x,y) : f(x) = p(y)\}$.
\end{defi} 

\noindent The idea is that morphisms can be composed by interconnecting their names. The precise statement goes as follows:

\begin{thm}\label{thm:interconnection_formula}
Let $(f,\psi,p) : X \to Y$ and $(g,\chi,q) : Y \to Z$ be composable morphisms in $\gaussex$. We form the \emph{name} Gaussian systems $(X \times Y, \E,P)$ and $(Y \times Z, \F,Q)$ and weaken them to systems $(X \times Y \times Z, \E',P')$ and $(X \times Y \times Z, \F',Q')$ defined on the joint space $X \times Y \times Z$ (see Example~\ref{ex:weakening}). The resulting systems have fibres
\begin{align*}
\D_{P'} &= \{ (x,y,z) : f(x) = p(y) \} \\
\D_{Q'} &= \{ (x,y,z) : g(y) = q(z) \}
\end{align*}
We claim that these systems are complementary, and their interconnection is equal to the name of the composite $f ; \cpy_Y ; (\id \otimes g)$. In particular, we can obtain the composite $f;g$ by interconnecting and then marginalizing out the middle wire. 
\end{thm}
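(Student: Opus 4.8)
The plan is to reduce the statement to two separate computations and then to glue them: first, confirm the announced formulas for the fibres $\D_{P'}$ and $\D_{Q'}$; second, check that these two $\sigma$-algebras are complementary in Willems' sense (using Proposition~\ref{prop:linear_complementary}); third, compute the interconnected system explicitly and match it with the name of the composite $f;\cpy_Y;(\id\otimes g)$, whose description is already spelled out in the definition of the name just above the theorem.

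First I would unwind the weakening construction of Example~\ref{ex:weakening}. The name of $(f,\psi,p)$ is the Gaussian system on $X\times Y$ with events $E_A=\{(x,y):p(y)-f(x)\in A\}$, fibre $\{(x,y):f(x)=p(y)\}$, and $P(E_A)=\psi(A)$. Weakening along the projection $X\times Y\times Z\to X\times Y$ pulls back the $\sigma$-algebra; since the fibre is $\ker$ of the map defining the cylinders, and pulling back along a projection just adds the extra coordinate freely, the weakened fibre is $\{(x,y,z):f(x)=p(y)\}$ as claimed, and likewise $\D_{Q'}=\{(x,y,z):g(y)=q(z)\}$. This step is essentially bookkeeping about how $\wquot$ and pullback $\sigma$-algebras interact, which is exactly the content recalled before Proposition~\ref{prop:kernel_rep}.

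Next I would verify complementarity. By Proposition~\ref{prop:linear_complementary}, two linear systems on the same $\R^N$ are complementary iff the sum of their fibres is everything. Here $\D_{P'}+\D_{Q'}$ must equal $X\times Y\times Z$: given any $(x,y,z)$, I want to write it as a sum of a vector with $f(x')=p(y')$ and a vector with $g(y'')=q(z'')$. Since $p$ and $q$ are surjective (they are the right legs of copartial maps), I can choose $y'\in Y$ and $z''\in Z$ solving $p(y')=f(x)$ and $q(z'')=g(y-y')$ — wait, more carefully: split $y=y'+y''$, pick $y'$ so $f(x)=p(y')$ (possible by surjectivity of $p$), pick $z''$ so $g(y'')=q(z'')$ (possible by surjectivity of $q$), and set the first summand $(x,y',0)$, second summand $(0,y'',z)$ after adjusting $z=z'+z''$ similarly. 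Surjectivity of the legs is precisely what makes the copartial-map hypothesis do work here, and I expect this to be the main obstacle: getting the decomposition to respect all three coordinates simultaneously requires being careful about which coordinates are "free" in each fibre, but it should go through because $p,q$ are epic.

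Finally I would compute the interconnection. The interconnected $\sigma$-algebra is generated by intersections $E_A\cap F_B$ where $E_A=\{f(x)=p(y)+a\}$-type cylinders and $F_B$ the analogous cylinders for the second system, with probability $P_1(E_A)P_2(F_B)=\psi(A)\chi(B)$. Translating through the pushout square in the definition of composition in $\gaussex$: the pushout $W$ is exactly $\R^k\oplus\R^\ell$ quotiented by the identifications forced by $g$ along $Y$, and the decoration is $(i_1)_*\psi + (i_2)_*\chi$ (a convolution of independent pieces). The events of the name of $f;\cpy_Y;(\id\otimes g)$ are cylinders in $(i_1 f, \xi, i_2 q)$-coordinates on $X\times Y\times Z$, and since $\psi\otimes\chi$ is the product measure, its pushforward along the pairing into $W$ matches the product assignment $\psi(A)\chi(B)$ on rectangles and extends uniquely — so both sides are the unique Gaussian system with that fibre and that measure on cylinders. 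Matching the fibres is immediate: the fibre of the name of the composite is $\{(x,y,z):i_1f(x)=i_2q(z)\text{ in }W\}$, which unwinds to $f(x)=p(y)$ and $g(y)=q(z)$ jointly, i.e. $\D_{P'}\cap\D_{Q'}$ — consistent with Proposition~\ref{prop:linear_complementary}'s prediction that the interconnected fibre is the intersection. The last clause about recovering $f;g$ by marginalizing the middle wire then follows from the definition of composition in $\gaussex$, since $f;\cpy_Y;(\id\otimes g)$ composed with $\del_Y$ on the middle output is $f;g$ by discardability and the comonoid laws.
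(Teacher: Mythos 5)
Your overall strategy matches the paper's: identify the fibres, verify complementarity via Proposition~\ref{prop:linear_complementary} using surjectivity of the right legs of the copartial maps, and then match the interconnected system against the explicit cospan/pushout representation of $f ; \cpy_Y ; (\id \otimes g)$. Your complementarity argument is correct (the paper's version is slightly leaner: it writes $(x,y,z) = (0,0,z-z') + (x,y,z')$ and only needs surjectivity of $q$, since $(0,0,w)$ lies in $\D_{P'}$ for free), and your identification of the interconnected fibre with $\D_{P'} \cap \D_{Q'}$ is the right computation.

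Two steps are thinner than they should be. First, when you say the product assignment on rectangles ``extends uniquely,'' that settles the equality of the \emph{measures}, but equality of the two \emph{systems} also requires equality of $\sigma$-algebras: Willems' interconnection is generated by intersections $E_A \cap F_B$, i.e.\ by preimages of Borel \emph{rectangles} under $(x,y,z) \mapsto (p(y)-f(x),\, q(z)-g(y))$, whereas the name of the composite has preimages of \emph{all} Borel sets $W$ of the product. You need a $\pi$--$\lambda$ (monotone class) argument to show the generated $\sigma$-algebra already contains all such preimages; the paper runs exactly this argument. Second, the final clause is not purely a comonoid-law computation. Willems' ``eliminating the middle variable'' is a measure-theoretic operation (restrict to events of the form $\{(X,Z)\in H\}$), and you must show this agrees with composing with $\del_Y$ in $\gaussex$. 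The nontrivial direction is that every event of the interconnection depending only on $(x,z)$ actually comes from a Borel set $T$ in the pushout apex $\R^k$: this uses that the corresponding $W \subseteq \R^i \times \R^j$ is invariant under translation by $\im\begin{pmatrix} B \\ -C\end{pmatrix}$, together with the pushout's exactness condition $\ker\begin{pmatrix}M & N\end{pmatrix} = \im\begin{pmatrix} B \\ -C\end{pmatrix}$, plus independence of the two noise variables to get the measure right. The paper devotes a separate theorem to this; your appeal to discardability presupposes the correspondence it is meant to establish.
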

\begin{proof}
In Appendix~\ref{app:interconnection}.
\end{proof}

\noindent Note that in a hypergraph category where interconnection is a primitive construct, Theorem~\ref{thm:interconnection_formula} is provable from the axioms and can be rendered as follows graphically. 

\[ \tikzfig{name_comp} \]

\noindent Note the weakened name system in the dashed boxes, and their interconnection using $
	\tikzset{baseline=-0.65ex}
	\begin{tikzpicture}[scale=0.1]
	\begin{pgfonlayer}{nodelayer}
		\node [style=none] (0) at (0, 2) {};
		\node [style=bn] (1) at (2, 0) {};
		\node [style=none] (2) at (0, -2) {};
		\node [style=none] (3) at (5, 0) {};
		\node [style=none] (4) at (-2, 2) {};
		\node [style=none] (5) at (-2, -2) {};
	\end{pgfonlayer}
	\begin{pgfonlayer}{edgelayer}
		\draw (0.center) to (4.center);
		\draw [in=90, out=0] (0.center) to (1);
		\draw (5.center) to (2.center);
		\draw [in=-90, out=0] (2.center) to (1);
		\draw (1) to (3.center);
	\end{pgfonlayer}
\end{tikzpicture}}
$. $\gaussex$ is \emph{not} a hypergraph category, but we can still interconnect systems by hand if we check (as in Theorem~\ref{thm:interconnection_formula}) that the relevant complementarity condition holds. 

It is interesting that our use of names is reminiscent of the use of \emph{copy-composition} of \cite{smithe2024copy}, the composition of couplings and \emph{conditional products} (due to \cite{dawid1999conditional}, see also \cite[Section~12]{fritz}, \cite{flori2013compositories}). Lastly, we give an outlook on non-complementary interconnections.

\subsection{Beyond Complementarity}\label{sec:conditioning}

In order to define non-complementary interconnections, we need to develop a theory of conditioning. For example, consider two Gaussian distribution $\psi_1, \psi_2$ on $\R$. What should the following interconnection $\psi$ be?

\begin{equation} \tikzfig{interconnection_cond} \label{eq:interconnection_cond} \end{equation}

\noindent The naive formula $\psi(A) = \psi_1(A)\psi_2(A)$ does not define a probability measure. In statistical terms, we think of variables $X_1 \sim \psi_1, X_2 \sim \psi_2$, and condition them on equality. For example, if $X_1, X_2 \sim \N(0,1)$, then conditioned on equality we obtain $X_1 \sim \N(0,0.5)$. Conditioning on equality is generally a fragile notion (this has been termed \emph{Borel's paradox}, see e.g. \cite{shan_ramsey,2021compositional,jacobs:paradoxes}). For Gaussians, the situation is unambiguous and has been worked out in \cite{stein2024graphical}. \\

\noindent The first step in any theory of conditioning is the axiomatic notion of \emph{conditional} in a Markov category. This is in essence a powerful way to restructure dataflow in a Markov category. 

\begin{defiC}[{\cite[Section~11]{fritz}}]\label{def:conditional}
A Markov category $\C$ \emph{has conditionals} if for every morphism $f : A \to X \otimes Y$ there exists $f|_X : A \otimes X \to Y$ such that
\[ \tikzfig{cond_param} \]
\end{defiC}

\begin{prop}
$\gaussex$ has conditionals.
\end{prop}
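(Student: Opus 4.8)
The plan is to exhibit conditionals explicitly using the kernel/decorated-cospan representation, leveraging the fact that $\gaussex$ is built from $\gauss$ (which has conditionals) and $\tlinrel$ (which, being a hypergraph category, trivially supports all dataflow rewiring). The key observation is that every $\gaussex$ morphism factors as a linear relation followed by a Gaussian map, or more precisely that a morphism $f : A \to X \otimes Y$ can be written (up to the decorated-cospan data) as "introduce some nondeterminism, then apply a genuine Gaussian channel". Conditionals for the Gaussian part are known, and the relational part can be handled by the Frobenius structure on $\tlinrel$.

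First I would recall that a morphism $f : A \to X \otimes Y$ in $\gaussex$ is a decorated copartial map $(f_0, \psi, p)$ with $A \xrightarrow{f_0} \R^k \xleftarrow{p} X \times Y$ and $\psi \in \gauss(\R^k)$. Using Proposition~\ref{prop:embedding} and the normal-form decomposition in the style of \eqref{eq:linrel_add}, I would decompose $f$ into its "deterministic linear relation" skeleton and its "honest Gaussian noise" part. Concretely, writing the underlying total linear relation $\fib L \subseteq A \times (X \otimes Y)$ as $\fib L(a) = \ell(a) + \D$ and then convolving with the Gaussian decoration, the task reduces to constructing a conditional for (i) a pure Gaussian map $A \to X \otimes Y$ in $\gauss$, and (ii) a pure total linear relation $A \to X \otimes Y$ in $\tlinrel$. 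For (i), $\gauss$ has conditionals — this is standard Gaussian conditioning (Schur complements), and is recorded in \cite[Section~11]{fritz}. For (ii), since $\tlinrel$ is a hypergraph category, the conditional $f|_X$ can be built directly from the comultiplication and multiplication maps; any splitting of $X \otimes Y$ off the output is definable. I would then argue that these two conditionals combine: the conditional of the convolution is obtained by tensoring/composing the two conditionals appropriately, using that convolution in $\gaussex$ is realized by the additive (white-dot) structure which commutes with copy in the required way.

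An alternative, cleaner route is to observe that $\gaussex$ is (equivalent to) a category of decorated cospans over $\vect$ with $\gauss$-decorations, and invoke a general principle that such a construction inherits conditionals whenever the decorating Markov category does and the base has biproducts — but since the paper hasn't set up that machinery, I would keep the concrete argument. A third option, perhaps the slickest, is to use the explicit description of names from Theorem~\ref{thm:interconnection_formula}: conditioning $f : A \to X \otimes Y$ on $X$ amounts to interconnecting the name of $f$ with a suitable system on $X$; but this again relies on non-complementary interconnection, which is exactly what the paper postpones, so it would be circular here.

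**The main obstacle** I anticipate is checking well-definedness across the equivalence classes of decorated cospans and verifying that the candidate $f|_X$ genuinely satisfies the defining equation of Definition~\ref{def:conditional} as an identity of decorated cospans — i.e. that the pushout/decoration bookkeeping on both sides of $\tikzfig{cond_param}$ matches. The Gaussian-conditioning formula introduces a linear-map dependence on the conditioned variable $X$, and one must confirm that the fibre (kernel of the right cospan leg) transforms correctly under this, so that no spurious nondeterminism is created or destroyed. Concretely, the degenerate cases — where the covariance of the relevant marginal is singular, so the Gaussian conditional is itself only defined up to the fibre — are where the relational and probabilistic parts genuinely interact, and that interaction is precisely what $\gaussex$ was designed to handle; making that interaction rigorous is the heart of the proof.
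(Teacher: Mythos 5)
Your strategy is the same as the paper's: observe that both $\gauss$ and $\tlinrel$ have conditionals and assemble a conditional for a general $\gaussex$ morphism from these two special cases via the decomposition of a morphism into a linear part plus extended Gaussian noise; the paper itself only states this reduction and defers the detailed bookkeeping to the CALCO predecessor \cite{stein2023category}, and your discussion of the "main obstacle" correctly identifies where that bookkeeping is nontrivial. One step of your justification is wrong, however: $\tlinrel$ is \emph{not} a hypergraph category. The multiplication $\mult_X = \{((x_1,x_2),y) \s x_1 = x_2 = y\}$ is not a total relation (a pair with $x_1 \neq x_2$ has no image), so it is not a morphism of $\tlinrel$; indeed the whole point of Sections~\ref{sec:interconnection} and \ref{sec:conditioning} is that $\gaussex$ and its subcategory $\tlinrel$ lack this connective, which is why interconnection/conditioning requires separate treatment. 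The fix is either to construct conditionals in $\tlinrel$ directly from the normal form $\fib L(a) = \ell(a) + \D$ (projecting the fibre appropriately and checking totality of the result), or to perform the Frobenius rewiring in the ambient hypergraph category $\linrel$ and then verify that the resulting relation is total — totality is exactly the condition that does not come for free from the hypergraph structure. With that repair, your argument lines up with the paper's intended proof.
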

\begin{proof}
Both $\gauss$ and $\tlinrel$ have conditionals, and conditionals in $\gaussex$ can be constructed from to these two special cases. The full proof is given in \cite{stein2023category}.
\end{proof}

\noindent We give an example to show how conditionals relate to interconnection: The noisy resistor was described by the joint distribution $P = P_{VI}$ given in the following form 
\begin{align*}
I &\sim \R \\
V &\sim \N(RI,\sigma^2)
\end{align*}
In order to condition on $V$, Definition~\ref{def:conditional} asks us to rewrite (or factorize) this joint distribution distribution in a way that samples $V$ ``first'', and chooses $I$ depending on $V$. Using the reasoning of \eqref{eq:interconnection}, the following is a valid conditional factorization
\begin{align*}
V &\sim \R \\
I &\sim \N(V/R,\sigma^2/R^2)
\end{align*}
We can render this factorization in string diagrams,
\[ \tikzfig{resistor_cond} \]
On the left hand side (see Example~\ref{ex:stringdiag}), the $V$-wire is computed depending on $I$, while on the right hand side this process is inverted, and $V$ is sampled first. The dashed boxes indicate the conditional morphisms $P|_I$ and $P|_V$ respectively.

\pagebreak

\section{Related Work and Wider Context}\label{sec:connections}

In this section, we give some wider theoretical context for our construction. In \ref{sec:cmp} we give a brief overview over categorical approaches to probability. In \ref{sec:wquot}, we systematically address the role of coarse $\sigma$-algebras and the question of equivalence of systems. \\

Sections~\ref{sec:duality}~and~\ref{sec:presentations} portray different perspectives on extended Gaussians from recent follow-up work \cite{stein2023towards,stein2024graphical}. Those include duality considerations involving convex analysis and quadratic functions, and an algebraic presentation of the category $\gaussex$ in the style of Graphical Linear Algebra. We conclude the section by discussing a prototype implementation in \ref{sec:implementation}.

\subsection{Categorical Models of Probability Theory}\label{sec:cmp}

Markov categories (and related formalisms such as copy-delete and hypergraph categories) are one widely used axiomatization of what it means to be a categorical model of probability. They focus on the monoidal structure of stochastic computation (parallel vs sequential composition, correlation vs independence). We have used the language of Markov categories here because our notion of extended Gaussian morphisms exhibits this kind of structure and satisfies the relevant axioms. This type of categorical model also fits well with call-by-value semantics of probabilistic programming languages \cite{staton:commutative_semantics,stein2021structural}.

There are other approaches to categorical probability with different goals in mind. We give a brief and non-exhaustive overview, and comment on the extent that they might relate to open stochastic systems.

Probability theory exhibits rich dualities between measure-theoretic and functional-analytic structures such as C*-algebras or von Neumann algebras (known as Gelfand duality). This approach is the basis for generalizations of probability and topology such as noncommutative geometry, free and quantum probability, and has seen extensive categorical treatment (e.g. \cite{furber2015kleisli,pavlov2022gelfand}). Extended Gaussians fit well into this picture, and we will mention some of their duality theory in the remainder of this section. 

Effectus theory \cite{cho2015introduction} is built around properties of the coproduct $+$ and a calculus of effects rather than the tensor $\otimes$ of stochastic computation, and encompasses probabilistic as well as quantum computation. It has a duality theory in terms of effect algebras generalizing the previously mentioned dualities. However, the categories $\gauss$ and $\gaussex$ don't have coproducts and therefore fail to be effectuses. Similarly, the study of probability monads (e.g. \cite{giry,avery2016codensity,fritz2017probability}) seems too general for our fairly specific setup, and combining monads for probability and nondeterminism is tricky as discussed in the introduction.

Other areas in categorical probability theory are interested in a coalgebraic view of systems \cite{sokolova2011probabilistic}, higher-order probabilistic computation such as quasi-Borel spaces \cite{heunen:qbs,omegaqbs}, probability sheaves and separation logic \cite{simpson-sheaves,simpson2024equivalence,li2024nominal}, linear logic \cite{dahlqvist:roban,pcoh}, or lenses and fibrational structure \cite{smithe2024copy,smithe:bayesian,smithe2023approximate}. Connections with our work remain to be explored.

\subsection{\texorpdfstring{$\sigma$}{σ}-algebras and Weak Quotients}\label{sec:wquot}

Intuitively, a Gaussian system on $\R^n$ with fibre $\D$ is given by a Gaussian distribution on a quotient vector space $\R^n/\D$. The quotient space is isomorphic to $\R^k$ where $k=n-\dim(\D)$, so we could explain a Borel structure on the quotient and talk about Gaussian distributions directly. However, Willems' definition uses the space $\R^n \wquot \D$, which is $\R^n$ equipped with the coarse $\sigma$-algebra of Borel cylinders $\B_{\D}(\R^n)$. What is the relationship between these two kinds of quotient? \\

\noindent We have a measurable projection $p : \R^n \wquot D \to \R^n/\D$. This map fails to have an inverse in the category of measurable spaces $\meas$, because on underlying sets, $\R^n \to \R^n/\D$ is not injective. The induced channel $\lift p : \R^n \wquot \D \chanto \R^n/\D$ on the other hand is invertible! We write its inverse as  $p^{\sim 1} : \R^n/\D \chanto \R^n \wquot \D$, and define it as follows:
\begin{equation} p^{\sim 1} : (\R^n/\D) \times \B_\D(\R^n) \to [0,1], \quad p^{\sim 1}(A|[x]) = [x \in A] \label{eq:def_pinv} \end{equation}
Note that the value of $p^{\sim 1}(A|[x])$ is well-defined precisely because we only evaluate it on Borel cylinders parallel to $\D$. \\

\noindent That is, in the category $\stoch$, it does not matter whether we work with $\R^n/\D$ or $\R^n\wquot \D$. On the level of measurable functions however, there is a difference. This is reminiscent of point-free topology, where there are maps $\mathcal O_Y \to \mathcal O_X$ between frames that do not come from any continuous map between topological spaces $X \to Y$. We call the construction $\R^n \wquot \D$ a \emph{weak quotient} in analogy to this, following the discussion of Moss and Perrone in \cite[Appendix A]{moss_perrone_2023}. Our discussion fits their framework, with $\B_\D(\R^n)$ being an instance of an \emph{invariant $\sigma$-algebra} for the translations in $\D$. \\

\noindent The difference between $\meas$ and $\stoch$ also concerns a problem which has remained largely implicit in Willems: \emph{What is the correct notion of equivalence of systems?}

If we work with channels, we work point-free and need not bother with weak quotients. On the other hand, the distinction between open and closed system is not stable under isomorphism in $\stoch$. The space $\R^n \wquot \D$ is open, but isomorphic to $\R^n/\D$, which is Borel (closed). We will now recall some of the categorical structure for isomorphisms in $\stoch$: 

A channel $\kappa : (X,\E) \chanto (Y,\F)$ is called \emph{deterministic} if it is 0/1-valued, i.e. $\kappa(F|x) \in \{0,1\}$ for all $x \in X, F \in \F$. A channel $\kappa$ is deterministic if and only if it is deterministic in the sense of Definition~\ref{def:determinism}, i.e. commutes with copying \cite{fritz}.	It is easy to see that composition and tensor of deterministic channels are deterministic, and deterministic channels thus form a Markov subcategory $\stoch_\det$ of $\stoch$. $\stoch_\det$ is \emph{cartesian monoidal}, that is the tensor forms a categorical product. 

We are interested in determinism because of the following key property: 

\begin{propC}[{\cite[11.28]{fritz}}]
Every isomorphism in $\stoch$ is  deterministic. 
\end{propC}

\noindent Every channel $\lift f$ arising from a measurable function is deterministic, but the converse is false, as the morphism $p^{\sim 1}$ in \eqref{eq:def_pinv} shows. Instead, we use the following definition:

\begin{defiC}[{\cite{moss2022probability}}]
A measurable space $(Y,\F)$ is called \emph{sober} if every deterministic channel $(X,\E) \chanto (Y,\F)$ comes from a unique measurable function $(X,\E) \to (Y,\F)$.
\end{defiC}
\noindent A sufficient condition for sobriety is that $\E$ separate points and be countably generated \cite[10.4]{fritz}. In particular, all Borel spaces are sober. The space $\R^n \wquot \D$ is sober if and only if $\D = 0$. \\

\noindent Every function $f : X \to Y$ induces a pushforward and pullback of $\sigma$-algebras; namely for $\sigma$-algebras $\E,\F$ on $X,Y$ respectively, we define
\begin{align*}
	f_*\E &= \{ F \subseteq Y \s f^{-1}(F) \in E \} \\
	f^{-1}[\F] &= \{ f^{-1}(F) \s F \in F \}
\end{align*}
For any $\E,\F$ we have the adjoint relationship $f^{-1}[\F] \subseteq \E \Leftrightarrow \F \subseteq f_*\E$, and $f$ is measurable $(X,\E) \to (Y,\F)$ precisely when the equivalent conditions hold. In this case, the sub-$\sigma$-algebra $f^{-1}[\F] \subseteq \E$ is often called the $\sigma$-algebra generated by $f$. 

\begin{propC}[{\cite{ensarguet2024categorical}}]
	Let $f : (X,\E) \to (Y,\F)$ be any measurable, surjective function and $f^{-1}[\F]$ the induced $\sigma$-algebra on $X$. Then the induced channel $\tilde f : (X,f^{-1}[\F]) \leadsto (Y,\F)$ given by $\tilde f(F|x) = [f(x) \in F]$ is an isomorphism in $\stoch$. 
\end{propC}

\noindent This proposition illuminates the construction of the pullback measure: 

\begin{enumerate}
\item For every inclusion of $\sigma$-algebras $E' \subseteq \E$, the identity defines a measurable function $\pi : (X,\E) \to (X,\E')$. This \emph{coarse-graining} is an epimorphism in both $\meas$ and $\stoch$.
\item Now, given any measurable surjection $p : (X,\E) \twoheadrightarrow (Y,\F)$, we can identify $(Y,\F)$ with $(X,p^{-1}[\E])$. Given a measure $\psi$ on $(Y,\F)$, taking the pullback measure $p^{-1}(\psi)$ simply corresponds to composing with the isomorphism. We can capture the whole situation by a commuting diagram in $\stoch$
\[\begin{tikzcd}
	& {(Y,\F)} \\
	I && {(X,\E)} \\
	& {(X,p^{-1}[\F])}
	\arrow["p"', two heads, from=2-3, to=1-2]
	\arrow["\pi", two heads, from=2-3, to=3-2]
	\arrow["\psi", from=2-1, to=1-2]
	\arrow["{p^{-1}(\psi)}"', from=2-1, to=3-2]
	\arrow["\sim"{description}, from=1-2, to=3-2]
\end{tikzcd}\]
\end{enumerate}

\noindent This suggests that when trying to define a general formalism of channels in which the codomain $\sigma$-algebra can vary, we can use copartial maps in $\stoch$, whose right legs are surjective measurable maps. Checking the details (e.g. if the relevant pushouts exist) is work for the future. \\

\noindent This represents a starting point for extending our current approach to nonlinear open systems. For example, Willems gives the example of supply and demand curves (see Figure~\ref{fig:nonlinear}).

\begin{figure}[h] 
	\includegraphics[scale=0.2]{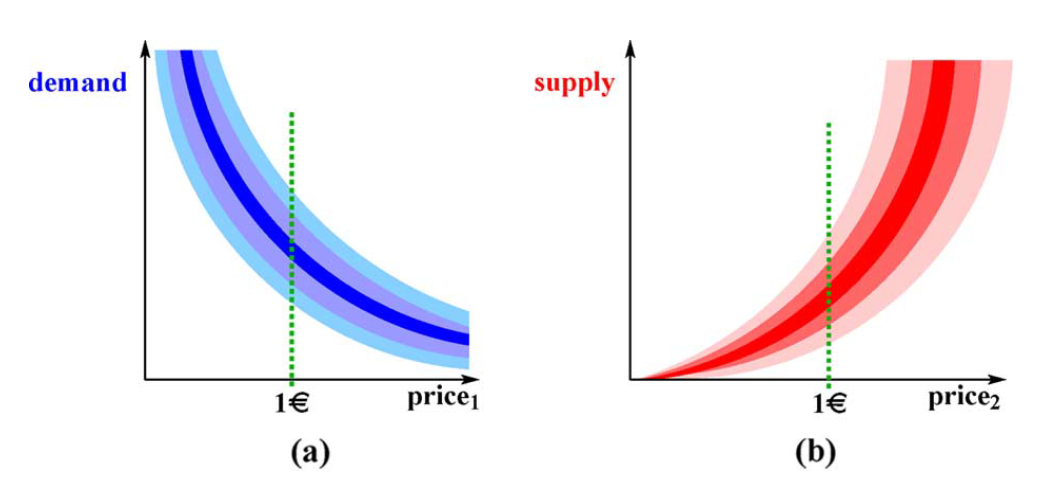}
	\caption{Supply and demand as nonlinear open systems (Figure 17 of \cite{willems:oss})}
	\label{fig:nonlinear}
\end{figure}

\subsection{Variance-Precision Duality and Quadratic Functions}\label{sec:duality}

So far, we have described extended Gaussian distributions as categorical and measure-theoretic objects. We will briefly sketch how they naturally arise naturally from an analytic point of view. For simplicity, we will only consider \emph{centered Gaussians} of mean zero. 

A Gaussian distribution with nonsingular, i.e. positive definite, covariance matrix $\Sigma \in \R^{n \times n}$ has a density function over $\R^n$ of the form
\begin{equation} f(x) \propto \exp\left(-\frac 1 2 x^T\Omega x\right) \label{eq:density} \end{equation}
The matrix $\Omega$ appearing in the density, known as the \emph{precision} or \emph{information matrix}, is the inverse of the covariance matrix $\Omega = \Sigma^{-1}$. Nonsingular Gaussian distributions thus have two equivalent representations, a \emph{precision form} and a \emph{covariance form}, which are related to each other by inversion. The two forms each have convenient properties for different applications
\begin{enumerate}
\item covariance is additive for convolution, i.e. we have
\[ \Sigma_{\psi_1 + \psi_2} = \Sigma_{\psi_1} + \Sigma_{\psi_2} \]
because variances of independent variables add. 
\item precision is additive for conditioning. That is we have \[ \Omega_{\psi_1 \bullet \psi_2} = \Omega_{\psi_1} + \Omega_{\psi_2} \] where $\psi_1 \bullet \psi_2$ denotes the interconnection of \eqref{eq:interconnection_cond}. That is because when conditioning, densities multiply, so precisions add. 
\end{enumerate}

\begin{figure}
	\includegraphics[scale=0.30]{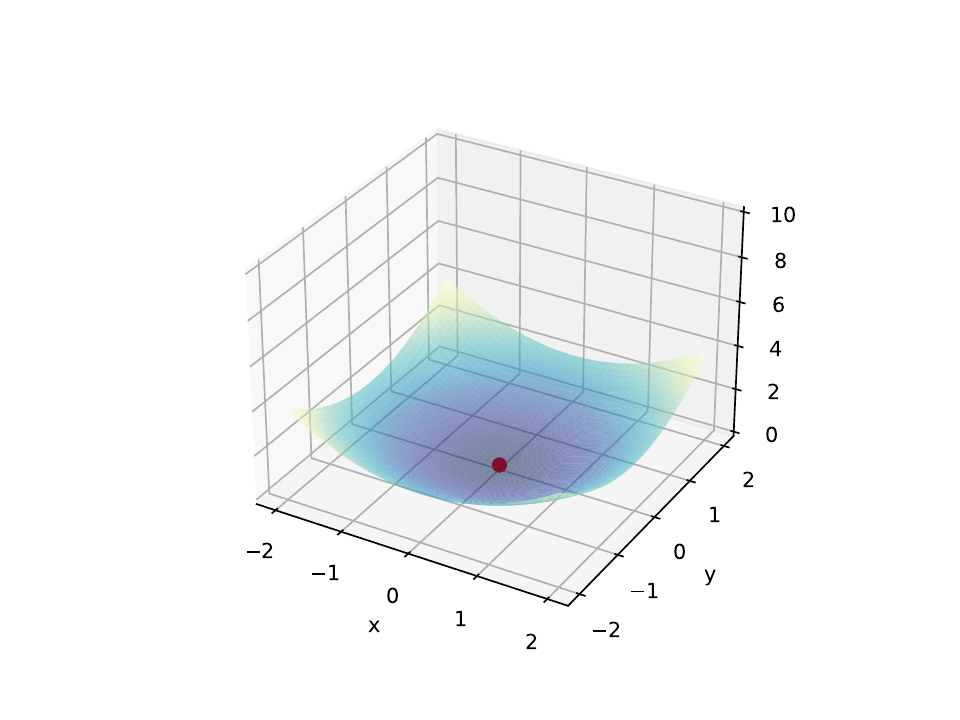}
	\includegraphics[scale=0.30]{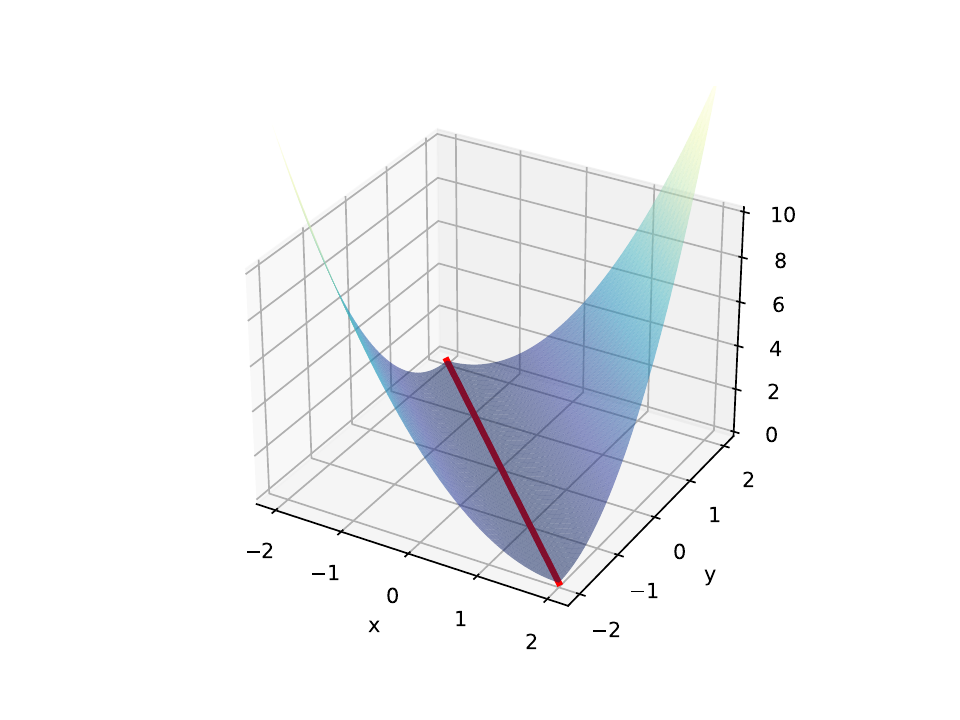}
	\includegraphics[scale=0.30]{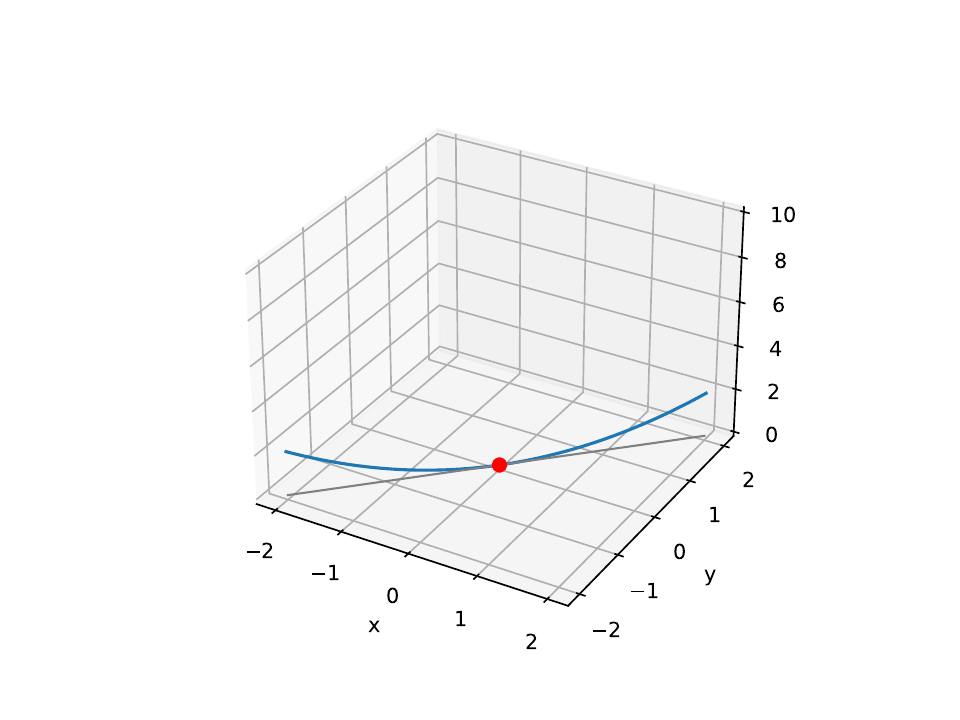}
	\caption{Graphs of partial quadratic functions on $\R^2$ (kernel $\D$ shown in red)}
	\label{fig:pqf}
\end{figure}

\noindent The duality between covariance and precision form is lost once we consider Gaussians whose covariance matrix is singular. Such distribution no longer admit a density over all of $\R^n$.
\begin{enumerate}
\item What is the precision form of a singular Gaussian distribution?
\item What would be the meaning of a singular precision matrix?
\end{enumerate} 

\noindent The answer to the second question is precisely extended Gaussian distributions, whose fibre is $\D=\ker(\Omega)$. Under this viewpoint, they have been studied for example in \cite{james:variance_manifold}. To understand this duality, it is useful to represent Gaussian distributions as quadratic functions: \\

\noindent Let $\chi$ be a centered extended Gaussian distribution with fibre $\D$ and covariance matrix $\Sigma$, and let $\S = \im(\Sigma) + \D$ be its support. Then we associate to $\chi$ two partial quadratic functions, a precision function $f_p$ and covariance function $f_c : \R^m \to [0,+\infty]$, as follows:

\begin{align*}
f_p(x) = \frac 1 2 x^T \Sigma^+ x + \iv{x \in \S}, \qquad f_c(x) = \frac 1 2 x^T \Sigma x + \iv{x \in \D^\bot}  
\end{align*}

\noindent Here, the matrix $\Sigma^+$ is the Moore-Penrose pseudoinverse of $\Sigma$, and $\D^\bot$ denotes the orthogonal complement of the subspace $\D$. We abbreviate $\iv{\phi} = 0$ if $\phi$ is true, and $+\infty$ otherwise, that is we use $+\infty$ to indicate partiality. We note that $f_p$ is reminiscent of a negative log-density \eqref{eq:density} if that density exists. The properties of the functions $f_p, f_c$ can be summarized as follows:

\begin{center}
	\begin{tabular}{ c|cc } 
			& precision form $f_p$ & covariance form $f_c$ \\ \hline
			domain & $\S$ & $\D^\bot$ \\ \hline
			kernel & $\D$ & $\S^\bot$ \\
		\end{tabular}
\end{center}
By \emph{domain} we mean the set of points where $f(x)<\infty$. By kernel, we mean a subspace of degeneracy or translation invariance, i.e. $f(x) = f(x+d)$ for all $d \in \D$ (see Figure~\ref{fig:pqf}).

For ordinary Gaussian distributions with $\D = 0$, we obtain the table
\begin{center}
	\begin{tabular}{ c|cc } 
		& precision form $f_p$ & covariance form $f_c$ \\ \hline
		domain & $\S$ & $\R^n$ \\ \hline
		kernel & $0$ & $\S^\bot$ \\
	\end{tabular}
\end{center}
that is, the precision form is partial but nondegenerate. The covariance form is globally defined but has a nontrivial kernel. This explains the suffiency of the covariance matrix $\Sigma$ over the precision matrix $\Sigma$ for ordinary Gaussians. 

On the other hand, for an extended Gaussian distribution with full support, we have
\begin{center}
	\begin{tabular}{ c|cc } 
		& precision form $f_p$ & covariance form $f_c$ \\ \hline
		domain & $\R^n$ & $\D^\bot$ \\ \hline
		kernel & $\D$ & $0$ \\
	\end{tabular}
\end{center}
That is, such distributions have a globally defined (possibly singular) precision form, but only a partial covariance form.

By considering extended Gaussian distributions (and partial quadratic functions), we retain the symmetry between the two representations. The duality relating the two representations is an instance of the famous Legendre-Transform from convex analysis:

\begin{prop}
	The partial quadratic functions $f_p,f_c$ are always convex, and related to each other via convex conjugation (a.k.a. Legendre-Fenchel transform).
\end{prop}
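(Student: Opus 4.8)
The plan is to prove convexity directly and then establish $f_{p} = f_{c}^{\ast}$ (conjugation taken as $g^{\ast}(y) = \sup_{x}(x^{T}y - g(x))$) by a blockwise computation, deducing $f_{c} = f_{p}^{\ast}$ for free. \emph{Convexity} is immediate: each function is the sum of a quadratic form $q_{A}(v) = \tfrac{1}{2}v^{T}Av$ with a positive semidefinite matrix $A$ --- namely $\Sigma$, respectively its Moore--Penrose pseudoinverse $\Sigma^{+}$, which is again positive semidefinite --- and the indicator of a linear subspace; such a sum is convex, and it is moreover closed (the subspaces are closed and the quadratic forms continuous) and proper (it vanishes at the origin and never takes the value $-\infty$). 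So the Fenchel--Moreau theorem applies, and once $f_{p} = f_{c}^{\ast}$ is shown, $f_{c} = f_{c}^{\ast\ast} = f_{p}^{\ast}$ follows.

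The first real step is a normalization. I would fix the representative covariance matrix $\Sigma$ so that it is supported on $\D^{\bot}$, i.e.\ $\D \subseteq \ker\Sigma$; this is the canonical choice obtained by splitting off the fibre along its orthogonal complement (cf.\ the kernel representation of Proposition~\ref{prop:kernel_rep}), and it does not change $\chi$. Together with the standard facts $\im\Sigma^{+} = \im\Sigma$, $\ker\Sigma^{+} = \ker\Sigma$ and $\Sigma^{+}|_{\im\Sigma} = (\Sigma|_{\im\Sigma})^{-1}$, this produces an orthogonal direct sum decomposition
\[ \R^{n} \;=\; \im\Sigma \;\oplus\; \S^{\bot} \;\oplus\; \D, \qquad \S^{\bot} = \ker\Sigma \cap \D^{\bot}, \]
which is respected by $\Sigma$, by $\Sigma^{+}$, and by the two subspaces $\S = \im\Sigma \oplus \D$ and $\D^{\bot} = \im\Sigma \oplus \S^{\bot}$ that appear in the indicator terms. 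Consequently \emph{both} $f_{p}$ and $f_{c}$ become separable sums over the three blocks: on $\im\Sigma$ the function $f_{c}$ is the nondegenerate quadratic form $q_{\Sigma}$ and $f_{p}$ is $q_{\Sigma^{+}}$, with $\Sigma$ and $\Sigma^{+}$ genuine inverses of one another there; on $\S^{\bot}$, $f_{c}$ is identically $0$ and $f_{p}$ is $\iv{v = 0}$; and on $\D$, $f_{c}$ is $\iv{v = 0}$ and $f_{p}$ is identically $0$. (This already reproduces the domain/kernel tables of the paper: $f_{c}$ is finite exactly on $\im\Sigma\oplus\S^{\bot} = \D^{\bot}$ and flat along $\S^{\bot}$, while $f_{p}$ is finite exactly on $\im\Sigma\oplus\D = \S$ and flat along $\D$.)

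The final step uses the fact that Legendre--Fenchel conjugation sends a separable sum to the separable sum of conjugates, so $f_{p} = f_{c}^{\ast}$ reduces to three one-block checks. On $\im\Sigma$ it is the classical self-duality of a nondegenerate quadratic form, $q_{\Sigma}^{\ast} = q_{\Sigma^{-1}} = q_{\Sigma^{+}}$. On the other two blocks it is the elementary pair: the conjugate of the zero function on a subspace is the indicator $\iv{v = 0}$ of the origin, and conversely the conjugate of $\iv{v=0}$ is the zero function. Reassembling the blocks gives $f_{c}^{\ast} = f_{p}$, hence $f_{p}^{\ast} = f_{c}$.

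I expect the only delicate point to be the normalization: one genuinely needs $\Sigma$ supported on $\D^{\bot}$ so that the three subspaces are mutually orthogonal and $f_{p}, f_{c}$ are simultaneously separable --- without it the decomposition $\R^{n} = \im\Sigma\oplus\S^{\bot}\oplus\D$ need not even be direct. Everything else is routine bookkeeping with the pseudoinverse. As a cross-check that also avoids the decomposition, one can compute $f_{c}^{\ast}(y) = \sup_{x\in\D^{\bot}}\bigl(x^{T}y - \tfrac{1}{2}x^{T}\Sigma x\bigr)$ directly: when $y \in \S$ the (concave) objective is maximized at $x = \Sigma^{+}y \in \im\Sigma \subseteq \D^{\bot}$ with value $\tfrac{1}{2}y^{T}\Sigma^{+}y$, and when $y \notin \S$ the component of $y$ in $\S^{\bot} = \ker\Sigma\cap\D^{\bot}$ is a direction $v$ with $v^{T}y \neq 0$ and $v^{T}\Sigma v = 0$, along which the objective is unbounded, giving $f_{c}^{\ast}(y) = +\infty = f_{p}(y)$.
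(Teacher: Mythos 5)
Your proof is correct. The paper itself does not prove this proposition --- it defers to the follow-up works \cite{stein2023towards,stein2024graphical} --- so there is no in-text argument to compare against, but your route (orthogonal block decomposition $\R^n = \im\Sigma \oplus \S^\bot \oplus \D$, separability of both functions over the blocks, and conjugation block by block) is the natural one and all the steps check out: $\Sigma^+$ is again positive semidefinite, both functions are closed proper convex, conjugation distributes over separable sums on orthogonal summands, and the three block conjugacies ($q_\Sigma \leftrightarrow q_{\Sigma^+}$ on $\im\Sigma$, zero $\leftrightarrow$ $\iv{v=0}$ on the other two) are exactly right, with Fenchel--Moreau giving the reverse identity. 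You have also correctly isolated the one genuinely delicate point, the normalization $\D \subseteq \ker\Sigma$: this is not cosmetic, since without it the statement is false as written --- take $\D = \mathrm{span}(e_1) \subseteq \R^2$ and $\Sigma = I$, so that $\S = \R^2$, $f_p(y) = \tfrac12|y|^2$, but $f_c^*(y) = \sup_{x_1 = 0}(x^Ty - \tfrac12|x|^2) = \tfrac12 y_2^2 \neq f_p(y)$. The covariance matrix of an extended Gaussian with fibre $\D$ is only determined modulo directions in $\D$ (cf.\ the kernel representation), and the proposition implicitly uses the canonical representative supported on $\D^\bot$; making that explicit, as you do, is the right reading. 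The direct computation of $f_c^*$ in your last paragraph is a good self-contained alternative that bypasses the decomposition entirely.
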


This is elaborated and proved in \cite{stein2023towards,stein2024graphical}.

\subsection{Presentations and Hypergraph Categories}\label{sec:presentations} 

\emph{Graphical Linear Algebra} refers to a set of sound and complete diagrammatic axiomatizations of various categories like linear maps, linear relations and affine relations \cite{graphical_la,bonchi:affine,bonchi:interacting_hopf}. Presentations of this kind have been used to reason about control-theoretic framework like signal flow diagrams, electrical circuits and Willems' linear-time invariant (LTI) systems \cite{bonchi:cat_signal_flow,fong:open_systems}. \\

In \cite{stein2024graphical}, we introduce Graphical Quadratic Algebra (GQA), which is an extension of the language of Graphical Affine Algebra with a generator $
	\tikzset{baseline=-0.65ex}
	\begin{tikzpicture}[scale=0.1]
	\begin{pgfonlayer}{nodelayer}
		\node [style=nn] (1) at (-1, 0) {};
		\node [style=none] (3) at (4, 0) {};
	\end{pgfonlayer}
	\begin{pgfonlayer}{edgelayer}
		\draw (1) to (3.center);
	\end{pgfonlayer}
\end{tikzpicture}
}
 : I \to \R$ representing the distribution $\N(0,1)$ (see Figure~\ref{fig:gqa_signature}). This is sufficient to express any extended Gaussian map as a combination of normal distributions and linear relations. We give a sound and complete equational theory for $\gaussex$. Completeness means that diagrammatic reasoning like Example~\ref{ex:stringdiag} can be used to prove every valid equality between extended Gaussian maps. \\

GQA models a strict superset $\gaussex$; in particular it features connectors like $
	\tikzset{baseline=-0.65ex}
	\begin{tikzpicture}[scale=0.1]
	\begin{pgfonlayer}{nodelayer}
		\node [style=none] (0) at (0, 2) {};
		\node [style=bn] (1) at (2, 0) {};
		\node [style=none] (2) at (0, -2) {};
		\node [style=none] (3) at (5, 0) {};
		\node [style=none] (4) at (-2, 2) {};
		\node [style=none] (5) at (-2, -2) {};
	\end{pgfonlayer}
	\begin{pgfonlayer}{edgelayer}
		\draw (0.center) to (4.center);
		\draw [in=90, out=0] (0.center) to (1);
		\draw (5.center) to (2.center);
		\draw [in=-90, out=0] (2.center) to (1);
		\draw (1) to (3.center);
	\end{pgfonlayer}
\end{tikzpicture}}
$ and thus describes a hypergraph category which can model arbitrary interconnections. We call such an entity a \emph{Gaussian relation}, because it combines Gaussian probability with arbitrary (not just total) linear relations. Partial convex quadratic functions form another model of GQA, making the connection to the dualities in Section~\ref{sec:duality} precise.

\begin{figure}
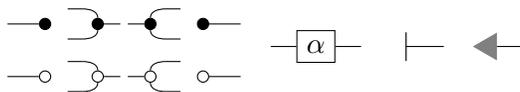

	\[ \tikzfig{gqa_signature} \]
	\caption{The signature of GQA}
	\label{fig:gqa_signature}
\end{figure}

\subsection{Implementation}\label{sec:implementation}

We have given a proof of concept implementation of extended Gaussians in the Julia library \texttt{GaussianRelations.jl} \cite{gaussianrelations}. The library features covariance and precision representations as discussed in Section~\ref{sec:duality}, and implements categorical composition, interconnection and conditioning. \\

\noindent The following code implements the noisy resistor in the style of Example~\ref{ex:noisy_kernel} as a pullback distribution.
\begin{lstlisting}
using GaussianRelations
	
# Construct the noisy resistor with some parameters
	
sigma = 1/4
R     = 1/2

# epsilon = N(0, sigma^2)
epsilon = CovarianceForm(0, sigma^2)

# Define an extended Gaussian via pullback
# [1 -R] * [ V; I ] ~ epsilon
P_VI = [1 -R] \ epsilon
\end{lstlisting}

\noindent It will be interesting to integrate this implementation with probabilistic programming languages like \cite{gaussianinfer}, and categorical frameworks like \texttt{AlgebraicJulia}\footnote{\url{https://github.com/AlgebraicJulia}}.

\bibliographystyle{alphaurl}
\bibliography{main}

\section{Appendix}

\subsection{Copartiality}\label{app:copartiality}

\begin{prop}
The construction $R : \Copar(\set) \to \trel$ is \emph{not} functorial.
\end{prop}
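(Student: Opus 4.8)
The plan is to exhibit an explicit pair of composable copartial maps of \emph{sets} on which cospan composition and relational composition disagree. Take $X = \{*\}$, $Y = \{1,2,3\}$, $Z = \{a,b\}$. Define $F : X \coparto Y$ by the cospan $X \xrightarrow{f} P \xleftarrow{p} Y$ where $P = Y/(2\sim 3) = \{[1],[23]\}$, $p$ is the quotient map (hence surjective, so $F$ is a genuine copartial map), and $f(*) = [1]$. Define $G : Y \coparto Z$ by $Y \xrightarrow{g} Q \xleftarrow{q} Z$ where $Q = Y/(1\sim 2) = \{[12],[3]\}$, $g$ is the quotient map, and $q$ is the bijection with $q(a) = [12]$, $q(b) = [3]$ (again surjective). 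First I would compute the relational side: straight from $R_H = \{(x,y) : (\text{left leg})(x) = (\text{right leg})(y)\}$ one gets $R_F = \{(*,1)\}$ and $R_G = \{(1,a),(2,a),(3,b)\}$, so $R_F ; R_G = \{(*,a)\}$, since $1$ is the only intermediate point available.

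Next I would compute the cospan side by forming the pushout of $P \xleftarrow{p} Y \xrightarrow{g} Q$. On underlying sets this is $(P \sqcup Q)$ quotiented by the equivalence generated by $p(y)\sim g(y)$; chasing the three generators gives $[1]\sim[12]$, $[23]\sim[12]$, $[23]\sim[3]$, whose transitive closure identifies all four classes, so the apex $W$ is a single point and the composite copartial map $X \to W \leftarrow Z$ has both legs constant. Hence $R_{F;G} = X\times Z = \{(*,a),(*,b)\}$, which differs from $R_F ; R_G = \{(*,a)\}$. Both relations are total, so both are legitimate morphisms of $\trel$ — they simply are not equal, which is exactly a failure of functoriality of $R : \Copar(\set) \to \trel$.

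I would close by pointing out why the example is forced rather than ad hoc: relational composition only records intermediate witnesses $y$ with $f(x) = p(y)$ \emph{and} $g(y) = q(z)$ simultaneously (``zig-zags of length one through $Y$''), whereas the pushout glues everything connected by the \emph{generated} equivalence, including points joined by longer zig-zags — here $f(*) = p(1) \sim g(1) = g(2) \sim p(2) = p(3) \sim g(3) = q(b)$, a length-three zig-zag with no length-one shortcut. The only part requiring care is routine bookkeeping: verifying that all right legs are surjective (so we remain inside $\Copar(\set)$) and carrying out the pushout computation correctly; I do not expect any genuine obstacle beyond getting these identifications right. (One can also check that three elements in $Y$ is the minimum needed: with $|Y| \le 2$ any discrepancy collapses.)
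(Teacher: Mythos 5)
Your proposal is correct and takes essentially the same approach as the paper's proof: an explicit small counterexample where the pushout identifies elements via a multi-step zig-zag through $Y$ that relational composition (which only sees length-one witnesses) cannot detect. Your instance is in fact slightly more economical than the paper's ($|X|=1$, $|Y|=3$, $|Z|=2$ versus $2$, $4$, $3$), and all the computations check out.
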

\begin{proof}
We consider sets $X=\{x_1,x_2\}, Y = \{y_1,y_2,y_3,y_4\}$ and $Z=\{z_1,z_2,z_3\}$, and construct copartial maps $f : X \coparto Y, g : Y \coparto Z$ as shown below (the blobs indicate which elements of the codomain are identified).
\[ \tikzfig{copar_composition} \]
The composite $f;g$ in $\Copar(\set)$ identifies all elements on $Z$. On the other hand, the relational composite $R(f) ; R(g)$ maps $x_1$ to $\{z_1,z_2\}$ but not $z_3$, as would be required for the identity $R(f) ; R(g) = R(f;g)$ to hold.
\end{proof}

\begin{prop}
Define the cartesian product of copartial maps $(X_i \xrightarrow{f_1} P_i \xleftarrow{p_i} Y_i)_{i=1,2}$ as the copartial map $(X_1 \times X_2 \xrightarrow{f_1 \times f_2} P_1 \times P_2 \xleftarrow{p_1 \times p_2})$. Then this construction fails to be bifunctorial, i.e. does not define a monoidal structure on $\Copar(\set)$. 
\end{prop}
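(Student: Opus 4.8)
The plan is to exhibit an explicit counterexample to the composition-preservation (interchange) law, which is part of what it means for $(\times,1)$ to be a monoidal structure on $\Copar(\set)$. Preservation of identities is immediate, since the product of the identity cospans on $X_1$ and $X_2$ is the identity cospan on $X_1\times X_2$; so the entire content is to produce copartial maps $f_i:X_i\coparto Y_i$ and $g_i:Y_i\coparto Z_i$ ($i=1,2$) for which $(f_1\times f_2) ; (g_1\times g_2)$ and $(f_1 ; g_1)\times(f_2 ; g_2)$ are distinct morphisms $X_1\times X_2\coparto Z_1\times Z_2$.

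The conceptual reason this must fail is that a pushout of a product of two spans in $\set$ is not in general the product of the two pushouts: although $A\times(-)$ preserves colimits for any fixed $A$, the map $g_1\times g_2$ may have a strictly smaller image than $g_1$ and $g_2$ have separately, so fewer identifications get made on the output side. To turn this into a concrete instance I would take $X_1=X_2=\{*\}$, $Y_1=Y_2=\{a,b\}$ and $Z_1=Z_2=\{0,1\}$. Let $f_1$ and $f_2$ both be the copartial map $\{*\}\xrightarrow{\id}\{*\}\xleftarrow{}\{a,b\}$ that collapses the middle set to a point (``$Y_i$ carries no information''). Let $g_1$ be the copartial map $\{a,b\}\xrightarrow{\gamma}\{0,1\}\xleftarrow{\id}\{0,1\}$ induced by a bijection $\gamma$ (say $\gamma a = 0$, $\gamma b = 1$), and let $g_2$ be the copartial map $\{a,b\}\xrightarrow{\beta}\{0,1\}\xleftarrow{\id}\{0,1\}$ with $\beta$ constant at $0$, so that $1\in\{0,1\}$ lies outside the image of $\beta$.

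Carrying out the four pushouts, $f_1 ; g_1$ collapses $Z_1$ to a point while $f_2 ; g_2$ retains the discrete quotient of $Z_2$ (identifying only $0$ with the collapsed input). Hence $(f_1 ; g_1)\times(f_2 ; g_2)$ induces on $Z_1\times Z_2 = \{0,1\}^2$ the rectangular partition with blocks $\{0,1\}\times\{0\}$ and $\{0,1\}\times\{1\}$. On the other hand $\gamma\times\beta$ has image only the sub-rectangle $\{0,1\}\times\{0\}$, so the pushout defining $(f_1\times f_2) ; (g_1\times g_2)$ identifies $(0,0)$ with $(1,0)$ and with the collapsed input but leaves $(0,1)$ and $(1,1)$ apart, yielding the strictly finer partition with blocks $\{(0,0),(1,0)\}$, $\{(0,1)\}$ and $\{(1,1)\}$. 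Since the right legs $Z_1\times Z_2\to(-)$ of the two composite cospans induce different partitions of $Z_1\times Z_2$, no isomorphism of cospans can relate them, so they are genuinely distinct morphisms of $\Copar(\set)$: the interchange law fails, hence $(\times,1)$ is not a monoidal structure.

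I expect the only real difficulty to be organizational: carrying out the four pushout computations in $\set$ without error (in particular tracking the generated equivalence relations and checking that each right leg stays surjective, so that all four composites really are copartial maps), and then justifying that the observed discrepancy is not an artifact of the chosen cospan representatives — which reduces to the observation that an isomorphism of cospans must preserve the partition of the codomain cut out by the right leg, and we have exhibited two different such partitions.
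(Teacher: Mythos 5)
Your counterexample is correct: the four pushouts come out as you describe, the two composites induce the partitions $\{\{0,1\}\times\{0\},\ \{0,1\}\times\{1\}\}$ and $\{\{(0,0),(1,0)\},\{(0,1)\},\{(1,1)\}\}$ of $Z_1\times Z_2$, and since the right legs of copartial maps are surjective, the apex is determined up to isomorphism of cospans by this partition, so the two morphisms of $\Copar(\set)$ are genuinely distinct (indeed the apexes already have different cardinalities, $2$ versus $3$). The overall strategy is the same as the paper's — refute the interchange law by comparing a product of pushouts with a pushout of products — but the witness differs: the paper tensors two copies of the \emph{same} composite, namely the total collapse $\emptyset\to 1\leftarrow A$ followed by $A\hookrightarrow B\xleftarrow{\id}B$, and the failure is the slogan $(B/A)^2\not\cong B^2/A^2$; your example is asymmetric and instead exploits that $1\notin\mathrm{im}\,\beta$, so the identification $0\sim 1$ produced in the first factor cannot propagate to the fibre over $1$ in the second coordinate. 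One wording caveat: the image of $\gamma\times\beta$ is exactly $\mathrm{im}\,\gamma\times\mathrm{im}\,\beta$, so it is not that the product ``has a strictly smaller image than $g_1$ and $g_2$ have separately''; the precise point is that the equivalence relation generated on $Z_1\times Z_2$ by the pushout of the product is not the product of the two generated equivalence relations. This does not affect the validity of the argument.
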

\begin{proof}
Let $B$ be a set and $A \subset B$ a proper subset. We write $B/A$ for the set $(B \setminus A) \cup \{\star\}$, which collapses all of $A$ to a point. This is the pushout of the span $1 \xleftarrow{} A \hookrightarrow B$.

Define the copartial maps
\[ u_A = (\emptyset \xrightarrow{} 1 \leftarrow A), \qquad i = (A \hookrightarrow B \xleftarrow{\id} B )\]
Then the composite $(u_A ; i) \times (u_A ; i)$ is the product of pushouts
\[\begin{tikzcd}
	&& {(B/A)^2} \\
	& 1 && {B^2} \\
	\emptyset && {A^2} && {B^2}
	\arrow[from=3-3, to=2-2]
	\arrow[from=3-3, to=2-4]
	\arrow[from=2-4, to=1-3]
	\arrow[from=2-2, to=1-3]
	\arrow[from=3-5, to=2-4]
	\arrow[from=3-1, to=2-2]
\end{tikzcd}\]
while the composite $(u_A \times u_A) ; (i \times i)$ equals the pushout of products
\[\begin{tikzcd}
	&& {B^2/A^2} \\
	& 1 && {B^2} \\
	\emptyset && {A^2} && {B^2}
	\arrow[from=3-3, to=2-2]
	\arrow[from=3-3, to=2-4]
	\arrow[from=2-4, to=1-3]
	\arrow[from=2-2, to=1-3]
	\arrow[from=3-5, to=2-4]
	\arrow[from=3-1, to=2-2]
	\arrow["\lrcorner"{anchor=center, pos=0.125, rotate=-45}, draw=none, from=1-3, to=3-3]
\end{tikzcd}\]
These are in general not isomorphic, $(B/A)^2 \not \cong B^2/A^2$. Note that, in contrast, this the analogous identity $(X/\fib D) \times (Y/\fib E) \cong (X \times Y)/(\fib D \times \fib E)$ \emph{is} valid for quotients of vector spaces (see Section~\ref{sec:linrel}).
\end{proof}

\subsection{Interconnection}\label{app:interconnection}

We will make frequent use of the projections 
\begin{itemize}
	\item $X: \R^\ell \times \R^m \times \R^n \to \R^\ell$, $X(x, y, z) = x$.
	\item $Y: \R^\ell \times \R^m \times \R^n \to \R^m$, $Y(x, y, z) = y$.
	\item $Z: \R^\ell \times \R^m \times \R^n \to \R^n$, $Z(x, y, z) = z$.
\end{itemize}
and use expressions like $\{X \in U\}$ to mean 
$$
\{(x, y, z) \in \R^\ell \times \R^m \times \R^n \mid X(x, y, z) \in U\}.
$$

Let $e: \R^\ell \to \R^m$ be a morphism in the category $\gaussex$ represented by the decorated copartial function
$$
\left(
\begin{tikzcd}
	& \R^i &                    \\
	\R^\ell \arrow[ru, "A"] &   & \R^m \arrow[lu, "B"']
\end{tikzcd}
, \:
\epsilon: \R^0 \to \R^i
\right).
$$
Then $e$ corresponds to the probability space $\Sigma = \left(\R^\ell \times \R^m, \mathcal{E}, P \right)$, where
$$
\mathcal{E} = \left\{ \{(x, y) \mid By - Ax \in U \} \mid U \subset \R^i \: \mathrm{Borel}\right\}
$$
and
$$
P\left( \{(x, y) \mid By - Ax \in U \} \right) = \epsilon(U)
$$
for all $U \subset \R^i$ Borel. If $f: \R^m \to \R^n$ is another morphism in $\gaussex$, and if $\Phi = \left( \R^m \times \R^n, \mathcal{F}, Q \right)$ is the stochastic system corresponding to $f$, then the systems $\Sigma$ and $\Phi$ satisfy the following property.

\begin{thm}
	Let us define stochastic systems $\Sigma^\prime = \left( \R^\ell \times \R^m \times \R^n, \mathcal{E}^\prime, P^\prime \right)$ and $\Phi^\prime = \left( \R^\ell \times \R^m \times \R^n, \mathcal{F}^\prime, Q^\prime \right)$ in the following way. 
	\begin{itemize}
		\item $\mathcal{E}^\prime = \{E \times \R^n \mid E \in \mathcal{E}\}$.
		\item $\mathcal{F}^\prime = \{\R^\ell \times F \mid F \in \mathcal{F}\}$.
		\item For all $E \in \mathcal{E}$,  $P^\prime(E \times \R^n) = P(E)$.
		\item For all $F \in \mathcal{F}$, $Q^\prime(\R^\ell \times F) = Q(F)$.
	\end{itemize}
	Then $\Sigma^\prime$ and $\Phi^\prime$ are complementary.
\end{thm}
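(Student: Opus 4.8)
The plan is to realize both weakened $\sigma$-algebras $\mathcal{E}'$ and $\mathcal{F}'$ as preimage $\sigma$-algebras of a single \emph{surjective} linear map into a product space; once that is arranged, complementarity reduces to the elementary fact that preimages of measurable rectangles under a surjection determine their two factors. First I would unpack the data. Writing the second morphism $f : \R^m \to \R^n$ in the same form as $e$, say as the decorated copartial map with cospan $\R^m \xrightarrow{C} \R^j \xleftarrow{D} \R^n$ and decoration $\eta : \R^0 \to \R^j$, the associated system is $\Phi = (\R^m \times \R^n, \mathcal{F}, Q)$ with $\mathcal{F} = \{\{(y,z) : Dz - Cy \in V\} : V \subseteq \R^j \text{ Borel}\}$ and $Q(\{Dz - Cy \in V\}) = \eta(V)$. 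The one structural fact I will lean on is that, because $e$ and $f$ are copartial maps, their right legs $B : \R^m \to \R^i$ and $D : \R^n \to \R^j$ are epimorphisms in $\vect$, i.e.\ surjective linear maps.

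Next, introduce the two linear maps $g_1, g_2$ on the common carrier $\R^\ell \times \R^m \times \R^n$ defined by $g_1(x,y,z) = By - Ax$ and $g_2(x,y,z) = Dz - Cy$. By construction the weakened events are precisely $\mathcal{E}' = \{g_1^{-1}(U) : U \subseteq \R^i \text{ Borel}\}$ and $\mathcal{F}' = \{g_2^{-1}(V) : V \subseteq \R^j \text{ Borel}\}$, and the stated fibres are $\D_{P'} = \ker g_1$ and $\D_{Q'} = \ker g_2$. The crucial lemma is that the combined map $h := \langle g_1, g_2 \rangle : \R^\ell \times \R^m \times \R^n \to \R^i \times \R^j$ is surjective: given $(u,v)$, set $x = 0$, use surjectivity of $B$ to pick $y$ with $By = u$, and then use surjectivity of $D$ to pick $z$ with $Dz = v + Cy$; then $h(x,y,z) = (u,v)$. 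This is the only place where copartiality is genuinely used — without surjectivity of the right legs $B$ and $D$, the map $h$ can fail to be onto, and the systems need not be complementary.

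The conclusion is then formal. Each $E \in \mathcal{E}'$ equals $h^{-1}(U \times \R^j)$ for a unique Borel $U$ (uniqueness because $g_1$, hence $h$, is surjective, so that taking preimages is injective on subsets), and each $F \in \mathcal{F}'$ equals $h^{-1}(\R^i \times V)$; therefore $E \cap F = h^{-1}(U \times V)$. If $E_1 \cap F_1 = E_1' \cap F_1'$ then $h^{-1}(U_1 \times V_1) = h^{-1}(U_1' \times V_1')$, so $U_1 \times V_1 = U_1' \times V_1'$ in $\R^i \times \R^j$ by surjectivity of $h$; for nonempty rectangles this forces $U_1 = U_1'$ and $V_1 = V_1'$, whence $E_1 = E_1'$ and $F_1 = F_1'$, which is exactly complementarity. (In the degenerate case where a rectangle is empty, the intersection has $P'$- or $Q'$-mass zero, so the interconnected value $P'(E_1)\,Q'(F_1)$ is unambiguous anyway, which is all the condition needs to guarantee.)

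I do not expect a deep obstacle: the entire content sits in the surjectivity lemma, whose proof is the three-line argument above, and what remains is bookkeeping — keeping straight which matrices act on which coordinate block and verifying that the weakened events are literally the preimages $g_1^{-1}(U)$ and $g_2^{-1}(V)$. The one point to state carefully is the empty-rectangle case noted above, which is the only way the condition as literally phrased can be touched.
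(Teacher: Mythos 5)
Your proof is correct, but it takes a genuinely different route from the paper's. The paper computes the fibres of the two weakened linear systems, $\mathbb{L} = \{(x,y,z) : Ax = By\}$ and $\mathbb{M} = \{(x,y,z) : Cy = Dz\}$, and then invokes Proposition~\ref{prop:linear_complementary} (Willems' criterion that linear systems are complementary iff their fibres sum to the whole space), verifying $\mathbb{L} + \mathbb{M} = \R^\ell \times \R^m \times \R^n$ by the decomposition $(x,y,z) = (0,0,z-z') + (x,y,z')$ with $Dz' = Cy$ --- note this uses surjectivity of $D$ only. You instead verify the complementarity condition directly and self-containedly: you realize $\mathcal{E}'$ and $\mathcal{F}'$ as preimage $\sigma$-algebras of $g_1$ and $g_2$, prove the joint map $h = \langle g_1, g_2\rangle$ is surjective (using surjectivity of both $B$ and $D$), and exploit that taking preimages under a surjection is injective on subsets, so that intersections of events are preimages of measurable rectangles which determine their factors. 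What your approach buys is independence from the cited proposition and an explicit identification of where copartiality enters; what it costs is having to confront the empty-rectangle degeneracy head-on. You are right that the complementarity condition as literally stated can only fail through empty rectangles (e.g.\ $U_1 = \emptyset$, $V_1' = \emptyset$ with the other factors nonempty), and that this does not threaten well-definedness of the interconnected measure --- the same degeneracy is present in the definition itself and is silently absorbed by the paper's appeal to Proposition~\ref{prop:linear_complementary}, so your explicit handling of it is if anything more careful than the paper's.
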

\begin{proof}
	Let $f$ be represented by the decorated copartial function
	$$
	\left(
	\begin{tikzcd}
		& \R^j &                    \\
		\R^m \arrow[ru, "C"] &   & \R^n \arrow[lu, "D"']
	\end{tikzcd}
	, \:
	\phi: \R^0 \to \R^j
	\right).
	$$
	Then $\Sigma^\prime$ and $\Phi^\prime$ are linear stochastic systems with fibers $\mathbb{L} = \{ AX = BY \}
	$ and $\mathbb{M} = \{ CY = DZ \}$. If we can demonstrate that $\mathbb{L} + \mathbb{M} = \R^l \times \R^m \times \R^n$, then the the stochastic systems $\Sigma^\prime$ and $\Phi^\prime$ will be complementary by Proposition~\ref{prop:linear_complementary}.

	One the one hand, since $\mathbb{L} \subset \R^l \times \R^m \times \R^n$ and $\mathbb{M} \subset \R^l \times \R^m \times \R^n$, we clearly have $\mathbb{L} + \mathbb{M} \subset \R^l \times \R^m \times \R^n$. To demonstrate the converse inclusion, we let $(x, y, z) \in \R^l \times \R^m \times \R^n$. Since the matrix $D$ is surjective, there exists a vector $x^\prime \in \R^n$ such that $Dy^\prime = Cy$. The vectors $(0, 0, z - z^\prime) \in \R^l \times \R^m \times \R^n$ and $(x, y, z^\prime) \in \R^l \times \R^m \times \R^n$ then satisfy the following properties.
	\begin{itemize}
		\item $(0, 0, z - z^\prime) \in \mathbb{L}$
		\item $(x, y, z^\prime) \in \mathbb{M}$
		\item $(0, 0, z - z^\prime) + (x, y, z^\prime) = (x, y, z)$.
	\end{itemize}
	Therefore $(x, y, z) \in \mathbb{L} + \mathbb{M}$. Since the vector $(x, y, z)$ was arbitrarily chosen, $\R^\ell \times \R^m \times \R^n \subset \mathbb{L} + \mathbb{M}$, as desired.
\end{proof}

Since $\Sigma^\prime$ and $\Phi^\prime$ are complementary, there exists a unique stochastic system $\Gamma = \left( \R^\ell \times \R^m \times \R^n, \mathcal{G}, R \right)$, called the \emph{interconnection} of $\Sigma^\prime$ and $\Phi^\prime$, with the following properties.
\begin{itemize}
	\item $\mathcal{G}$ is generated by the collection $\mathcal{C} = \{E^\prime \cap F^\prime \mid E^\prime \in \mathcal{E}^\prime, F^\prime \in \mathcal{F}^\prime \}$.
	\item For all $E^\prime \in \mathcal{E}^\prime$ and $F^\prime \in \mathcal{F}^\prime$, $R(E^\prime \cap F^\prime) = P^\prime(E^\prime)Q^\prime(F^\prime)$.
\end{itemize}
Furthermore, the following holds.

\begin{thm}
	The stochastic system $\Gamma$ corresponds to the composite morphism $g: \R^\ell \to \R^m \times \R^n$, where
	$$
	g = e \, ; \cpy_{\R^m} \, ; \left( \id \otimes f \right).
	$$
\end{thm}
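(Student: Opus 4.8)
The plan is to compute the composite $g = e\,;\cpy_{\R^m}\,;(\id \otimes f)$ explicitly as a decorated copartial map, read off its name from the defining formula, and then verify that the resulting Gaussian system is literally $\Gamma$ --- matching the $\sigma$-algebra first, then the probability measure.

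First I would unfold $g$. Writing $f$ as the decorated cospan $(\R^m \xrightarrow{C} \R^j \xleftarrow{D} \R^n, \phi)$, one has $\cpy_{\R^m} = (\Delta, \delta_0, \id)$ and $\id_{\R^m} \otimes f = (\id \oplus C, \delta_0 \otimes \phi, \id \oplus D)$ with apex $\R^m \oplus \R^j$. Composing these three decorated cospans amounts to two iterated pushouts in $\vect$. Each pushout apex is a quotient vector space that simplifies drastically: the apex for $e\,;\cpy_{\R^m}$ is isomorphic to $\R^i \oplus \R^m$, and after composing with $\id \otimes f$ the total apex collapses to $\R^i \oplus \R^j$; under these isomorphisms the Dirac decorations coming from $\cpy$ and $\id$ disappear under pushforward. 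Tracking everything through, I expect $g$ to be represented by
\[ \Big(\, \R^\ell \xrightarrow{\;x \mapsto (Ax,0)\;} \R^i \oplus \R^j \xleftarrow{\;(y,z)\mapsto(By,\,Dz - Cy)\;} \R^m \oplus \R^n,\ \ \epsilon \otimes \phi \,\Big), \]
where the first $\R^m$ is the copied middle wire; the right leg is surjective because $B$ and $D$ are. This pushout bookkeeping --- and checking that the composite decoration comes out as the independent product $\epsilon \otimes \phi$ --- is the main obstacle; the rest is routine.

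Second, I would apply the name formula to this representative. Since $p_g(y,z) - f_g(x) = (By - Ax,\, Dz - Cy)$, the name $\lceil g \rceil$ is the Gaussian system on $\R^\ell \times \R^m \times \R^n$ with events $E_A = \{(x,y,z) : (By - Ax,\, Dz - Cy) \in A\}$ for Borel $A \subseteq \R^i \oplus \R^j$, and $\lceil g \rceil(E_A) = (\epsilon \otimes \phi)(A)$. Its fibre is $\{(x,y,z): Ax = By,\ Cy = Dz\}$, which is exactly $\mathbb{L} \cap \mathbb{M}$. By Proposition~\ref{prop:linear_complementary}, $\Gamma$ is likewise a Gaussian system whose fibre is $\mathbb{L} \cap \mathbb{M}$, so $\Gamma$ and $\lceil g \rceil$ have the same underlying measurable space $\R^{\ell+m+n}\wquot(\mathbb{L} \cap \mathbb{M})$.

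Finally I would check equality of the two probability measures. The family $\mathcal{C} = \{E' \cap F' : E' \in \mathcal{E}',\, F' \in \mathcal{F}'\}$ is a $\pi$-system generating $\mathcal{G}$, so it suffices to compare there. A generic $E' \in \mathcal{E}'$ has the form $\{(x,y,z) : By - Ax \in U\}$ (the weakened name of $e$) and a generic $F' \in \mathcal{F}'$ the form $\{(x,y,z) : Dz - Cy \in \tilde U\}$, so $E' \cap F' = E_{U \times \tilde U}$ in the notation above. The interconnection assigns $R(E' \cap F') = P'(E')\,Q'(F') = \epsilon(U)\,\phi(\tilde U)$, while $\lceil g \rceil(E_{U\times\tilde U}) = (\epsilon \otimes \phi)(U \times \tilde U) = \epsilon(U)\,\phi(\tilde U)$ as well. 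Since the two probability measures agree on the generating $\pi$-system $\mathcal{C}$, they agree on all of $\mathcal{G}$ by uniqueness of measure extension, giving $\Gamma = \lceil g \rceil$.
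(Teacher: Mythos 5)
Your proposal is correct and follows the same overall architecture as the paper's proof: compute the composite as a decorated copartial map (your representative, with left leg $x \mapsto (Ax,0)$ and right leg $(y,z) \mapsto (By,\,Dz-Cy)$ decorated by $\epsilon\otimes\phi$, is exactly the paper's $A'$, $B'$), read off the name system, and compare with the interconnection on the generating rectangles $E'\cap F'$, where both measures give $\epsilon(U)\phi(\tilde U)$. The one genuine difference is how you close the $\sigma$-algebra comparison. The paper proves the nontrivial inclusion $\mathcal G' \subseteq \mathcal G$ directly: it shows that the collection of Borel $W \subseteq \R^i\times\R^j$ whose preimage under $(x,y,z)\mapsto(By-Ax,\,Dz-Cy)$ lies in $\mathcal G$ is a d-system containing the p-system of rectangles, and invokes the monotone class theorem. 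You instead appeal to Proposition~\ref{prop:linear_complementary} to assert that the interconnection is itself a linear (Gaussian) system with fibre $\mathbb{L}\cap\mathbb{M}$, hence has the full cylinder $\sigma$-algebra $\B_{\mathbb{L}\cap\mathbb{M}}$, which by Proposition~\ref{prop:kernel_rep} is also the $\sigma$-algebra of the name of $g$. That is a legitimate shortcut given the cited proposition, but be aware that the content you are borrowing -- that the $\sigma$-algebra \emph{generated} by the intersections $E'\cap F'$ already contains \emph{every} cylinder parallel to $\mathbb{L}\cap\mathbb{M}$, not merely those coming from rectangles -- is precisely what the paper's Dynkin argument establishes from scratch; if one wanted the theorem to be self-contained, that step would still need to be spelled out. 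Your final appeal to uniqueness of extension from the $\pi$-system $\mathcal C$ then matches the paper's measure comparison. The pushout bookkeeping you defer is also elided in the paper, which simply asserts the composite representative, so the level of rigor is comparable.
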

\begin{proof}
Let $\Gamma^\prime = (\R^l \times \R^m \times \R^n, \mathcal{G}^\prime, R^\prime)$ be the stochastic system corresponding to the composite morphism $g$. We want to prove that $\Gamma = \Gamma^\prime$.

To begin with, recall that the stochastic system $\Sigma^\prime$ is characterized by the equations
$$
\begin{aligned}
	\mathcal{E}^\prime &= \{E \times \R^n \mid E \in \mathcal{E}\} \\
	&= \left\{ \{BY - AX \in U\} \mid U \subset \R^i \: \mathrm{Borel} \right\} 
\end{aligned}
$$
and
$$
\begin{aligned}
	P^\prime \left( \{BY - AX \in U\} \right) &= P\left( \{ (x, y) \mid By - Ax \in U \} \right) \\
	&= \epsilon(U)
\end{aligned}
$$
for all $U \subset \mathbb{R}^i$ Borel. If we let $f$ be represented by the decorated copartial function
$$
\left(
\begin{tikzcd}
	& \R^j &                    \\
	\R^m \arrow[ru, "C"] &   & \R^n \arrow[lu, "D"']
\end{tikzcd}
, \:
\phi: \R^0 \to \R^j
\right),
$$
then the stochastic system $\Phi^\prime$ may be characterized in a similar way:
$$
\mathcal{F}^\prime = \left\{ \{DZ - CY \in V\} \mid V \subset \R^j \: \mathrm{Borel} \right\} 
$$
and
$$
Q^\prime \left( \{DZ - CY \in V\} \right) = \phi(V)
$$
for all $V \subset \R^j$ Borel. Finally, recall that the $\sigma$-algebra $\mathcal{G}$ is generated by the collection $\mathcal{C}$, where
$$
\begin{aligned}
	\mathcal{C} &= \{E^\prime \cap F^\prime \mid E^\prime \in \mathcal{E}^\prime, F^\prime \in \mathcal{F}^\prime\} \\
	&= \left\{ \{BY - AX \in U \} \cap \{ DZ - CY) \in V \} \mid U \subset \R^i, V \subset \R^j \: \mathrm{Borel} \right\} \\
	&= \left\{ \left\{ (BY - AX, DZ - CY) \in U \times V \right\} \mid U \subset \R^i, V \subset \R^j \: \mathrm{Borel} \right\},
\end{aligned}
$$
and the probability measure $R$ is uniquely determined by the equation 
$$
\begin{aligned}
	R\left( \left\{ (BY - AX, DZ - CY) \in U \times V \right\} \right) &= P^\prime\left(\{BY - AX \in U \}\right)Q^\prime\left( \{ DZ - CY) \in V \} \right) \\
	&= \epsilon(U) \phi(V) \\
	&= (\epsilon \otimes \phi)(U \times V)
\end{aligned}
$$
for all $U \subset \R^i$ and $V \subset \R^j$ Borel.

On the other hand, the morphism $g$ is represented by the decorated copartial function
$$
\left(
\begin{tikzcd}
	& \R^i  \times \R^j &                               \\
	\R^\ell \arrow[ru, "A^\prime"] &       & \R^m \times \R^n \arrow[lu, "B^\prime"']
\end{tikzcd}
, \:
\epsilon \otimes \phi: \R^0 \to \R^i \times \R^j
\right),
$$
where 
$$
A^\prime = \begin{pmatrix} A \\ 0 \end{pmatrix}
$$
and
$$
B^\prime = \begin{pmatrix} B & 0 \\ -C & D \end{pmatrix}.
$$
Therefore, the stochastic system $\Gamma^\prime$ is characterized by the following equations:
$$
\mathcal{G}^\prime = \left\{ \{(BY - AX, DZ - CY) \in W\} \mid W \subset \R^i \times \R^j \: \mathrm{Borel}\right\}
$$
and
$$
R^\prime \left( \{(BY - AX, DZ - CY) \in W\} \right) = (\epsilon \otimes \phi)(W)
$$
for all $W \subset \R^i \times \R^j$ Borel. Clearly, $\mathcal{C} \subset \mathcal{G}^\prime$, so $\mathcal{G} \subset \mathcal{G}^\prime$. Furthermore, the probability measures $R$ and $R^\prime$ agree on the collection $\mathcal{C}$, so they agree on $\mathcal{G}$ as well. If we can demonstrate the converse inclusion $\mathcal{G}^\prime \subset \mathcal{G}$, then we will have proved the desired equality $\Gamma = \Gamma^\prime$.

We will use a monotone class theorem argument. Let us define collections $\mathcal{P}$ and $\mathcal{D}$ by
$$
\mathcal{P} = \{U \times V \mid U \subset \R^i, V \subset \R^j \: \mathrm{Borel}\}
$$
and
$$
\mathcal{D} = \left\{W \subset \R^i \times \R^j \: \mathrm{Borel} \mid \{ (BY - AX, DZ - CY) \in W \} \in \mathcal{G} \right\}.
$$
Then the collection $\mathcal{P}$ is a p-system, and the collection $\mathcal{D}$ is a d-system. Furthermore, $\mathcal{P} \subset \mathcal{D}$, so by the monotone class theorem, $\mathcal{D}$ contains the $\sigma$-algebra generated by $\mathcal{P}$. It is well known that $\sigma$-algebra generated by $\mathcal{P}$ is the Borel $\sigma$-algebra on $\R^i \times \R^j$. Therefore, for all $W \subset \R^i \times \R^j$ Borel,
$$
\{ (BY - AX, DZ - CY) \in W \} \in \mathcal{G},
$$
so $\mathcal{G}^\prime \subset \mathcal{G}$, as desired.
\end{proof}

Let $e: \R^\ell \to \R^m$, $f: \R^m \to \R^n$, and $g: \R^\ell \to \R^m \times \R^n$ be the morphisms defined in the preceding section, and let $\Sigma$, $\Phi$, and $\Gamma$ be the corresponding stochastic systems. Let us define a stochastic system $\Xi = \left( \R^\ell \times \R^n, \mathcal{H}, S \right)$ by
$$
\mathcal{H} = \left\{ H \subset \R^l \times \R^n \mid \{ (X, Z) \in H\} \in \mathcal{G} \right\}
$$
and
$$
S(H) = R \left(\{ (X, Z) \in H\} \right)
$$
for all $H \in \mathcal{H}$. Willems calls the system $\Xi$ the result of ``eliminating a variable from $\Gamma$.'' It satisfies the following property.
\begin{thm}
	The stochastic system $\Xi$ corresponds to the composite morphism $h: \R^\ell \to \R^n$, where
	$$
	h = f \, ; g.
	$$
\end{thm}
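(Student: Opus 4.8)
\emph{Proof proposal.} The plan is to deduce this from Theorem~\ref{thm:interconnection_formula} together with the categorical observation that ``eliminating the variable $Y$'' is nothing but discarding an output wire. First note that the morphism $h$ in the statement is $g$ post-composed with $\del_{\R^m}\otimes\id_{\R^n}$; expanding $g = e\,;\cpy_{\R^m}\,;(\id\otimes f)$ and using counitality of the copy--delete comonoid yields
\[ g\,;(\del_{\R^m}\otimes\id_{\R^n}) \;=\; e\,;\cpy_{\R^m}\,;(\del_{\R^m}\otimes f) \;=\; e\,;f. \]
So it suffices to show that the system $\Xi$ obtained by eliminating $Y$ from $\Gamma$ is the name of $g\,;(\del_{\R^m}\otimes\id_{\R^n})$, where $\Gamma$ is --- by the preceding theorem --- the name of $g$.

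Second, I would record the general principle behind this: for any morphism $k : A\to B_1\otimes B_2$ one has the string-diagram identity $\lceil k\rceil\,;(\id_A\otimes\del_{B_1}\otimes\id_{B_2}) = \lceil k\,;(\del_{B_1}\otimes\id_{B_2})\rceil$, so marginalizing the name over the $B_1$-factor gives the name of $k$ with $B_1$ discarded. Measure-theoretically, marginalizing a stochastic system along the coordinate projection $\pi : \R^\ell\times\R^m\times\R^n\to\R^\ell\times\R^n$ means exactly passing to the pushforward $\sigma$-algebra $\{H : \pi^{-1}(H)\in\mathcal G\}$ (the pushforward construction of Section~\ref{sec:wquot}) together with the pushforward measure $R\circ\pi^{-1}$ --- and this is precisely the definition of $\mathcal H$ and $S$. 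Hence $\Xi$ is the marginal of $\Gamma$ along $\pi$, which identifies it with the name of $g\,;(\del_{\R^m}\otimes\id_{\R^n})$ up to the comparison of $\sigma$-algebras carried out next.

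Third, to verify that comparison in the concrete style of the proof of Theorem~\ref{thm:interconnection_formula}, write $e = (A,\epsilon,B)$ and $f = (C,\phi,D)$ for decorated copartial representatives and compute $e\,;f$ by forming the pushout of $\R^i\xleftarrow{B}\R^m\xrightarrow{C}\R^j$. Reading off the associated name gives a candidate system $(\R^\ell\times\R^n,\mathcal G'',R'')$ whose events are Borel cylinders $\{(x,z) : \Theta(x,z)\in W\}$ for a suitable linear map $\Theta$ and whose probabilities are a pushforward of $\epsilon\otimes\phi$. That these generating events lie in $\mathcal H$ and that $R''$ agrees with $S$ on them is a direct computation from the explicit formula for $R$ established in that proof. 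For the reverse inclusion $\mathcal H\subseteq\mathcal G''$ I would run the same monotone class argument: the sets $\{(BY-AX,\,DZ-CY)\in W\}$ (for $W$ Borel) form a d-system, it contains the p-system of measurable rectangles, and the two measures agree on that p-system, hence on the generated $\sigma$-algebra.

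The main obstacle is precisely this last $\sigma$-algebra bookkeeping: one must check that a set $H\subseteq\R^\ell\times\R^n$ satisfies $\pi^{-1}(H)\in\mathcal B_{\D_\Gamma}(\R^\ell\times\R^m\times\R^n)$ if and only if $H$ is a Borel cylinder parallel to the fibre of $e\,;f$. Beyond the monotone class step, the genuinely new ingredient is a small linear-algebra lemma: the fibre $\D_\Gamma = \fib L\cap\fib M$ of $\Gamma$, with $\fib L = \{AX = BY\}$ and $\fib M = \{CY = DZ\}$ as in Theorem~\ref{thm:interconnection_formula}, projects under $\pi$ exactly onto the fibre of the composite relation $e\,;f$ (which by the composition rule for total linear relations, Proposition~\ref{prop:trel_copar}, is built from $\ker B$, $\ker D$ and the linear parts of $e$ and $f$). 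Granting this, translation-invariance of $\pi^{-1}(H)$ under $\D_\Gamma$ is equivalent to translation-invariance of $H$ under $\pi[\D_\Gamma]$, which closes the gap.
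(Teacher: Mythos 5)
Your proposal is correct in substance and, once the categorical preamble is unwound, lands on essentially the same computation as the paper: both arguments take the pushout representative $(MA,\,M_*\epsilon+N_*\phi,\,ND)$ of the composite $e;f$, check that the generating cylinders $\{NDz-MAx\in T\}$ lie in $\mathcal{H}$ with the correct measure (using $MB=NC$ and the product form of $R$), and then establish the reverse inclusion $\mathcal{H}\subseteq\mathcal{H}'$ by a translation-invariance argument. The difference is one of packaging. The paper works at the level of the set $W\subseteq\R^i\times\R^j$, showing $W=W+\im\begin{pmatrix}B\\-C\end{pmatrix}$ via surjectivity of $B$ and $D$ and then factoring $W$ through $(M\;N)$; you work at the level of $H\subseteq\R^\ell\times\R^n$, characterizing $\mathcal{H}$ as the Borel cylinders parallel to $\pi[\D_\Gamma]$ and identifying $\pi[\D_\Gamma]$ with $\{(x,z):MAx=NDz\}$, the fibre of $e;f$. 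These are the same idea; your version is arguably cleaner in that it makes the role of Proposition~\ref{prop:kernel_rep} explicit, and the opening reduction via counitality ($g;(\del\otimes\id)=e;f$) is a nice conceptual frame, though it carries no proof content by itself. Two corrections. First, the monotone class argument you invoke for the reverse inclusion is misplaced: that tool is what the paper needs for the \emph{preceding} theorem ($\Gamma=\Gamma'$), where $\mathcal{G}$ is only generated by the rectangles. Here $\mathcal{H}'$ is already the full pullback $\sigma$-algebra $\{(ND\circ Z-MA\circ X)^{-1}(T)\}$, so no Dynkin-type step is needed; the entire difficulty sits in the translation-invariance lemma you correctly isolate in your last paragraph. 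Second, that lemma needs, besides the pushout identity $\ker(M\;N)=\im\begin{pmatrix}B\\-C\end{pmatrix}$, the surjectivity of $B$ and $D$ (hence of $N$ and of $ND$) to guarantee that $\mathcal{G}$ and $\mathcal{H}'$ really are the \emph{full} cylinder $\sigma$-algebras of their respective fibres --- this is exactly where the paper's surjectivity argument enters, so you should state it as a hypothesis you are using rather than leave it implicit.
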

\begin{proof}
	Let $\Xi^\prime = (\R^l \times \R^n, \mathcal{H}^\prime, S^\prime)$ be the stochastic system corresponding to the morphism $h$. We want to prove that $\Xi = \Xi^\prime$.
	
	To begin with, let $f$ be represented by the decorated copartial function
	$$
	\left(
	\begin{tikzcd}
		& \R^j &                    \\
		\R^m \arrow[ru, "C"] &   & \R^n \arrow[lu, "D"']
	\end{tikzcd}
	, \:
	\phi: \R^0 \to \R^j
	\right).
	$$
	Then $h$ is represented by a decorated copartial function
	$$
	\left(
	\begin{tikzcd}
		&& \R^k \\
		& \R^i & {} & \R^j \\
		\R^\ell && \R^m && \R^n
		\arrow["A", from=3-1, to=2-2]
		\arrow["B"', from=3-3, to=2-2]
		\arrow["M", dashed, from=2-2, to=1-3]
		\arrow["N"', dashed, from=2-4, to=1-3]
		\arrow["D"', from=3-5, to=2-4]
		\arrow["C", from=3-3, to=2-4]
		\arrow["\lrcorner"{anchor=center, pos=0.125, rotate=-45}, draw=none, from=1-3, to=2-3]
	\end{tikzcd}
	,
	\:
	M\epsilon + N\phi: \R^0 \to \R^k
	\right),
	$$
	where
	$$
	\mathrm{ker} \begin{pmatrix} M & N \end{pmatrix} = \mathrm{im} \begin{pmatrix} B \\ -C \end{pmatrix}.
	$$
	Therefore, the stochastic system $\Xi^\prime$ is characterized by the equations
	$$
	\mathcal{H}^\prime = \left\{ \{(x, z) \mid NDz - MAx \in T\} \mid T \subset \R^k \: \mathrm{Borel} \right\}
	$$
	and
	$$
	S \left( \{(x, z) \mid NDz - MAx \in T\} \right) = (M\epsilon + N\phi)(T)
	$$
	for all $T \subset \R^k$ Borel.
	
	We will first demonstrate that $\mathcal{H}^\prime \subset \mathcal{H}$, and that $S$ and $S^\prime$ agree on $\mathcal{H}^\prime$. To that end, let $H^\prime \in \mathcal{H}^\prime$. If we can prove that
	$\{(X, Z) \in H^\prime\} \in \mathcal{G}$ and $S^\prime(H^\prime) = S(H^\prime)$, then the results will follow from the arbitrary selection of $H^\prime$. Per the characterization above, there exists a Borel subset $T \subset \R^k$ such that $H^\prime = \{(x, z) \mid NDz - MAx \in T\}$. Hence,
	$$
	\begin{aligned}
		\{(X, Z) \in H^\prime\} &= \{NDZ - MAX \in T\} \\
		&= \{M(BY - AX) + N(DZ - CY) \in T\} \\
		&= \left\{(BY - AZ, DZ - BY) \in \{(u, v) \mid Mu + Nv \in T\}\right\} \\
		&\in \mathcal{G},
	\end{aligned}
	$$
	where the second line follows from the fact that $MB = NC$, and the third line follows from the fact that the set $\{(u, v) \mid Mu + Nv \in T\}$ is Borel. Furthermore,
	$$
	\begin{aligned}
		S^\prime(H^\prime) &= (M\epsilon + N\phi)(T) \\
		&= R\left(\{M(BY - AX) + N(DZ - CY) \in T\} \right) \\
		&= R\left \{(X, Z) \in H^\prime\} \right) \\
		&= S(H^\prime),
	\end{aligned}
	$$
	where the second line follows from the fact that the random variables
	$$
	BY - AX: \left(\R^\ell \times \R^m \times \R^n, \mathcal{G}\right) \to \left(\R^i, \mathcal{B}(\R^i)\right)
	$$
	and
	$$
	DZ - CY: \left(\R^\ell \times \R^m \times \R^n, \mathcal{G}\right) \to \left(\R^j, \mathcal{B}(\R^j)\right)
	$$
	are independent with respect to the probability measure $R$.
	
	If we can demonstrate the converse inclusion $\mathcal{H} \subset \mathcal{H}^\prime$, then we will have proved the desired equality $\Xi = \Xi^\prime$. To that end, let $H \in \mathcal{H}$. Then, by definition of $\mathcal{H}$, $\{(X, Z) \in H\} \in \mathcal{G}$, so there exists a Borel subset $W \subset \R^i \times \R^j$ such that
	$$
	\{(X, Z) \in H\} = \{(BY - AX, DZ - CY) \in W\}.
	$$
	We will prove that there exists a Borel subset $T \subset \R^k$ such that
	$$
	W = \{(u, v) \mid Mu + Nv \in T\}.
	$$
	To see this let $(u, v) \in W$. Since the matrices $B$ and $D$ are surjective, so are functions $BY - AX$ and $DZ - CY$, and there exists a vector $(x, y, z) \in \{(X, Z) \in H\}$ such that $By - Ax = u$ and $Dz - Cy = v$. Furthermore, for all $y^\prime \in \R^m$, we also have
	$$
	(x, y + y^\prime, z) \in \{(X, Z) \in H\},
	$$
	so
	$$
	(u + By^\prime, v - Cy^\prime) \in W.
	$$
	Since the vectors $u$, $v$, and $y^\prime$ were arbitrarily chosen, the set $W$ must satisfy.
	$$
	W = W + \mathrm{im} \begin{pmatrix} B \\ -C \end{pmatrix}.
	$$
	The existence of $T$ then follows from the fact that
	$$
	\mathrm{ker} \begin{pmatrix} M & N \end{pmatrix} = \mathrm{im} \begin{pmatrix} B \\ -C \end{pmatrix}.
	$$
	Therefore,
	$$
	\begin{aligned}
		\{(X, Z) \in H\} &= \{(BY - AX, DZ - CY) \in W\} \\
		&= \left\{(BY - AX, DX - CY) \in \{(u, v) \mid Mu + Nv \in T\} \right\} \\
		&= \left\{(X, Z) \in \{(x, z) \mid NDz - MAx \in T\}\right\}.
	\end{aligned}
	$$
	This implies that $H = \{(x, z) \mid NDz - MAx \in T\} \in \mathcal{H}^\prime$. The inclusion $\mathcal{H} \subset \mathcal{H}^\prime$ follows from the arbitrary selection of $H$.
\end{proof}

\end{document}